\documentclass[a4paper,UKenglish,cleveref, autoref, thm-restate]{article}


\title{Probabilistic Soft Type Assignment} 


\usepackage{enumerate}

\usepackage{stmaryrd} 
\usepackage{mathrsfs} 
\usepackage{mathpartir} 
\usepackage{bussproofs} 
\usepackage{cmll} 

\usepackage{amsfonts}
\usepackage{amssymb}
\usepackage{mathtools}
\usepackage{amsthm}
\theoremstyle{plain}
\newtheorem{remark}{Remark}
\newtheorem{example}{Example}
\newtheorem{definition}{Definition}
\newtheorem{proposition}{Proposition}
\newtheorem{theorem}{Theorem}
\newtheorem{lemma}{Lemma}
\newcommand{\lipicsEnd}{\qed}

\usepackage{framed} 
\usepackage{graphicx} 

\usepackage{longtable} 

\usepackage{xspace} 

\usepackage{hyperref} 
\usepackage{xcolor} 
\definecolor{mygreen}{rgb}{0,0.6,0}  
\definecolor{mygray}{rgb}{0.5,0.5,0.5}
\definecolor{mymauve}{rgb}{0.58,0,1}
\hypersetup{
    colorlinks,
    linkcolor={red!50!black},
    citecolor={blue!50!black},
    urlcolor={blue!80!black}  }
\usepackage{cancel}  

\usepackage[normalem]{ulem} 


\usepackage{tikz}
\usetikzlibrary{arrows.meta}
\usetikzlibrary{decorations.markings}
\usetikzlibrary{matrix}
\usetikzlibrary{arrows}
\usetikzlibrary{decorations.pathmorphing}
\usetikzlibrary{shapes}
\usetikzlibrary{arrows}
\usetikzlibrary{positioning} 
\usepackage{tikz-cd} 

\newcommand{\enquote}[1]{``{#1}''}
\newcommand{\w}[1]{\mathrm{w}({#1})}
\newcommand{\we}[2]{\mathrm{w}({#1}, {#2})}
\newcommand{\s}[1]{\vert {#1}\vert}
\newcommand{\rk}[1]{\mathrm{rk}({#1})}

\newcommand{\lamb}{\Lambda^\oc}
\newcommand{\plamb}{\Lambda^\oc_\oplus}

\newcommand{\pvalb}{\mathcal{V}}

\newcommand{\proj}{\mathtt{proj}}
\newcommand{\coin}{\mathsf{coin}}
\newcommand{\STAP}{{\textsf{PSTA}}\xspace}
\newcommand{\STA}{\textsf{STA}\xspace}
\newcommand{\SLL}{\textsf{SLL}\xspace}
\newcommand{\LAM}{\textsf{LAM}\xspace}
\newcommand{\LL}{\textsf{LL}\xspace}
\newcommand{\LLC}{\textsf{LLC}\xspace}
\newcommand{\IMALLTwo}{\textsf{IMALL}$_2$\xspace}
\newcommand{\BPP}{{\textsf{BPP}}\xspace}
\newcommand{\PP}{{\textsf{PP}}\xspace}
\newcommand{\PTM}{{\normalfont\textsf{PTM}}\xspace}
\newcommand{\pPTM}{{\normalfont\textsf{pPTM}}\xspace}
\newcommand{\ICC}{{\normalfont\textsf{ICC}}\xspace}
\newcommand{\SLR}{{\normalfont\textsf{SLR}}\xspace}

\newcommand{\FPTIME}{{\normalfont\textsf{FPTIME}}\xspace}
\newcommand{\NPTIME}{{\normalfont\textsf{NPTIME}}\xspace}

\newcommand{\PSPACE}{{\normalfont\textsf{PSPACE}}\xspace}
\newcommand{\PTIME}{{\normalfont\textsf{PTIME}}\xspace}

\begin{document}
\author{Gianluca Curzi\footnote{Corresponding author}\\
{\small\texttt Dipartmento di Informatica, Universit\`a di Torino,  Italia}\\
{\small\texttt curzi@di.unito.it}
	    \and
	    Luca Roversi\\
{\small\texttt Dipartmento di Informatica, Universit\`a di Torino,  Italia}\\
{\small\texttt roversi@di.unito.it}
}

\maketitle

\begin{abstract}
We model randomized complexity classes
in the style of Implicit Computational Complexity. We introduce \STAP, a probabilistic version of \STA, the type-theoretical counterpart of Soft Linear Logic.  \STAP is a type assignment for an extension of Simpson's Linear Lambda Calculus and its surface reduction, where Linear additives express  random choice. Linear additives are weaker than the usual ones; they allow for duplications harmlessly affecting the computational cost of  normalization. \STAP is sound and complete w.r.t. probabilistic polynomial time functions and characterizes the probabilistic complexity classes \PP and \BPP, the latter slightly less implicitly than \PP.
\end{abstract}

\section{Introduction}
\label{section:introduction}
Probabilistic complexity is a central topic in randomized computation. Many interesting  decision problems have efficient and  highly trustworthy  randomized algorithms for which no good  deterministic counterpart is known. Examples of them are in \BPP, which collects all those problems that can be solved  in polynomial time  with error probability  bounded  by a constant strictly smaller than 
$ \frac{1}{2} $. The nice point  with this class is that the error probability can be exponentially 
lowered at will  while incurring  only a polynomial slowdown, so increasing the reliability of the answer without affecting the efficiency.

We here focus on the problem of characterizing probabilistic polynomial time complexity  
classes in the style of Implicit Computational Complexity (\ICC), which 
merges arguments from computational complexity, mathematical logic and 
formal systems,  yielding machine independent characterizations of complexity classes  that do not directly  rely on  explicit bounds on the computation length. 

Starting from Mitchell et al.~\cite{mitchell1998linguistic}, several type systems were proposed to capture implicitly the probabilistic polynomial time functions by means of higher-order languages. Examples are Zhang~\cite{zhang2009computational}, or Dal Lago and Toldin~\cite{dal2015higher},  all based on Hofmann's system \SLR (Safe Linear Recursion)~\cite{hofmann1997mixed}. In particular, the latter work also discusses the inherent difficulties of characterizing the class \BPP implicitly, due to the presence of external error bounds. Recently, Seiller has proposed a promising semantic approach to \ICC based on the notion of Interaction Graphs~\cite{seiller2020probabilistic}, showing how to capture the classes $\mathsf{PL}$ (Probabilistic Logarithmic space) and $\mathsf{PP}$ (Probabilistic Polynomial time), the latter being the class of those problems that 
a probabilistic polynomial time Turing Machine solves with error probability at most 
$ \frac{1}{2} $.

Our starting observation is that all type systems introduced in~\cite{mitchell1998linguistic, zhang2009computational, dal2012probabilistic}  to characterize probabilistic polytime functions and problems share the same principles:
\begin{enumerate}[(i)]
\item   \label{enum: observation 1}  they are probabilistic higher-order generalizations of the recursion-theoretic characterization of \textsf{FPTIME} based on Bellantoni and Cook's \emph{safe recursion}~\cite{bellantoni1992new}, which  limits the 
expressive power of the recursion scheme;
\item  \label{enum: observation 2} they extend Hofmann's 
\SLR~\cite{hofmann1997mixed}, which models deterministic computations,
by means of a primitive for randomness that has the typical oracular nature.
\end{enumerate}
The goal of this paper is then twofold.  First, we model randomized computational complexity classes in the style of \ICC by exploiting those  proof-theoretical techniques derived from Girard's Linear Logic (\LL) that lead to the characterizations of   \PTIME, \FPTIME, \NPTIME and  \PSPACE~\cite{girard1998light, maurel2003nondeterministic, lafont2004soft, gaboardi2008soft}. 
A clear advantage with respect to~(\ref{enum: observation 1}) is that we deal with fully-fledged higher-order languages (and polymorphism), while in all type systems developed in~\cite{mitchell1998linguistic, zhang2009computational, dal2012probabilistic} functional arguments have to be used linearly, i.e.~at most once.

Secondly, we introduce  randomness  according to the principle that any computational step should correspond to some step of normalization in a proof system or typed calculus. Probabilistic choice is then considered as the result of an interaction between a constructor and its corresponding destructor, and so it does not depend on the answer of a \enquote{black-box}, like random primitives in~(\ref{enum: observation 2}). Matsuoka explores this idea in the non-deterministic setting~\cite{matsuoka2004nondeterministic}, introducing a self-dual additive connective into restrictions of \LL to characterize non-deterministic complexity classes. Applying the same approach  in a  probabilistic setting is less obvious,  because random choice cannot be self-dual, as recently observed by Horne in~\cite{horne2019sub}. Due to this reason, Horne proposes a Deep-inference logical system that introduces \emph{sub-additives}~\cite{horne2019sub} which enjoy De Morgan dualities and   lie ``half-way'' in between \LL additive conjunction, that models an external choice,  and  the additive disjunction, that models   the internal one. 

We achieve the above  goals by means of \STAP, a new type system that merges ideas and techniques from
Lafont's Soft Linear Logic (\SLL)~\cite{lafont2004soft}, 
Gaboardi and Ronchi Della Rocca's Soft Type Assignment (\STA)~\cite{gaboardi2009light},
Simpson's Linear Lambda Calculus (\LLC)~\cite{simpson2005reduction}, and   Ronchi Della Rocca and Roversi's calculus $\Lambda \oc $~\cite{Ronchi-Roversi:1997-STUDIA-LOGICA}. 
Probabilistic features in \STAP are expressed  by means of the interaction between a pair  $\langle M, N \rangle$ (constructor) and a new projection operator $\mathtt{proj}$ (destructor), which randomly  selects a component of $\langle M, N \rangle$. Constructor and destructor are the subject of type-assignment rules that
operate on \emph{Linear additives}, which are weaker than  standard additives. 

Linear additives trigger a restricted form of duplication that causes no exponential blow up in normalization. This way, \STAP inherits the polynomial time computational complexity bounds from \STA. Moreover, Linear additives turn out to be  expressive enough to encode the transition function of a probabilistic Turing Machine running in polynomial time, which  is the key to establish \STAP completeness with respect to the probabilistic polytime functions. The resulting characterization is fully implicit and does not depend on the choice of the reduction strategy: this is where  Linear additives play a crucial role, since the  standard additive rules require a lazy strategy  to avoid exponentially costing normalizations~\cite{girard2017proof}. 

Last, by slightly modifying the encoding of the probabilistic Turing Machine in \STAP, we can show that this system is both sound and complete w.r.t.~the complexity classes \PP and \BPP; the latter is not entirely captured implicitly due to explicit error-bounds in the statement of the characterization theorem. Perhaps a better result for \BPP is at hand by exploiting the stochastic denotational models for  deductive systems based on \LL (e.g.~probabilistic coherence spaces~\cite{danos2011probabilistic} or weighted relational semantics~\cite{laird2013weighted}), once adapting them to \STAP. 
The idea is to find a semantic characterization of \BPP in the style of~\cite{laurent2006obsessional} able to suggest some insights about the nature of this class. 
\par
Having discussed motivations about \STAP, we illustrate the key ideas behind it. 
We start from the inference rules for the additive connective $\with$  of \LL,
seen as a type-assignment:
\begin{equation}\label{eqn: additive introduction rule}
\begin{matrix}
\inferrule*[Right=$\with$I]{\Gamma \vdash M_1:A_1 \\ \Gamma \vdash M_2:A_2}{\Gamma \vdash \langle M_1, M_2 \rangle : A_1\with A_2} \quad   \quad \quad  
\inferrule*[Right=$\with$E]{\Gamma \vdash M:A_1 \with A_2 \\ i\in \{1, 2\}}{\Gamma \vdash  \pi_i(M):A_i } &
\end{matrix}
\end{equation}
The   rule $\with$I affects the complexity of normalization. Indeed,  it  gives a type to the  terms $\mathtt{add}^x_n$ defined, for all $x$ and $n \in \mathbb{N}$, as follows:
\begin{align*} 
\mathtt{add}^x_{0} \triangleq x &&
\mathtt{add}^x_{n}\triangleq 
(\lambda y. \mathtt{add}^y_{n-1})\langle x,x\rangle && 
(n>0)
\enspace .
\end{align*}
The application of $\lambda x. \mathtt{add}^x_{n}$ to some $M$ reduces to $M[n]$,  defined as:
\begin{align*}
M[0]\triangleq M&&
M[n]\triangleq \langle M[n-1], M[n-1]\rangle && 
(n>0)
\enspace .
\end{align*}
The size of $M[n]$ and the number of its redexes (if any) are  exponential with respect to those of $M$. This example shows that  linear normalization fails in presence of additive rules.

For this reason, in~\cite{Curzi2020linearaditives} the first author develops \emph{Linear additives}, weaker than standard additives, which imply a \emph{strong linear normalization property}. Linear additives come from replacing the above rule $\with$I in~\eqref{eqn: additive introduction rule} by the following one:
\begin{equation}\label{eqn: linear additive introduction rule}
\begin{matrix}
	\inferrule*[Right=$\with$I]
		{\Gamma \vdash N:A\\ x_1:A \vdash M_1: A_1 
			\\ x_2: A \vdash M_2: A_2 
			\\ \vdash U:A }
		{\Gamma \vdash 
			\mathtt{copy}^U \, N \mathtt{\ as \ }x_1,x_2 
			\mathtt{\ in \ } \langle M_1, M_2\rangle : A_1 \with A_2} 
			\end{matrix}
\end{equation}
with the proviso that $U$ is a closed and normal inhabitant of $A$, 
and the types $A, A_1, A_2$ are  free from negative occurrences of the second-order quantifier; this last proviso applies to the above $\with$E
in~\eqref{eqn: additive introduction rule} too. Intuitively, the operator $\mathtt{copy}$ \enquote{freezes} the substitutions  of $N$ in the pair 
$\langle M_1, M_2 \rangle$ until $N$ has been fully evaluated to a closed  normal form $V$. The corresponding reduction rule is then the following one:
\begin{equation}\label{eqn: copy reduction rule}
\mathtt{copy}^{U} V \mathtt{\ as \ }x_1, x_2 \mathtt{\ in \ }\langle M_1, M_2\rangle  \rightarrow \langle M_1[V/x_1], M_2[V/x_2]\rangle 
\enspace .
\end{equation}
Since the above rule duplicates normal terms only, redexes cannot be copied during reduction and linear time normalization can be recovered.  Moreover, since the type $A$ in~\eqref{eqn: linear additive introduction rule} has only finitely many closed normal inhabitants, due to the absence of $\forall$ in negative position, by always taking $U$ in~\eqref{eqn: linear additive introduction rule} as the largest term among  such inhabitants, the size of the construct $\mathtt{copy}^U$ bounds the size of the new copy of $V$; so,  normalization strictly decreases the size of terms. 

To let the reduction rule in~\eqref{eqn: copy reduction rule} preserve types, in~\cite{Curzi2020linearaditives} we introduced a further inference rule, which is $\with$I in~\eqref{eqn: additive introduction rule} with $\Gamma = \emptyset$. This rule allows  to give a type to  pairs $\langle M, N \rangle$ of closed terms. Here, for the sake of simplicity, we shall consider this rule as a special case of~\eqref{eqn: linear additive introduction rule}.  

Linear additives in \STAP justify a projection $\mathtt{proj}$, new, as compared to the standard $\pi_i$ in~\eqref{eqn: additive introduction rule}, which  non-deterministically selects a component in a pair:
\begin{equation*}
M_1 \leftarrow   \mathtt{proj}  \langle M_1, M_2 \rangle \rightarrow M_2
\enspace .
\end{equation*}

Probabilistic computation can then be expressed in \STAP by turning the one step non-deterministic reduction  $\rightarrow$ into a multi-step reduction $\Rightarrow $ between terms and probability distributions.
As expected, probabilistic choices in a higher-order calculus may lead to the failure of confluence, as distinct  evaluation strategies may produce distinct distributions. 
 
\begin{example}\label{exmp: non-confluence} 
Let $M \triangleq (\lambda x. \langle x, x\rangle) \coin$, where $\coin \triangleq  \mathsf{proj}\langle \mathbf{T}, \mathbf{F} \rangle$,  $\mathbf{T}\triangleq \lambda xy.x$ and $\mathbf{F}\triangleq \lambda xy.y$. 
A call-by-name reduction strategy first passes $\coin$ to $\lambda x. \langle x, x\rangle$. Then, it evaluates the two copies of $\coin$ produced, obtaining 
the terms 
$\langle \mathbf{T}, \mathbf{T}\rangle$, 
$\langle \mathbf{T}, \mathbf{F}\rangle$, 
$\langle \mathbf{F}, \mathbf{T}\rangle$ and $\langle \mathbf{F}, \mathbf{F}\rangle$,
as a result, each one with probability $\frac{1}{4}$. 
By contrast,  call-by-value evaluates $M$ by first reducing $\coin$, 
then passing the result to $\lambda x. \langle x, x\rangle$. 
The results are 
$\langle \mathbf{T}, \mathbf{T}\rangle$ and 
$\langle \mathbf{F}, \mathbf{F}\rangle$, both with probability $\frac{1}{2}$. 
Thus, the two parameter-passing policies give different distributions.
\lipicsEnd
\end{example}

The solution we adopt in \STAP, also studied in~\cite{diaz2018confluence, faggian2019lambda},  is to move from standard $\lambda$-calculus to Simpson's Linear Lambda Calulus (\LLC) and its surface reduction. This is an untyped term calculus $\lamb$ closely related to \LL~\cite{girard1987linear}.
It has two $\lambda$-abstractions. One is the linear abstraction $\lambda x. M$;
the other is the non-linear $\lambda \oc x. M$. The latter can duplicate arguments  with form $\oc N$, whose evaluation is suspended, according to the following rule:
\begin{equation}\label{eqn: simpson surface reduction}
 (\lambda \oc x. M)\oc N\rightarrow M[N/x]
 \enspace .
\end{equation}

Then, uniqueness of distributions in our probabilistic extension of $\lamb$ can 
be recovered. For example, $M$ in \textbf{Example~\ref{exmp: non-confluence}}
turns into $M^\oc \triangleq (\lambda \oc  x. \langle x, x\rangle) \oc \coin$. 
Since reduction is forbidden in the scope of a $\oc$ operator, $\oc \coin$ is passed 
to the function before being evaluated.
 
Unfortunately, typed variants of (extensions of) $\lamb$ may lead to the failure of Subject reduction, as the following example shows on \STA~\cite{gaboardi2009light}.

\begin{example} \label{exmp: failure subject reduction simpson}
Pretending that {\normalfont\STA} is a type-assignment for  $\lamb$, we
would have the derivation:
\begin{prooftree}
	\AxiomC{}
	\RightLabel{$ax$}
	\UnaryInfC{$x:A \vdash x:A$}
	\doubleLine
	\RightLabel{$sp$}
	\UnaryInfC{$x:\oc \oc A \vdash \oc \oc x:\oc \oc A$}
	\AxiomC{}
	\RightLabel{$ax$}
	\UnaryInfC{$y_1:A \vdash y_1:A$}
     \AxiomC{}
     \RightLabel{$ax$}
	\UnaryInfC{$y_2:A \vdash y_2:A$}
	\RightLabel{$\otimes$R}
	\BinaryInfC{$y_1: A, y_2:A \vdash \langle y_1,y_2 \rangle: A \otimes A$}
	\doubleLine
	\RightLabel{$m$}
	\UnaryInfC{$z:\oc \oc A\vdash \langle z,z \rangle: A \otimes A$}
	\RightLabel{$\multimap$I}
	\UnaryInfC{$\vdash \lambda \oc z. \langle z, z \rangle: \oc \oc A \multimap A \otimes A$}
	\RightLabel{$\multimap$E}
	\BinaryInfC{$x: \oc \oc A \vdash  (\lambda \oc z. \langle z, z \rangle) \oc \oc x:   A \otimes A$}
\end{prooftree}
where double line means multiple applications of a rule.  
Let us apply the surface reduction step in~\eqref{eqn: simpson surface reduction} to   $(\lambda \oc z. \langle z, z \rangle) \oc \oc x$.
We obtain  a judgment  $x: \oc \oc  A \vdash \langle \oc x, \oc x \rangle: A \otimes A$  without derivations in {\normalfont\STA}.  Subject reduction fails as both 
occurrences of $\oc$ in  $(\lambda \oc z. \langle z, z \rangle) \oc \oc x$ 
should be erased during  surface reduction, while only one is.
\lipicsEnd
\end{example} 
 
The last steps toward \STAP, in order to avoid the above issue, 
both introduce explicit dereliction $\mathtt{d}$, and generalize the surface 
reduction rule in~\eqref{eqn: simpson surface reduction}. 
For example, in \STAP, the conclusion of the derivation in 
\textbf{Example}~\ref{exmp: failure subject reduction simpson}  
turns  into $x: \oc \oc A \vdash(\lambda \oc z. 
\langle \mathtt{d} (\mathtt{d}(z)), \mathtt{d}(\mathtt{d}(z)) \rangle) 
\oc( \oc( \mathtt{d}(\mathtt{d}(x)))): A \otimes A$.  
Intuitively, according to the  \enquote{general} surface reduction rule,
the normalization of the term that this judgment gives a type to,
first performs a beta-reduction, yielding 
$\langle \mathtt{d} (\mathtt{d}( \oc( \oc( \mathtt{d}(\mathtt{d}(x)))))), \mathtt{d}(\mathtt{d}(\oc( \oc( \mathtt{d}(\mathtt{d}(x)))))) \rangle$;
then it rewrites each  $\mathtt{d}(\oc (M))$ into $M$. 
The resulting term is $\langle \mathtt{d}(\mathtt{d}(x)), \mathtt{d}(\mathtt{d}(x))\rangle$, with type in \STAP. 

Many proofs are postponed in the Appendix. 
\section{The type assignment system $\STAP$}
\label{The system STAoplus}

\begin{figure}[t]
	\centering
	\begin{mathpar}
		\inferrule*[Right= $ax$]
		{\\}
		{x: A \vdash x:A}
		\\
		\inferrule*[Right=$\multimap$I$l$]
		{\Gamma, x: A \vdash M: B}
		{\Gamma \vdash \lambda x.M : A \multimap  B}
		\and
		\inferrule*[Right=$\multimap$I$e$]
		{\Gamma, x: \oc \sigma \vdash M: B}
		{\Gamma \vdash \lambda \oc x.M : \oc \sigma \multimap  B}
		\and
		\inferrule*[Right= $\multimap$E]
		{\Gamma \vdash M : \sigma \multimap A
			\\ 
			\Gamma' \vdash N: \sigma}{\Gamma, \Gamma' \vdash MN: A}
		\\		
		\inferrule*[Right=$\with$I]
		{\Delta \vdash N:C\\ x_1:C \vdash M_1: C_1 
			\\ x_2: C \vdash M_2: C_2 
			\\ \vdash V:C }
		{\Delta \vdash 
			\mathtt{copy}^V \, N \mathtt{\ as \ }x_1,x_2 
			\mathtt{\ in \ } \langle M_1, M_2\rangle : C_1 \with C_2} 
			\and 
			\inferrule*[Right=$\with$E]
		{\Delta \vdash M: C\with C}
		{\Delta  \vdash \mathtt{proj}(M): C }
		\\	
		\inferrule*[Right=$sp$ ]{x_1: \sigma_1, \ldots, x_n: \sigma_n 
			\vdash M: \tau}{y_1: \oc \sigma_1, \ldots, y_n: \oc \sigma_n 
			\vdash \oc  M[\mathtt{d}(y_1)/x_1, \ldots, 
			\mathtt{d}(y_n)/x_n]:\oc \tau}  
		\and
		\inferrule*[Right=$m$]
		{\Gamma, x_1: \sigma, \ldots, x_n: \sigma \vdash M: \tau
			\ \ \  (n \geq 0)}
		{\Gamma, x: \oc\sigma \vdash M[\mathtt{d}(x)/x_1, 
			\ldots, \mathtt{d}(x)/x_n]: \ \tau}
		\\
		\inferrule*[Right=$\forall$I]
		{\Gamma \vdash M: A\langle \gamma /\alpha \rangle 
			\ \ \  \gamma \not \in \mathrm{FV}(\Gamma)}
		{\Gamma \vdash M: \forall \alpha. A}   
		\and
		\inferrule*[Right=$\forall$E]
		{\Gamma \vdash M: \forall \alpha. A}
		{\Gamma \vdash M: A\langle B/\alpha \rangle}
	\end{mathpar}
	\caption{The system $\STAP$: $C$, $C_1$, $C_2$ 
		are $\forall \oc$-lazy types, $\Delta$  is a $\forall \oc$-lazy context,
	    and $V \in \pvalb$.}
	\label{fig: the system STAplus}
\end{figure}

The type assignment system $\STAP$ is in \textbf{Figure}~\ref{fig: the system STAplus}.  It  extends $\mathsf{STA}$~\cite{gaboardi2009light} with a 
non-deterministic version of \emph{Linear additives} from~\cite{Curzi2020linearaditives} (rules  $\with$I and $\with$E).   \STAP derives judgments $ \Gamma \vdash M: \sigma $, where $\sigma$ is generated by a grammar of \emph{essential types}, like~\cite{gaboardi2009light}, $ \Gamma $ is the context that gives types to the free variables of $ M $, and $M$  belongs to the term calculus $\plamb$, which is 
Simpson's Linear Lambda Calculus (\LLC)~\cite{simpson2005reduction} 
endowed with explicit dereliction $\mathtt{d}$ (as in~\cite{Ronchi-Roversi:1997-STUDIA-LOGICA}), a  $\mathtt{copy}$ operator (as in~\cite{Curzi2020linearaditives}),  pairs  $\langle M, N \rangle$ and  a non-deterministic projection operator $\proj$.

\subsection{The types of $\STAP$}
\label{subsection:The types of STAoplus}
The following grammar generates the language of types in \STAP:
\begin{align}
\label{eqn: with grammar A}
\sigma, \tau&::= A\ \vert\ \oc\sigma
\\
\label{eqn: with grammar mathcalAe} 
A, B&::= \alpha\ \vert\ \sigma\multimap A\ \vert\ A\with\! A\ \vert\ 
\forall\alpha.A
\enspace .
\end{align}
The start symbol $ \sigma $ yields \emph{exponential types},
and $ A $ the \emph{linear types}. A type $\oc \tau$ is 
\emph{strictly exponential}. 
The set of free variables of $\sigma$ is~$FV(\sigma)$.
The meta-level substitution for types is $\sigma\langle \tau / \alpha \rangle$. 
A type $\sigma $ is \emph{closed} if $FV(A)=\emptyset$.
The \emph{$ \forall \oc$-lazy} types, crucial to prove the relevant properties 
of \STAP, are types free of negative occurrences of $\forall$ and of 
\emph{any} occurrence of $\oc$.
\begin{example} 
Typical examples of $\forall \oc$-lazy types are  the unit  $\mathbf{1}\triangleq \forall \alpha. \alpha\multimap \alpha$ and the boolean data type $\mathbf{B}\triangleq \forall \alpha. \alpha \multimap \alpha \multimap \alpha \otimes \alpha$, where  tensor $ \sigma \otimes \tau$ is introduced by means of the second-order definition $\forall \alpha. (\sigma \multimap \tau \multimap \alpha)\multimap \alpha$. Moreover, if $A$ and $B$ are $\forall \oc$-lazy types  then both $A \otimes B$ and $A \with B$ are. However, neither the type 
$\mathbf{N}\triangleq \forall \alpha. \oc (\alpha \multimap \alpha)\multimap (\alpha \multimap \alpha)$
for natural numbers, nor the type $\mathbf{B}\multimap \mathbf{B}$ are $\forall \oc$-lazy types, the former because of the occurrence of $\oc$, the latter because it has negative occurrences of $\forall$.
\lipicsEnd
\end{example}

The types in \STAP merge the structure of types from
both Soft Type Assignment (\STA)~\cite{gaboardi2009light} and
Linearly Additive Multiplicative Type Assignment (\LAM)~\cite{Curzi2020linearaditives}.
We recall that \STA is a type-assignment that 
characterizes polynomial time functions (\textsf{FPTIME}) and problems (\textsf{PTIME}) 
under the formulas-as-types paradigm. 
The types of \STA, called \emph{essential}, restrict the formulas of Soft 
Linear Logic (\SLL)~\cite{lafont2004soft} in order to assure Subject 
reduction while preserving the polynomial time bound on term normalization.
The key point about essential types is to forbid  topmost occurrences of the ``of course'' modality ``$ ! $'' in the right-hand side of an implication. I.e., $A \multimap \oc B$ is neither a type of \STA nor of \STAP. Let us also recall that \LAM~\cite{Curzi2020linearaditives} is obtained from  \textit{Intuitionistic Second-Order 
Multiplicative Additive Linear Logic} (\IMALLTwo) by replacing the standard additives with  weaker versions, called   \emph{Linear additives}, which avoid exponentially costing normalizations, typical 
of known additive rules.

\subsection{Terms and one-step surface reduction of \STAP}
\label{subsection:The terms of STAP}
The following grammar generates the language of \emph{raw terms} in \STAP:
\begin{align}
M, N&::= \mathbb{L}\ \vert\ \mathbb{A} \ \vert\ \mathtt{d}(M) \label{aligned: raw terms grammar}
\\
\mathbb{L}&::= x\ \vert\ \lambda x. M \ \vert\ \lambda \oc x. M\ \vert\ MM\ \vert\  
\oc M\  \label{aligned: linear terms grammar}
\\  
\mathbb{A}&::= \langle M, M \rangle\ \vert\ \proj(M)\ \vert\
\mathtt{copy}^V M \mathtt{\ as\ } x,y\mathtt{\ in\ } \langle M, M \rangle \label{aligned: additive terms grammar}
\\
V, U&::= x\ \vert\ \lambda x. V \  \vert\ VV \ \vert\  
\langle V, V \rangle  \label{aligned: values  grammar}
\enspace ,
\end{align}
\noindent
where $ M $ is the start symbol and
$ \mathbb{L} $ highlights the structure of terms that we take from 
\LLC. We observe that $ \mathbb{L} $ generates both a linear abstraction $\lambda x. M$ 
and a non-linear one $\lambda \oc x. M$,  the latter duplicating arguments with shape $\oc N$. 
Moreover, $ \mathbb{A} $ generates \emph{additive terms} and $ V $ gives the language in 
which we shall identify the so-called \emph{values}, as we shall see. 

The set of free variables of $M$ is $FV(M)$, where both 
$ \lambda x. M$ and $\lambda \oc x. M $ bind $x$ in $M$, and 
$\mathtt{copy}^V M \mathtt{\ as\ }x,y\mathtt{\ in\ }\langle P,Q \rangle$
binds both $x$ in $P$ and  $y$ in $ Q $. If $FV(M)=\emptyset$, then $ M $ is \emph{closed}.
The meta-level capture-avoiding substitution of $ N $ for the free variables of $ M $
  is $M[N/x]$.
The inductive definition of the \textit{size} $\s{M}$ of $M$ is standard, 
with \texttt{copy} requiring:
\begin{align}
\s{\mathtt{copy}^V M \mathtt{\ as\ }x,y\mathtt{\ in\ }
	\langle P, Q \rangle } \triangleq \s{V}+ \s{M}+ \s{P}+ \s{Q}+2 
\enspace .
\end{align}
\noindent
A variable $x$ in \emph{$M$} is \emph{surface-linear}
(\emph{$s$-linear}) if $x$ occurs free exactly once in $M$,
but not in the sub-terms $ \oc N $ and $\mathtt{d}(N)$ of $M$. A term  $M$ is \emph{surface-linear} (\emph{$s$-linear}) if both:
\begin{itemize}
	\item $x$ is $s$-linear in $N$, for every $\lambda x. N$ in $M$, and
	\item  $x$ is $s$-linear in $P$ and $y$ is $ s $-linear in $Q$,
	for every $\mathtt{copy}^{V}\, N\, \mathtt{as}\, x, y\, \mathtt{in}\, \langle P, Q\rangle$ in $M$.
\end{itemize}
\noindent
We let  $\oc ^n M$ and $\mathtt{d}^n(M)$ denote $\oc \overset{n}{\ldots}\oc M$ and~$\mathtt{d}(\overset{n}{\ldots}\mathtt{d}(M)\ldots)$, respectively.

\begin{definition}
	$\plamb$ is the language of all $s$-linear raw terms generated by the grammar~\eqref{aligned: raw terms grammar}.
\end{definition}

Since  $ \plamb $ is endowed with a  \emph{dereliction} operator $ \mathtt{d}$, that is missing in \LLC,  we need to generalize the reduction step $(\lambda \oc x. M)\oc N\rightarrow M[N/x]$ of \LLC in order to take $\mathtt{d}$ into account.

\begin{definition}[Surface-preserving substitution] 
	\label{definition:Surface-preserving substitution}
	Let $M, N \in \plamb$. 
	The \emph{surface-preserving substitution} $M \{ N/x \}$ of $N$ for the 
	free occurrences of $x$ in $M$ is:
	\begin{equation*}
	M\lbrace N/x \rbrace \triangleq \begin{cases} 
	P\lbrace Q/y \rbrace & \text{if }N=\oc Q\text{ and }  M=P[\mathtt{d}(x)/y]  \text{, with } x \not \in \mathrm{FV}(P),
	\\
	M[N/x]&\text{otherwise}
	\enspace .
	\end{cases}
	\end{equation*}
    Moreover, $M\lbrace N/x_1, \ldots, N/x_n\rbrace$  denotes $((M\lbrace N/x_1\rbrace) \ldots )\lbrace N/x_n\rbrace$.
\end{definition}

\begin{example} 
	Let us take $z\, \mathtt{d}^3(x) \,\mathtt{d}^2(x)$ in $\plamb$. The surface-preserving substitution of $\oc^2 y$ for the free occurrences 
	of $x$ in $z\, \mathtt{d}^3(x) \,\mathtt{d}^2(x)$ is:
	\begin{align*}
	& (z\, \mathtt{d}^3(x) \,\mathtt{d}^2(x))\lbrace (\oc^2 y)/x \rbrace  
	\\	
	&= (z\, \mathtt{d}^2(x') \,\mathtt{d}(x'))\lbrace (\oc y)/x' \rbrace 
	&&\text{because } z\, \mathtt{d}^3(x) \,\mathtt{d}^2(x)\triangleq  
	(z\, \mathtt{d}^2(x') \,\mathtt{d}(x'))[\mathtt{d}(x)/x']  
	\\
	&=(z\, \mathtt{d}(x'') \,x'')\lbrace y/x''\rbrace 
	&&\text{because }z\, \mathtt{d}^2(x') \,\mathtt{d}(x')\triangleq   
	(z\, \mathtt{d}(x'') \,x'')[\mathtt{d}(x')/x'']\\	&= z\, \mathtt{d}(y) \,y
	\enspace .
	&&
	\qquad\qquad\qquad\qquad\qquad
	\qquad\qquad\qquad\qquad\qquad
	\qquad\qquad\qquad
	\lipicsEnd
	\end{align*}
\end{example}

\begin{definition}
	The set $\pvalb$ of \emph{values} in $\plamb$ contains any closed term generated by the grammar~\eqref{aligned: values  grammar} that is normal with respect to the  reduction step $(\lambda x. U)V \rightarrow U[V/x]$.
\end{definition}

\begin{definition}[One-step surface reduction for $\plamb$] 
\label{defn: surface reduction for plamb} {\ }
 A \emph{surface context} is a term in $\plamb$ with a unique hole $[\cdot]$ in 
 it. The following grammar generates surface contexts:
 \begin{equation*}
 \begin{aligned}
 \mathcal{C} ::=\
 & [\cdot ]\ \vert\ 
 \lambda x.\mathcal{C}\ \vert\ 
 \lambda \oc x. \mathcal{C}\ \vert\  
 \mathcal{C}M\ \vert\ M \mathcal{C}\ \vert\  
 \mathtt{d}(\mathcal{C})\ \vert\ 
 \langle\mathcal{C}, M\rangle\ \vert\ 
 \langle M,\mathcal{C}\rangle\ \vert\ 
 \proj(\mathcal{C})\ \vert \   
 \\
 & \mathtt{copy}^V \mathcal{C} 
 \mathtt{\ as\ }x, y \mathtt{\ in\ }\langle M, N\rangle\ \vert\
 \mathtt{copy}^V M \mathtt{\ as\ }x,y\mathtt{\ in\ }
 \langle \mathcal{C}, N\rangle\ \vert\
 \\
 & 
 \mathtt{copy}^V M \mathtt{\ as\ }x,y\mathtt{\ in\ }
 \langle N, \mathcal{C}\rangle
 \enspace ,
 \end{aligned}
 \end{equation*}
 where  $\mathcal{C}[M]$ is the term obtained by filling the hole in $\mathcal{C}$ with  $ M $,
 possibly capturing free variables.

 The \emph{one-step surface reduction} 
 ${\rightarrow} \subseteq \plamb \times (\plamb)^2$ is:
 \begin{align}
 \nonumber
 (\lambda x. M)N &\rightarrow M[N/x]
 \\
 \label{eqn: forgetful reduction}
 (\lambda \oc x. M)\oc N &\rightarrow 
 M\lbrace \oc N /x\rbrace 
 \\
 \nonumber
 \proj\langle M, N \rangle &\rightarrow M, N
 \\
 \label{eqn: surface one-step duplication}
 \mathtt{copy}^{U} \, V \mathtt{\ as\ }x, y
 \mathtt{\ in\ }\langle M, N \rangle  
 &\rightarrow 
 \langle M[V/x], N[V/y]\rangle 
 && U, V \in \pvalb
 \enspace .
 \end{align}
 where $M \rightarrow N$, in fact, means $M \rightarrow N, N$, for any $ M, N $. 
 We can apply $ \rightarrow $ in surface contexts only. 
 A term of $\plamb$ is in (or is a) \textit{surface normal form}  if no 
 reduction applies to it.  Surface normal forms are ranged  over by $S$, and 
 the set of all surface normal forms  is  $\mathrm{SNF}$.
\end{definition}
\noindent
\subsection{Judgments, inference rules and derivations of \STAP}
\label{subsection:The derivation rules of STAP}
Once given the
types in Section~\ref{subsection:The types of STAoplus},
terms, values and reduction steps in Section~\ref{subsection:The terms of STAP},
comments and notations relative to the rules of $\STAP$ 
in \textbf{Figure}~\ref{fig: the system STAplus} become simpler.

Let us recall that a \emph{context} is a finite multi-set of 
\textit{assumptions} $x:A$.  If $\Gamma= x_1: A_1, \ldots, x_n:A_n $, 
then  $FV(\Gamma)\triangleq \bigcup_{i=1}^n FV(A_i)$ and  
$\vert \Gamma  \vert\triangleq \sum^n_{i=1} \vert A_i \vert$.
A context $ \Gamma $ is \emph{strictly exponential} if it contains
strictly exponential types only. A context $ \Gamma $ is
\emph{$ \forall \oc $-lazy} if it contains  $\forall \oc$-lazy 
types only. If  $\Gamma$ is $x_1:A_1, \ldots, x_n: A_n$, 
then $\oc \Gamma$ is $x_1: \oc A_1, \ldots, x_n: \oc A_n$.
By $\mathcal{D}\triangleleft \Gamma \vdash M: A$ 
we denote a derivation $\mathcal{D}$ with conclusion $\Gamma \vdash M: A$.
The \textit{size} $\vert \mathcal{D}\vert$ of a derivation $\mathcal{D}$ 
counts the number of rule instances it contains.

We conclude by commenting the inference rules of \STAP:
\begin{itemize}
\item Two introduction rules of the linear implication $\multimap$ exist.
The subject in the conclusion of $\multimap$I$l$ is $\lambda x. M$ 
and the antecedent of $\multimap$ is a linear type.
The subject in the conclusion of $\multimap$I$e$ is $\lambda \oc x. M$ 
and the antecedent of $\multimap$ is strictly exponential.
\item The \emph{linear additive rule} $\with$I replaces the standard one in~\eqref{eqn: additive introduction rule}. 
The types $C$, $C_1$, $C_2$ in
$\with$I and $\with$E must be $\forall \oc$-lazy.
Likewise, $\Delta$ is $\forall \oc$-lazy in $\with$E and $\with$I.  Finally, the term $V$ in the last premise of $\with$I is a value. 
\item  We shall consider the instance of~\eqref{eqn: additive introduction rule} with $\Gamma = \emptyset$ as a special case of $\with$I in \STAP.  This allows us to give a type to some pairs $\langle M, N \rangle$ of $\plamb$ and to let the reduction rule for $\mathtt{copy}$ in~\eqref{eqn: surface one-step duplication} preserve types in \STAP .
\item The rule $ \with $E introduces non-determinism in 
$\STAP$ by means of a projection that non-deterministically selects one of the two components in a pair.
\item Finally, $ sp $ and $ m $  come from \STA. They are the type-theoretical formulations of the logical rules  \emph{soft promotion}  and  \emph{multiplexor} of \SLL to introduce controlled duplications.
\end{itemize}

The key property of $\forall \oc$-lazy types, analogous to the one in~\cite{Curzi2020linearaditives},  
is that their size gives a bound on the size of any value that inhabits them:
\begin{proposition}
	\label{prop: values for STAoplus bounded by size type} 
	Let $V\in \pvalb$. 
	If $A$ is a $\forall \oc$-lazy type and 
	$\mathcal{D}\triangleleft  \vdash V:A$, 
	then $\vert V \vert \leq \vert A \vert$.
\end{proposition}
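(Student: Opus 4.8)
The plan is to prove a strengthening of the statement that also covers \emph{open} values, so that the induction can descend into the bodies of abstractions. Let us call a type \emph{co-lazy} if it contains no occurrence of $\oc$ and no \emph{positive} occurrence of $\forall$ (the polarity dual of a $\forall\oc$-lazy type). I would establish: whenever $\mathcal{D}\triangleleft\Gamma\vdash V:A$ with $V\in\pvalb$, $A$ a $\forall\oc$-lazy type, and every type in $\Gamma$ co-lazy, then $\s{V}\le\s{A}+\s{\Gamma}$. The proposition is the special case $\Gamma=\emptyset$, where $\s{\Gamma}=0$, giving $\s{V}\le\s{A}$.

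A preliminary observation fixes the shape of the argument. Values contain no $\oc$, no $\mathtt{d}$, no $\proj$, no $\mathtt{copy}$ and no $\lambda\oc$, so their derivations use only $ax$, $\multimap$I$l$, $\multimap$E, the closed instance of $\with$I, and $\forall$I/$\forall$E; moreover these rules keep every context co-lazy once the conclusion context is. Canonically, a value is therefore either $\lambda x.V'$, a pair $\langle V_1,V_2\rangle$, or a neutral term $x\,V_1\cdots V_k$ (variable head, $k\ge 0$), since surface-normality forces every application head to be a variable. I would then induct on this structure, reading off an immediate subderivation at each node by a generation lemma. Since stripping the outer quantifiers of a type does not increase its size and preserves $\forall\oc$-laziness, the $\forall$ rules are folded into generation and the outer quantifiers are stripped freely, so only the three term-forming cases remain. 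For $\lambda x.V'$, generation yields $A=\forall\vec\alpha.(A_0\multimap B_0)$ with premise $\Gamma,x:A_0\vdash V':B_0$; because $A$ has no negative $\forall$, the domain $A_0$ has no positive $\forall$, i.e.\ it is co-lazy, while $B_0$ is $\forall\oc$-lazy. The induction hypothesis gives $\s{V'}\le\s{B_0}+\s{\Gamma}+\s{A_0}$, whence $\s{\lambda x.V'}=\s{V'}+1\le\s{A_0\multimap B_0}+\s{\Gamma}\le\s{A}+\s{\Gamma}$. For $\langle V_1,V_2\rangle$, generation forces the closed $\with$I, so $A=\forall\vec\alpha.(C_1\with C_2)$ and $\vdash V_i:C_i$ in the empty context with each $C_i$ a $\forall\oc$-lazy type; the induction hypothesis gives $\s{V_i}\le\s{C_i}$ and hence $\s{\langle V_1,V_2\rangle}=\s{V_1}+\s{V_2}+1\le\s{C_1\with C_2}\le\s{A}$.

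The neutral case $x\,V_1\cdots V_k$ is the real obstacle, and the one where the no-negative-$\forall$ hypothesis does the work. A naive induction on $\multimap$E fails, because the type $\sigma$ of an argument is a priori unconstrained by the $\forall\oc$-lazy conclusion and need not itself be lazy, so the induction hypothesis would not apply. The resolution is that the head $x$ carries a context type $D$ which is co-lazy, hence has no positive $\forall$, so \emph{no} $\forall$E can ever fire along its applicative spine. Consequently $D=\sigma_1\multimap\cdots\multimap\sigma_k\multimap D^{(k)}$ literally, each argument type $\sigma_i$ is a negative subformula of $D$ and is therefore $\forall\oc$-lazy, and each premise $\Gamma_i\vdash V_i:\sigma_i$ has $\Gamma_i$ co-lazy. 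Now the induction hypothesis applies to every $V_i$, and using linearity of the context ($\s{\Gamma}=\s{D}+\sum_i\s{\Gamma_i}$) together with $\s{D}=\sum_i\s{\sigma_i}+k+\s{D^{(k)}}$, a short computation gives
\[
\s{x\,V_1\cdots V_k}=1+\sum_i(1+\s{V_i})\le 1+\s{\Gamma}-\s{D^{(k)}}\le\s{\Gamma}\le\s{A}+\s{\Gamma},
\]
since $\s{D^{(k)}}\ge 1$. This subsumes $ax$ as the instance $k=0$.

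The only genuinely delicate points to discharge are the polarity bookkeeping (that the domain $A_0$ of an abstraction's arrow type is co-lazy and that the argument types $\sigma_i$ along a neutral spine are $\forall\oc$-lazy, both of which follow from tracking how $\multimap$ and the subformula relation act on polarities) and the canonical-forms content hidden in the generation lemma: that a $\lambda$-value is forced to have an arrow type and a pair-value a $\with$ type, up to outer quantifiers. The latter holds because a value cannot be retyped through $\forall\alpha.\alpha$ so as to change its top connective, which I would isolate as a small auxiliary lemma by induction on derivations. With these in place, the induction closes and the proposition follows from the $\Gamma=\emptyset$ instance.
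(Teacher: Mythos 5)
Your proof is correct and takes essentially the same route as the paper: the paper's own (two-line) sketch proves exactly your strengthening, namely that $x_1{:}A_1,\ldots,x_n{:}A_n\vdash M:A$ with $A_1\multimap\cdots\multimap A_n\multimap A$ being $\forall\oc$-lazy implies $\vert M\vert\le\sum_{i=1}^n\vert A_i\vert+\vert A\vert$, which coincides with your ``co-lazy context'' formulation since the $A_i$ occur negatively in that implication. You simply spell out the case analysis (polarity bookkeeping, the neutral applicative spine, generation up to outer quantifiers) that the paper leaves implicit.
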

\begin{proof}
The statement follows by  proving  by induction on the last rule of 
	$\mathcal{D}$ the following stronger statement: ``Let $V$ be generated by~\eqref{aligned: values grammar} and normal. 
	If $x_1: A_1, \ldots, x_n: A_n \vdash M:A$,
	and $A_1\multimap \ldots \multimap A_n \multimap A$ is $\forall \oc$-lazy,
	then  $\vert M \vert \leq \sum_{i=1}^n \vert A_i \vert + \vert A \vert$''. By assumption,  the last rule of $\mathcal{D}$ cannot be $m$, $sp$, $\with$E 	or $\with$I.
\end{proof}
\noindent

\begin{remark} \label{rem: w.l.o.g. largest values}
{\normalfont\textbf{Proposition~\ref{prop: values for STAoplus bounded by size type}}} implies that, for any $\forall \oc$-lazy type $A$, a value $V$ of type $A$ exists such that $\s{U}\leq \s{V}$, for all values $U$ in the type $A$. W.l.o.g.,~we shall assume that the value $V$ in the last premise of $\with$I in~{\normalfont{Figure~\ref{fig: the system STAplus}}} has largest size among all the values of the same type. Therefore, as long as we consider typable terms in {\normalfont\STAP}, the reduction rule in~\eqref{eqn: surface one-step duplication} is such that 
$| \mathtt{copy}^{V} \, U \mathtt{\ as\ }x, y
$ $ \mathtt{\ in\ }\langle M, N \rangle| >
|\langle M[U/x], N[U/y]\rangle |$, because $V$ is a bound on the size of the new copy of the value $U$ that the reduction generates. So, Linear additives do not 
problematically affect the complexity of normalization, even though they allow duplications.
\lipicsEnd
\end{remark}
\section{A probabilistic multi-step surface reduction for \STAP}
\label{section:The probabilistic operational semantics of STAP}

We here turn the non-deterministic reduction in 
{{\textbf{Definition~\ref{defn: surface reduction for plamb}}}} into a 
probabilistic multi-step reduction relation $\Rightarrow$ 
between terms of $ \plamb $ and distributions of Surface normal forms. 

We recall that a \emph{probability distribution over a countable set $X$} 
is a function 
$f: X\to [0,1]$ such that $\sum_{x \in X}f(x)= 1$.  
The \emph{support} $\mathrm{supp}(\mathscr{D})$ of a distribution 
$\mathscr{D}$ is the subset of  all the elements in $X$ such that 
$\mathscr{D}(x)>0$.
Given $x_1, \ldots, x_n\in X$, 
then $p_1\cdot  x_1+ \ldots + p_n\cdot x_n $ denotes the distribution 
$\mathscr{D}$ with finite 
$\mathrm{supp}(\mathscr{D})=\lbrace x_1, \ldots, x_n\rbrace$, 
such that $\mathscr{D}(x_i)=p_i$, for every $i \leq n$.  Moreover,  $x\in X$ denotes both 
an element in $X$ and the distribution having all its mass on $x$, i.e.~$1 \cdot x$.
Finally, let  $ I $ be a finite set of indexes, 
let  $\lbrace p_i \rbrace_{i \in I}$ be a family of positive real numbers 
such that $\sum_{i\in I}p_i= 1$, and let 
 $\lbrace \mathscr{D}_i\rbrace_{i \in I}$ be a family of distributions.
Then, for all $x \in X$, we define 
$(\sum_{i\in I} p_i \cdot \mathscr{D}_i)(x)
\triangleq \sum_{i\in I} p_i \cdot \mathscr{D}_i(x)$.

\begin{figure}[t]
	\begin{mathpar}\mprset{vskip=0.4ex}
		\inferrule*[Right=$s1$]{S \in \mathrm{SNF}}{S \Rightarrow S} \and
		\inferrule*[Right=$s2$]{M \rightarrow M_1, M_2 \\  M_1\Rightarrow \mathscr{D}_1 \\ M_2 \Rightarrow \mathscr{D}_2 }{M \Rightarrow \textstyle{\frac{1}{2}}\cdot \mathscr{D}_1+ \textstyle{\frac{1}{2}}\cdot \mathscr{D}_2}
	\end{mathpar}
	\caption{Multi-step surface reduction $\Rightarrow$ for $\plamb$.}
	\label{fig: operational semantics for staoplus}
\end{figure}

\begin{definition}[Multi-step  surface reduction for $\plamb$]
\label{defn: small-step probabilsitic operational semantics for surface reduction} {\ }
\begin{itemize}
\item 
A \emph{surface distribution} is a probability distribution over 
$\mathrm{SNF}$ 
(see \textnormal{{\textbf{Definition~\ref{defn: surface reduction for plamb}}}}), 
i.e.~a function $\mathscr{D}: \mathrm{SNF} \longrightarrow [0,1]$ such that 
$\textstyle \sum_{S \in \mathrm{SNF}} \mathscr{D}(S)=1$.

\item 
The \emph{multi-step surface reduction $\Rightarrow$}  is the relation  between terms 
of $\plamb$   and surface distributions  defined in
{\normalfont{{Figure~\ref{fig: operational semantics for staoplus}}}}.  
Both $\pi$ and $ \rho$ range over derivations of $M \Rightarrow \mathscr{D}$.   

\item 
The  \emph{size}  $\vert \pi \vert$ of a derivation 
$\pi: M \Rightarrow \mathscr{D}$ is  $0$ if   $\pi$ is $s1$, and 
$\max(\s{\pi_1}, \s{\pi_2})+1$ if $\pi$ is $s2$ with premises  
$M \rightarrow M_1, M_2$, $\pi_1:M_1 \Rightarrow 
    \mathscr{D}_1$ and $\pi_2:M_2 \Rightarrow \mathscr{D}_2$.
\end{itemize}
\end{definition}
\vspace{.5\baselineskip}
\noindent

\begin{example} \label{exmp: Rightarrow} Consider the term $(\lambda \oc x.  \langle \coin,\mathtt{d}(x)\rangle)\oc \mathbf{I}$, where $\coin$ is as in {\normalfont{\textbf{Example~\ref{exmp: non-confluence}}}} and $\mathbf{I}\triangleq \lambda x.x$. We can apply surface reduction to this term in two different ways as in {\normalfont{{Figure~\ref{fig: different surface strategies for soundness}}}}.
In particular, 
the one with dashed lines corresponds to the derivation of the  multi-step reduction $(\lambda \oc x.  \langle \coin,\mathtt{d}(x)\rangle)\oc \mathbf{I}   \Rightarrow  \frac{1}{2}\cdot  \langle \mathbf{T}, \mathbf{I} \rangle + \frac{1}{2}\cdot  \langle \mathbf{F}, \mathbf{I} \rangle$ in 
{\normalfont{{Figure~\ref{fig: example of multi-step derivation}}}}.
\lipicsEnd
\end{example}

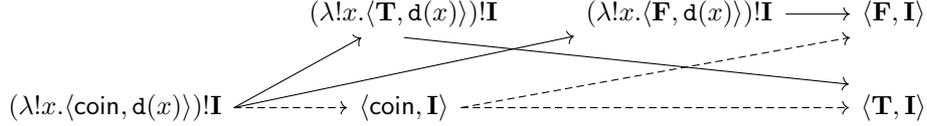
\begin{figure}[t]
	\begin{tikzcd}[ampersand replacement = \&]
		\&(\lambda \oc x.  \langle \mathbf{T},\mathtt{d}(x)\rangle)\oc \mathbf{I} \arrow[rrd,start anchor=south, end anchor=north west] \& (\lambda \oc x.  \langle \mathbf{F},\mathtt{d}(x)\rangle)\oc \mathbf{I} \arrow[r]\&  \langle \mathbf{F},\mathbf I \rangle \\
		(\lambda \oc x.  \langle \coin,\mathtt{d}(x)\rangle)\oc \mathbf{I} \arrow[r, dashed] \arrow[ru, start anchor=east]\arrow[rru, start anchor=east]\&   \langle \coin,\mathbf I \rangle \arrow[rr, dashed]\arrow[rru,start anchor=east, end anchor = south west, dashed]\&\& \langle \mathbf{T},\mathbf I \rangle 
	\end{tikzcd}
	\caption{Different surface reduction strategies for $ (\lambda \oc x.  \langle \coin,\mathtt{d}(x)\rangle)\oc \mathbf{I}$,  where $\coin \rightarrow \mathbf{T}, \mathbf{F}$.}
	\label{fig: different surface strategies for soundness}
\end{figure}

\begin{figure}[t]
	\def\defaultHypSeparation{\hskip .17cm}
	\def\ScoreOverhang{0cm}
	\scalebox{.9}{
	\AxiomC{$(\lambda \oc x.  \langle \coin,\mathtt{d}(x)\rangle)\oc \mathbf{I}   \rightarrow   \langle \coin, \mathbf{I}  \rangle  $}
	\AxiomC{$  \langle \coin, \mathbf{I} \rangle \rightarrow    \langle \mathbf{T}, \mathbf{I} \rangle, \,  \langle \mathbf{F}, \mathbf{I} \rangle$}
	\AxiomC{}
	\UnaryInfC{$  \langle \mathbf{T}, \mathbf{I} \rangle \Rightarrow   \langle \mathbf{T}, \mathbf{I} \rangle$}
	\AxiomC{}
	\UnaryInfC{$  \langle \mathbf{F}, \mathbf{I} \rangle \Rightarrow   \langle \mathbf{F}, \mathbf{I} \rangle$}
	\TrinaryInfC{$ \langle \coin, \mathbf{I} \rangle\Rightarrow  \frac{1}{2}\cdot  \langle \mathbf{T}, \mathbf{I} \rangle + \frac{1}{2}\cdot  \langle \mathbf{F}, \mathbf{I} \rangle$}
	\BinaryInfC{$(\lambda \oc x.  \langle \coin,\mathtt{d}(x)\rangle)\oc \mathbf{I}   \Rightarrow  \frac{1}{2}\cdot  \langle \mathbf{T}, \mathbf{I} \rangle + \frac{1}{2}\cdot  \langle \mathbf{F}, \mathbf{I} \rangle$}
	\DisplayProof
		}
\caption{Derivation of $(\lambda \oc x.  \langle \coin,\mathtt{d}(x)\rangle)\oc \mathbf{I}   \Rightarrow  \frac{1}{2}\cdot  \langle \mathbf{T}, \mathbf{I} \rangle + \frac{1}{2}\cdot  \langle \mathbf{F}, \mathbf{I} \rangle$, where $\coin \rightarrow \mathbf{T}, \mathbf{F}$.}
	\label{fig: example of multi-step derivation}
\end{figure}

\begin{example}
\label{exmp: Rightarrow may be undefined} 
Let $\mathbf{\Omega_\oc}\triangleq \mathbf{\Delta}_\oc (\oc \mathbf{\Delta}_\oc)$, where  $\mathbf{\Delta}_\oc\triangleq \lambda \oc x. \mathtt{d}(x) \oc \mathtt{d}(x)$. Since  
$\mathbf{\Omega_\oc} \rightarrow 
( \mathtt{d}(x) \oc \mathtt{d}(x))\lbrace\oc\mathbf{\Delta}_\oc/x \rbrace$ $ 
  = (( y \, \oc y)[\mathtt{d}(x)/y])\lbrace \oc\mathbf{\Delta}_\oc/x \rbrace  
  = \mathbf{\Omega_\oc}$, no surface distribution $\mathscr{D}$ exists such that $\mathbf{\Omega_\oc}\Rightarrow \mathscr{D}$.
\lipicsEnd
\end{example}

The calculus $\plamb$ enjoys the following confluence property:
\begin{theorem}[Confluence for $\plamb$]
	\label{thm: confluence for lamb}
	Let $M \in \plamb$. If $M \Rightarrow \mathscr{D}$ and 
	$M \Rightarrow \mathscr{E}$ then $\mathscr{D}= \mathscr{E}$.
\end{theorem}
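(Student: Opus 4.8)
The plan is to reduce the uniqueness of the big-step distribution to a strong local confluence (diamond) property of the one-step surface reduction $\rightarrow$, and then to lift that property to $\Rightarrow$ by induction on the size of derivations, in the style of the probabilistic confluence arguments of~\cite{diaz2018confluence, faggian2019lambda}. The reason to expect a genuine diamond, rather than only confluence modulo a residual theory, is that surface reduction never duplicates a redex. Reduction is forbidden under $\oc$, so in an exponential redex $(\lambda \oc x. M)\oc N$ the frozen argument $\oc N$ contains no surface redex; the linear redex $(\lambda x. M)N$ substitutes $N$ exactly once, since $x$ is $s$-linear in $M$; and both $\mathtt{copy}$ and the rule in~\eqref{eqn: surface one-step duplication} only ever copy values in $\pvalb$, which by their defining grammar are normal and hence redex-free. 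Thus every $\rightarrow$-step copies only redex-free material, and the residual of any other redex stays unique.

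First I would prove a diamond lemma for $\rightarrow$: if $M \rightarrow M_1, M_2$ fires a redex $r$ and $M \rightarrow N_1, N_2$ fires a redex $s$ with $r \neq s$, then the residuals join in one further step on each side and yield the same distribution of terms. The proof is a case analysis on the relative positions of $r$ and $s$ inside the surface context: either they occupy disjoint subterms, or one is nested in the body of the other, a redex never overlapping the frozen argument of an exponential redex nor the duplicated value of a $\mathtt{copy}$, by the observation above. In each case firing $r$ leaves a unique residual of $s$ and vice versa, and the two firing orders commute; when $r$ or $s$ (or both) are $\proj$-redexes the commutation produces the same $\frac12$- or $\frac14$-weighted distribution regardless of order, the deterministic steps being the diagonal case $M_1 = M_2$. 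Two subtleties must be handled with care: the surface-preserving substitution of \textbf{Definition~\ref{definition:Surface-preserving substitution}} may peel matching $\mathtt{d}/\oc$ pairs and so rearrange, but never duplicate, the residual of $s$, and a $\proj$-step is erasing, so the residual of $s$ may survive only in one of the two branches, in which case the diamond is closed by firing $s$ only where it persists.

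Next I would establish a factorization lemma: whenever $M \rightarrow M_1, M_2$ and $M \Rightarrow \mathscr{E}$, the derivation of $M \Rightarrow \mathscr{E}$ can be refactored so that $\mathscr{E} = \frac12\mathscr{E}_1 + \frac12\mathscr{E}_2$ with $M_1 \Rightarrow \mathscr{E}_1$ and $M_2 \Rightarrow \mathscr{E}_2$. This is proved by induction on $\vert \pi \vert$ for $\pi \colon M \Rightarrow \mathscr{E}$. If $\pi$ is $s1$ then $M \in \mathrm{SNF}$ has no redex, contradicting $M \rightarrow M_1, M_2$, so $\pi$ ends in $s2$ firing some redex $s$; if $s = r$ the claim is immediate from the premises of $s2$, and if $s \neq r$ the diamond lemma lets me commute $r$ past $s$ and then close by the induction hypothesis on the strictly smaller sub-derivations.

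Finally, confluence follows by induction on $\vert \rho \vert$ for $\rho \colon M \Rightarrow \mathscr{D}$. If $\rho$ is $s1$ then $M \in \mathrm{SNF}$, which forces the derivation of $M \Rightarrow \mathscr{E}$ to be $s1$ as well, whence $\mathscr{D} = M = \mathscr{E}$. If $\rho$ ends in $s2$ with $M \rightarrow M_1, M_2$, $M_i \Rightarrow \mathscr{D}_i$ and $\mathscr{D} = \frac12\mathscr{D}_1 + \frac12\mathscr{D}_2$, I apply the factorization lemma to this first step together with $M \Rightarrow \mathscr{E}$, obtaining $\mathscr{E} = \frac12\mathscr{E}_1 + \frac12\mathscr{E}_2$ with $M_i \Rightarrow \mathscr{E}_i$; the induction hypothesis on the sub-derivations gives $\mathscr{D}_i = \mathscr{E}_i$, so $\mathscr{D} = \mathscr{E}$. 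The main obstacle throughout is the diamond lemma, and in particular verifying that the surface-preserving substitution together with $s$-linearity genuinely rule out redex duplication and keep residuals unique across $\proj$-induced branching; once the diamond is secured, the factorization and confluence inductions are routine, and their termination is guaranteed by the finiteness of the derivation sizes $\vert \pi \vert$ and $\vert \rho \vert$.
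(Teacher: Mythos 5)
Your one-step analysis is essentially the paper's: its Lemmas~\ref{lem: one-reduction step confluence 1}--\ref{lem: one-reduction step confluence 3} are exactly your diamond, split according to how many distinct results each of the two competing steps has, and you correctly identify $\proj$ as the only erasing rule and values/$\oc$-arguments as redex-free. The gap is in your factorization lemma. Take $M=\mathcal{C}[\proj\langle P,Q\rangle]$, let $r$ be the $\proj$-redex, so $M\rightarrow_r \mathcal{C}[P],\mathcal{C}[Q]$, and suppose $\pi\colon M\Rightarrow\mathscr{E}$ begins by firing a redex $s$ inside $P$, say $P\rightarrow_s P',P''$, so that $\mathscr{E}=\frac12\mathscr{F}_1+\frac12\mathscr{F}_2$ with $\pi_j\colon \mathcal{C}[\proj\langle P^{(j)},Q\rangle]\Rightarrow\mathscr{F}_j$. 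Commuting $r$ past $s$ and applying the induction hypothesis to $\pi_1$ and $\pi_2$ gives $\mathscr{F}_j=\frac12\mathscr{A}_j+\frac12\mathscr{B}_j$ with $\mathcal{C}[P^{(j)}]\Rightarrow\mathscr{A}_j$ and $\mathcal{C}[Q]\Rightarrow\mathscr{B}_j$. On the surviving side you can reassemble $\mathcal{C}[P]\Rightarrow\frac12\mathscr{A}_1+\frac12\mathscr{A}_2$ by an $s2$ on the residual of $s$. But on the erased side you are left with two derivations $\mathcal{C}[Q]\Rightarrow\mathscr{B}_1$ and $\mathcal{C}[Q]\Rightarrow\mathscr{B}_2$, and no rule lets you derive $\mathcal{C}[Q]\Rightarrow\frac12\mathscr{B}_1+\frac12\mathscr{B}_2$: $s1$ and $s2$ each produce a single distribution from a term unless a further redex is fired. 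Closing this case requires $\mathscr{B}_1=\mathscr{B}_2$, which is an instance of the confluence you are proving, and your induction on $\vert\pi\vert$ alone gives no control on the sizes of the derivations producing $\mathscr{B}_1$ and $\mathscr{B}_2$, so no smaller-measure confluence hypothesis is available.

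This is precisely the point where the paper changes the setting. It introduces a laxer relation $\Rrightarrow$ whose base rule $t1$ applies to \emph{every} term, not only to surface normal forms, so that a branch whose residual has been erased can simply stand still; it then proves one joint statement (Lemma~\ref{lem: strenghtening of confluence}) by induction on the \emph{sum} $\vert M\Rrightarrow\mathscr{D}\vert+\vert M\Rrightarrow\mathscr{E}\vert$, carrying explicit size bounds on every sub-derivation it constructs precisely so that the nested appeals to the induction hypothesis (including the one that identifies your $\mathscr{B}_1$ and $\mathscr{B}_2$) are legitimate, and finally specializes to $\Rightarrow$ using the fact that elements of $\mathrm{SNF}$ only $\Rrightarrow$-reduce to themselves. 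To repair your argument you would either have to reproduce this move, or run factorization and confluence as a simultaneous induction on a combined measure, strengthening factorization so that it also bounds the sizes of the derivations it outputs.
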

\begin{proof}[sketch]
	Following~\cite{dal2015higher}, we define a relation $\Rrightarrow$ between 
	terms and distributions over $\plamb$, where rule $s1$ is relaxed  to allow  
	$M \Rrightarrow M$, for all $M \in \plamb$, and such that
	${\Rightarrow}\subseteq {\Rrightarrow}$.
	So, if $\Rrightarrow$ is confluent, then 
	$\Rightarrow$ is. To show this, we first establish confluence for  $\rightarrow$,  which requires to prove 
	\enquote{If $M \rightarrow M'_1, M'_2$ and $M \rightarrow M''_1, M''_2$,  
	then there exist $N_1, N_2, N_3, N_4$ distinct such that 
	$M'_1 \rightarrow N_1, N_2$, $M'_2 \rightarrow N_3, N_4$, and 
	$\exists i\in \lbrace 1,2\rbrace$ such that $M''_i \rightarrow N_1, N_3$, and 
	$M''_{3-i}\rightarrow N_2, N_4$} among other lemmas. 
    Then, we lift this confluence property  from	$\rightarrow$ to $\Rrightarrow$. 
\end{proof}

\section{Probabilistic Polytime Soundness of \STAP}
\label{sec4}
We show that the evaluation of any term of $ \plamb $ with type in \STAP (according to the multi-step reduction $\Rightarrow$) can be simulated by a \emph{polynomial time} Probabilistic  Turing Machine (\pPTM), i.e. by a Probabilistic Turing Machine (\PTM) whose running time  is bounded by some   polynomial in the input size.
We adapt the proof developed for  $\mathsf{STA}$~\cite{gaboardi2009light}, known since~\cite{lafont2004soft}, to the probabilistic setting.  
We show that Surface reduction preserves types and 
shrinks the weight of derivations; so, in fact, we prove a version of Subject 
reduction 
({\textbf{Theorem~\ref{thm: weighted subject reduction}}}) 
a bit stronger than usual.
From this we derive that the number of surface reduction steps  rewriting a 
typable term into its surface normal forms 
is polynomially bounded 
({\textbf{Lemma~\ref{lem: polystep soundness complexity}}}.)
This, eventually, implies Probabilistic Polytime Soundness 
({\textbf{Theorem~\ref{thm: polytime soundness complexity}}}.)

We start recalling the notions of rank (here $m$-rank) and depth from~\cite{lafont2004soft, gaboardi2009light}. We 
introduce the $sp$-rank; the treatment of both
$\oc$ and $\mathtt{d}$, which affect the size of a term,
requires it.

\begin{definition}[$m$-rank, $sp$-rank, depth]{\ }
\begin{itemize}
\item 
The \emph{$m$-rank} of a rule $m$ of the form:
\begin{prooftree}
\AxiomC{$\Gamma, x_1: \sigma, \ldots, x_n: \sigma \vdash M: \tau$}
\AxiomC{$(n \geq 0)$}
\RightLabel{$m$}
\BinaryInfC{$\Gamma, x: \oc\sigma \vdash M[\mathtt{d}(x)/x_1, \ldots, 
	\mathtt{d}(x)/x_n]: \tau$}
\end{prooftree}
is the number $k \leq n$ of variables $x_i$ such that $x_i \in FV(M)$. 
The \emph{$m$-rank} $\mathrm{rk}(\mathcal{D})$ of a derivation $\mathcal{D}$ is 
$\max(1, k)$, with $k$ the maximum $m$-rank among the instances  of 
$m$ in $\mathcal{D}$.

\item 
The \emph{$sp$-rank} of a rule $sp$ of the form: 
\begin{prooftree}
\AxiomC{$ x_1: \sigma_1, \ldots, x_n: \sigma_n \vdash M: \tau$}
\RightLabel{$sp$}
\UnaryInfC{$ y_1: \oc \sigma_1, \ldots, y_n: \oc \sigma_n \vdash \oc M[\mathtt{d}(y_1)/x_1, \ldots, \mathtt{d}(y_n)/y_n]: \tau$}
\end{prooftree}
is the number $k \leq n$ of variables $x_i$ such that $x_i \in FV(M)$. 
\item 
The \emph{depth} $\mathrm{d}(\mathcal{D})$ of a derivation $\mathcal{D}$ is 
the maximum number of occurrences of $sp$ in a path from the 
conclusion of $ \mathcal{D} $ to one axiom in $\mathcal{D}$. 
\end{itemize}
\end{definition}

\begin{definition}[Weight]
Let $r \geq 1$. 
The \emph{weight} $\mathrm{w}(\mathcal{D}, r)$ (relative to $ r $)
of a derivation $\mathcal{D}$ is defined by structural induction 
on $\mathcal{D}$:
\begin{itemize}
\item if the last rule of $\mathcal{D}$ is $ax$, then $\mathrm{w}(\mathcal{D}, r)=1$;
\item if $\mathcal{D}$ is obtained from $\mathcal{D}'$ by applying ${\multimap}\mathrm{I}$l, ${\multimap}\mathrm{I}$e or $\with \mathrm{E}$, then $\mathrm{w}(\mathcal{D}, r)= \mathrm{w}(\mathcal{D}', r)+1$;
\item if $\mathcal{D}$ is obtained from $\mathcal{D}'$ and $\mathcal{D}''$ by applying ${\multimap}\mathrm{E}$,
then $\mathrm{w}(\mathcal{D}, r)
          = \mathrm{w}(\mathcal{D}', r)+ \mathrm{w}(\mathcal{D}'', r)+1$;
\item if $\mathcal{D}$ is obtained from $\mathcal{D}_1$, $\mathcal{D}_2$, $\mathcal{D}_3$, and $\mathcal{D}_4$ by applying $\with\mathrm{I}$, then  
 $\mathrm{w}(\mathcal{D}, r)=\mathrm{w}(\mathcal{D}_1,r)
+\mathrm{w}(\mathcal{D}_2, r)
+\mathrm{w}(\mathcal{D}_3, r)
+\mathrm{w}(\mathcal{D}_4, r)
+2 $;
\item if $\mathcal{D}$ is obtained from $\mathcal{D}'$ by applying $\forall\mathrm{I}$, or $\forall\mathrm{E}$,  
then $\mathrm{w}(\mathcal{D},r)= \mathrm{w}(\mathcal{D}', r)$;
\item if $\mathcal{D}$ is obtained from $\mathcal{D}'$ by applying $sp$ with  $sp$-rank $k$, 
then $\mathrm{w}(\mathcal{D}, r)= r\cdot (\mathrm{w}(\mathcal{D}', r)+k)+1$;
\item  if $\mathcal{D}$ is obtained from $\mathcal{D}'$ by applying   $m$ with $m$-rank $k$,
then $\mathrm{w}(\mathcal{D}, r)=  \mathrm{w}(\mathcal{D}', r)+k$. 
\end{itemize}
\end{definition}

\begin{lemma}[\cite{gaboardi2009light}]
\label{lem: properties of weight} 
Let $r \geq 1$ and $\mathcal{D}\triangleleft \Gamma 
\vdash M: \sigma$. 
Then:
\begin{enumerate}
\item 
\label{enum: properties of weight 1} 
$\mathrm{rk}(\mathcal{D})\leq \vert M \vert$;
\item 
\label{enum: properties of weight 2} 
$\mathrm{w}(\mathcal{D}, r)\leq r^{\mathrm{d}(\mathcal{D})}\cdot \mathrm{w}(\mathcal{D}, 1)$;
\item 
\label{enum: properties of weight 3} 
$\mathrm{w}(\mathcal{D}, 1)= \vert M \vert$. 
Moreover, if $\mathcal{D}$  has no occurrences of $sp$ and $m$, 
then $\w{\mathcal{D}, r}= \vert M \vert$.
\end{enumerate}
\end{lemma}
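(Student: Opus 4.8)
The plan is to prove all three statements by structural induction on $\mathcal{D}$, reading each inductive step off the corresponding clause in the definition of $\mathrm{w}(\mathcal{D}, r)$; I would establish them in the order \ref{enum: properties of weight 3}, \ref{enum: properties of weight 1}, \ref{enum: properties of weight 2}, since \ref{enum: properties of weight 3} settles how the operators $\oc$ and $\mathtt{d}$ contribute to the size of the subject, and this accounting is reused by the other two. For \ref{enum: properties of weight 3} the base case $ax$ gives $\mathrm{w}(\mathcal{D}, 1) = 1 = \vert x \vert$. The rules $\multimap$I$l$, $\multimap$I$e$, $\with$E each add a single constructor to the subject and a $+1$ to the weight; $\multimap$E produces $MN$ of size $\vert M\vert + \vert N\vert + 1$, matching its clause; $\with$I produces a $\mathtt{copy}$ term of size $\vert V\vert + \vert N\vert + \vert M_1\vert + \vert M_2\vert + 2$, matching the four-premise clause with its $+2$ (the premise $\vdash V:C$ accounts for $\vert V\vert$); and $\forall$I, $\forall$E change neither subject nor weight. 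The delicate cases, which are the heart of the lemma, are $sp$ and $m$: in the $m$ case of $m$-rank $k$ the subject is $M'[\mathtt{d}(x)/x_1,\dots,\mathtt{d}(x)/x_n]$, and since each of the $k$ counted (linear) variables occurs free exactly once by $s$-linearity, replacing it by $\mathtt{d}(x)$ (of size $2$) raises the size by exactly $1$, so $\vert M\vert = \vert M'\vert + k = \mathrm{w}(\mathcal{D}, 1)$; the $sp$ case is identical on the body and adds $1$ for the outermost $\oc$, matching $\mathrm{w}(\mathcal{D}, 1) = \mathrm{w}(\mathcal{D}', 1) + k + 1$. The moreover clause is immediate, since only the $sp$ and $m$ clauses mention $r$, so their absence makes $\mathrm{w}(\mathcal{D}, r)$ independent of $r$ and equal to $\mathrm{w}(\mathcal{D}, 1) = \vert M\vert$.

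For \ref{enum: properties of weight 1} I would first note that the size of the subject is non-decreasing from the leaves to the conclusion of $\mathcal{D}$: each rule either leaves the subject unchanged or builds it from the premises' subjects by adding constructors. Hence the subject at any node has size at most $\vert M\vert$. At an instance of $m$ of $m$-rank $k$ the $k$ counted variables all occur in the body $M'$, so $k\le \vert M'\vert \le \vert M\vert$; taking the maximum over all instances of $m$ and using $1\le\vert M\vert$ yields $\mathrm{rk}(\mathcal{D}) = \max(1, k_{\max}) \le \vert M\vert$.

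For \ref{enum: properties of weight 2} I would induct on $\mathcal{D}$, using $r\ge 1$ so that $r^{\mathrm{d}(\mathcal{D})}\ge r^{\mathrm{d}(\mathcal{D}')}\ge 1$ for every immediate subderivation $\mathcal{D}'$. The base case $ax$ has both sides equal to $1$. For the rules that add a constant $c\in\{1,2\}$ the depth is the maximum of the premise depths, so I bound each premise weight by $r^{\mathrm{d}(\mathcal{D})}\mathrm{w}(\cdot,1)$ and absorb $c$ using $r^{\mathrm{d}(\mathcal{D})}\ge 1$; the $\forall$ rules are trivial; the $m$ case uses $k\le r^{\mathrm{d}(\mathcal{D})}k$. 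The decisive case is $sp$, where $\mathrm{d}(\mathcal{D}) = \mathrm{d}(\mathcal{D}')+1$ and
\[
\mathrm{w}(\mathcal{D}, r) = r\bigl(\mathrm{w}(\mathcal{D}', r)+k\bigr)+1 \le r^{\mathrm{d}(\mathcal{D})}\mathrm{w}(\mathcal{D}', 1) + r^{\mathrm{d}(\mathcal{D})}k + r^{\mathrm{d}(\mathcal{D})} = r^{\mathrm{d}(\mathcal{D})}\,\mathrm{w}(\mathcal{D}, 1),
\]
using the induction hypothesis on $\mathcal{D}'$ together with $r\le r^{\mathrm{d}(\mathcal{D})}$ and $1\le r^{\mathrm{d}(\mathcal{D})}$, and finally $\mathrm{w}(\mathcal{D}, 1) = \mathrm{w}(\mathcal{D}', 1)+k+1$ from the $r=1$ instance of the $sp$ clause.

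I expect the main obstacle to be the $sp$ and $m$ cases of \ref{enum: properties of weight 3}: one must check that substituting derelictions $\mathtt{d}(x)$ for the counted context variables increases the subject size by exactly the rank $k$, which is precisely where $s$-linearity (single free occurrence of each counted variable) and the correct size accounting for $\oc$ and $\mathtt{d}$ are used. Once this identity is in place, \ref{enum: properties of weight 1} and \ref{enum: properties of weight 2} reduce to routine arithmetic exploiting $r\ge 1$.
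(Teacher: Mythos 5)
Your proof is correct and follows the same route as the paper: a single structural induction on $\mathcal{D}$, whose only non-routine computation is the $sp$ case of item~2, where your chain of inequalities $r\,(\mathrm{w}(\mathcal{D}',r)+k)+1 \leq r^{\mathrm{d}(\mathcal{D})}(\mathrm{w}(\mathcal{D}',1)+k+1) = r^{\mathrm{d}(\mathcal{D})}\cdot\mathrm{w}(\mathcal{D},1)$ is exactly the one the paper gives. The paper dismisses items~1 and~3 as straightforward, so your explicit accounting of how $m$ and $sp$ insert derelictions (size $+1$ per counted occurrence) merely spells out what the paper leaves implicit.
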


\begin{theorem}[Weighted Subject reduction]
\label{thm: weighted subject reduction} 
Let  $\mathcal{D}\triangleleft \Gamma \vdash M: \sigma$ and 
$r \geq \mathrm{rk}(\mathcal{D})$. 
If $M \rightarrow M_1, M_2$, then there exist $\mathcal{D}_1$ and $\mathcal{D}_2$ such that:
\begin{enumerate}
\item $\mathcal{D}_i \triangleleft \Gamma \vdash M_i: \sigma$.
\item $\mathrm{w}(\mathcal{D}_i, r)<\mathrm{w}(\mathcal{D}, r)$, for $i \in \lbrace 1,2 \rbrace$.
\end{enumerate}
\end{theorem}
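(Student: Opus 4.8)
The plan is to argue by induction on the derivation $\mathcal{D}\triangleleft \Gamma \vdash M:\sigma$, with a case analysis on its last rule and on where the contracted redex of $M\rightarrow M_1,M_2$ sits. When the redex is at the root of $M$ I would invert $\mathcal{D}$ to expose the sub-derivations of the immediate constituents, using a routine family of generation lemmas (one per constructor); since $\forall$I and $\forall$E change neither the subject nor the weight, they can be absorbed and commuted, so the analysis only depends on the \emph{computational} last rule. Before the induction I would isolate the two substitution lemmas that carry the base cases. The first is a linear substitution lemma: if $\mathcal{D}'\triangleleft \Gamma,x:A\vdash M:B$ with $x$ surface-linear and $A$ a linear type, and $\mathcal{D}''\triangleleft \Gamma'\vdash N:A$, then $\Gamma,\Gamma'\vdash M[N/x]:B$ is derivable with weight exactly $\we{\mathcal{D}'}{r}+\we{\mathcal{D}''}{r}-1$, the $-1$ recording the replaced axiom on $x$. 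The second is the delicate soft substitution lemma for the surface-preserving substitution $\{\cdot\}$: if $\mathcal{D}'\triangleleft \Gamma,x:\oc\sigma\vdash M:A$ and $\mathcal{D}''\triangleleft\Gamma'\vdash \oc N:\oc\sigma$, then $\Gamma,\Gamma'\vdash M\{\oc N/x\}:A$ is derivable, the point being that $\mathcal{D}''$ must end in $sp$ and every free occurrence of $x$ in $M$ is introduced by an $m$ rule as $\mathtt{d}(x)$, so contraction distributes the $sp$-promoted body of $N$ over those derelicted occurrences, each cancellation $\mathtt{d}(\oc(\cdot))\to(\cdot)$ being exactly what $\{\cdot\}$ performs.

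For the three easy base cases I would argue as follows. If $M=(\lambda x.P)Q\rightarrow P[Q/x]$, inversion yields $\multimap$E over $\multimap$I$l$; linear substitution gives the typing, and the weight strictly drops by $3$, namely the $+1$ of $\multimap$I$l$, the $+1$ of $\multimap$E, and the $-1$ of the substitution. If $M=\proj\langle P,Q\rangle\rightarrow P,Q$, inversion yields $\with$E over the closed-pair instance of $\with$I, so both $\vdash P:C$ and $\vdash Q:C$ are already available, and discarding the $\with$E step together with one component makes the weight strictly smaller than $\we{\mathcal{D}}{r}$ for each of $P$ and $Q$. If $M=\mathtt{copy}^{U}V\mathtt{\ as\ }x,y\mathtt{\ in\ }\langle P,Q\rangle\rightarrow\langle P[V/x],Q[V/y]\rangle$ with $U,V\in\pvalb$, inversion yields $\with$I whose premises type $V$, $P$, $Q$, $U$; since $V$ is closed and the context is $\forall\oc$-lazy, that context is forced empty, and two linear substitutions (legitimate because $x,y$ are surface-linear and $C$ is $\forall\oc$-lazy, hence linear) retype the result via the closed-pair instance of $\with$I. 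Here the strict decrease rests on Remark~\ref{rem: w.l.o.g. largest values}: values contain neither $\oc$ nor $\mathtt{d}$, so their derivations are free of $sp$ and $m$ and, by Lemma~\ref{lem: properties of weight}(3) and Proposition~\ref{prop: values for STAoplus bounded by size type}, $\we{\mathcal{D}_V}{r}=\s{V}\leq\s{U}=\we{\mathcal{D}_U}{r}$; hence the duplicated copy of $V$ is paid for by discarding $U$ and the $\mathtt{copy}$ overhead.

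The genuine difficulty is the exponential redex $M=(\lambda\oc x.P)\oc N\rightarrow P\{\oc N/x\}$. Inversion gives $\multimap$E over $\multimap$I$e$ and $sp$, so $\we{\mathcal{D}}{r}=\we{\mathcal{D}_P}{r}+\big(r\cdot(\we{\mathcal{D}_{N_0}}{r}+k_{sp})+1\big)+2$, where $\mathcal{D}_{N_0}$ is the premise promoted by $sp$ and $k_{sp}$ is its $sp$-rank. Firing the redex replaces the $k_m$ surface occurrences of $\mathtt{d}(x)$ in $P$, introduced by an $m$ rule of $m$-rank $k_m$, by copies of the body of $N$, while the shared free variables of $N$ are re-merged by auxiliary $m$ rules. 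The whole Soft Linear Logic weight machinery is designed so that this duplication is affordable: since $k_m\leq\rk{\mathcal{D}}\leq r$, the fresh copies contribute at most $r\cdot\we{\mathcal{D}_{N_0}}{r}$, which is strictly dominated by the $r\cdot(\we{\mathcal{D}_{N_0}}{r}+k_{sp})+1$ already spent by the single $sp$ instance, and the $+2$ from the eliminated $\multimap$I$e$/$\multimap$E pair supplies the remaining slack. I expect this bookkeeping to be the main obstacle: one must verify that the $m$ rule of $\mathcal{D}_P$ merges correctly with the $sp$ of $\mathcal{D}_{\oc N}$ under $\{\cdot\}$, each $\mathtt{d}$/$\oc$ cancellation matching one unit of $sp$-rank, and that the inequality $\we{\mathcal{D}_i}{r}<\we{\mathcal{D}}{r}$ survives the worst case $k_m=r$. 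This is precisely the point where the hypothesis $r\geq\rk{\mathcal{D}}$ is indispensable.

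Finally, the congruence cases are routine, and it is here that the restriction to \emph{surface} reduction pays off. Because the surface-context grammar omits the clause $\oc\mathcal{C}$, no redex is ever contracted under a $\oc$, so the last rule of $\mathcal{D}$ is never $sp$ whenever $M$ has a surface redex (a term of shape $\oc(\cdot)$ is already a surface normal form). For every remaining constructor, the weight of $\mathcal{D}$ is a sum of the weights of its immediate sub-derivations plus a constant carrying no $r$-multiplication, so the strict decrease supplied by the inductive hypothesis on the reducing sub-derivation propagates additively to the whole derivation, while typability is preserved by re-applying the same last rule. The same $r$ may be reused throughout, since $\rk{\mathcal{D}}$ bounds the $m$-rank of every sub-derivation.
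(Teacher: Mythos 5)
Your proposal is correct and follows essentially the same route as the paper: induction on the reduction with generation/inversion lemmas, a weighted linear substitution lemma for linear types and a weighted (surface-preserving) substitution lemma for $\oc$-types whose $r$-multiplied $sp$ cost absorbs the $m$-rank-many copies, emptiness of the context in the $\mathtt{copy}$ case, and the largest-value bound $\s{V}\leq\s{U}$ from Remark~\ref{rem: w.l.o.g. largest values} to pay for duplication. The only cosmetic difference is that you track an exact $-1$ in the linear substitution lemma where the paper settles for $\leq$ and draws strictness from the weight of the eliminated rules; both suffice.
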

\begin{proof}[sketch] 
The proof is by induction on the definition of the one-step reduction relation $\rightarrow$. 
It requires to prove a \emph{Weighted Substitution property}: 
``For all $r \geq \mathrm{rk}(\mathcal{D}_1)$, 
if  $\mathcal{D}_1 \triangleleft \Gamma, x: \sigma \vdash M: \tau$ 
and $\mathcal{D}_2 \triangleleft \Delta \vdash N: \sigma$ then  
$\mathcal{D}^*$ exists such that both
$\mathcal{D}^* \triangleleft \Gamma, \Delta \vdash M\lbrace N/x\rbrace : \tau$  
and $\mathrm{w}(\mathcal{D}^*, r )\leq \mathrm{w}(\mathcal{D}_1,r)+ \mathrm{w}(\mathcal{D}_2, r)$.'' 
The proof of the Weighted Substitution property relies on 
the lemma: 
``If $\Gamma \vdash M: \oc \sigma$ is derivable in $\STAP$, 
then  $\Gamma$ is a strictly exponential context.''  
\end{proof}

\par
The above theorem implies that terms typable in \STAP are strong normalizing with respect to Surface reduction $\rightarrow$, and hence that, for any $M$ with type in \STAP,  a surface distribution $\mathscr{D}$ exists such that $M \Rightarrow \mathscr{D}$. By {{\textbf{Theorem~\ref{thm: confluence for lamb}}}}, this surface distribution is \emph{unique}.

Every derivation $ M \Rightarrow \mathcal{D} $, with $ M $ having a type in 
\STAP, enjoys the following:

\begin{lemma}[Uniformity]
\label{lem: all derivations of Rightarrow have same size}
Let $\Gamma \vdash M: \sigma$. 
If $\pi':M\Rightarrow \mathscr{D}$ and $ \pi'':M\Rightarrow \mathscr{D}$, 
then $\s{\pi'}=\s{\pi''}$. 
\end{lemma}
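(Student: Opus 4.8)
The plan is to prove the statement by well-founded induction on a measure that strictly decreases along $\rightarrow$, exploiting the one-step ``pair diamond'' established inside the proof of \textbf{Theorem~\ref{thm: confluence for lamb}}. Since $\Gamma \vdash M : \sigma$, \textbf{Theorem~\ref{thm: weighted subject reduction}} guarantees that $M$ is strongly normalizing for $\rightarrow$, so the length $\nu(M)$ of a longest $\rightarrow$-reduction from $M$ is finite; I induct on $\nu(M)$, taking as induction hypothesis that \emph{any} two derivations of $M' \Rightarrow \mathscr{D}'$ have equal size whenever $\nu(M') < \nu(M)$ (the distribution $\mathscr{D}'$ being automatically unique by \textbf{Theorem~\ref{thm: confluence for lamb}}, and the existence of a derivation guaranteed by subject reduction plus strong normalization). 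The point is that $\s{\pi}$ is the height of the reduction tree, computed with $\max$ at every $s2$ node; unlike the \emph{total} length of a root-to-leaf branch, which may genuinely vary because a $\proj$ step selects branches of different lengths, this $\max$-based height turns out to be invariant under the choice of contracted redex.

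If $M \in \mathrm{SNF}$ both $\pi'$ and $\pi''$ must be $s1$, so $\s{\pi'} = \s{\pi''} = 0$. Otherwise both are $s2$; say $\pi'$ contracts a redex $R'$, giving $M \rightarrow M_1', M_2'$ with subderivations $\pi_i' : M_i' \Rightarrow \mathscr{D}_i'$, and $\pi''$ contracts $R''$, giving $M \rightarrow M_1'', M_2''$. When $R' = R''$ the two reducts coincide up to the ordering of the $\proj$-components, and the induction hypothesis applied to $M_1', M_2'$ together with the symmetry of $\max$ yields $\s{\pi'} = \s{\pi''}$. When $R' \neq R''$ I invoke the one-step diamond from the confluence proof: there are distinct $N_1, N_2, N_3, N_4$ with $M_1' \rightarrow N_1, N_2$, $M_2' \rightarrow N_3, N_4$, and, for some $i$, $M_i'' \rightarrow N_1, N_3$ and $M_{3-i}'' \rightarrow N_2, N_4$. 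Writing $d_j := \s{\rho_j}$ for the (by induction, unique) size of a derivation $\rho_j : N_j \Rightarrow \mathscr{E}_j$, the induction hypothesis applied to each $M_i'$ lets me compute its derivation size through the canonical derivation that first reduces to the $N_j$'s: $\s{\pi_1'} = \max(d_1,d_2)+1$ and $\s{\pi_2'} = \max(d_3,d_4)+1$, whence $\s{\pi'} = \max(d_1,d_2,d_3,d_4)+2$. Symmetrically $\s{\pi_i''} = \max(d_1,d_3)+1$ and $\s{\pi_{3-i}''} = \max(d_2,d_4)+1$, so $\s{\pi''} = \max(d_1,d_2,d_3,d_4)+2 = \s{\pi'}$. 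Thus both reduction orders feed into the same four-corner grid $N_1,\dots,N_4$, and the nested maxima collapse to the same value regardless of which redex is fired first.

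The main obstacle is making the distinct-redex case airtight, i.e.\ the careful reuse of the one-step diamond of \textbf{Theorem~\ref{thm: confluence for lamb}}. Two points need attention. First, erasure through $\proj$: contracting a $\proj$ redex keeps both components in the generated pair but discards redexes on each individual branch, so I must check that the grid $N_1,\dots,N_4$ is still produced when $R'$ and $R''$ interact with the same $\proj$, exactly the situation the pair-indexed diamond was designed to cover. Second, the degenerate cases in which some one-step reduct $M_i'$ is already in $\mathrm{SNF}$, so that ``$M_i' \rightarrow N_j, N_k$'' must be read with the convention that a normal component contributes height $0$; here the diamond closes trivially on that side and the $\max$ computation goes through unchanged. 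Provided these boundary situations are handled uniformly with the non-degenerate grid, the induction delivers $\s{\pi'} = \s{\pi''}$, which is exactly the Uniformity statement.
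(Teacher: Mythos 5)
Your proposal is correct and follows essentially the same route as the paper's proof: both arguments hinge on the four-corner one-step diamond established for the confluence theorem (producing the grid $N_1,\dots,N_4$) together with the observation that the nested maxima $\max(\max(d_1,d_2),\max(d_3,d_4))$ and $\max(\max(d_1,d_3),\max(d_2,d_4))$ coincide. The only (inessential) difference is the induction measure — you induct on the length of a longest reduction from $M$, justified via strong normalization from Weighted Subject Reduction, whereas the paper inducts on $\s{\pi'}+\s{\pi''}$ — and both treatments relegate the degenerate/overlapping-redex cases to a brief remark.
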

\begin{proof}[sketch]
Reductions take place at a \enquote{surface level}, i.e.~never in the scope of any $\oc$, so  that redexes are never duplicated or erased.
\end{proof}
\noindent
The above lemma says that an upper bound on $M \Rightarrow \mathscr{D}$
exists on the length of each non-deterministic branching of all possible 
reduction strategies applied to $M$. That bound is limited by a polynomial 
in the size of $M$:

\begin{lemma}[Strong polystep soundness]
\label{lem: polystep soundness complexity} 
Let $\mathcal{D}\triangleleft \Gamma \vdash M: \sigma$  and $\pi: M \Rightarrow \mathscr{D}$. Then:
\begin{enumerate}
\item 
\label{enum: soundness complexity 1} 
$\vert \pi \vert  \leq \vert M \vert^{\mathrm{d}(\mathcal{D})+1}$.
\item 
\label{enum: soundness complexity 2} 
$\vert N \vert \leq \vert M \vert^{\mathrm{d}(\mathcal{D})+1}$, 
for every $N \rightarrow N', N''$ premise of $s2$  in $\pi$.
\end{enumerate}
\end{lemma}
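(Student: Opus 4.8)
The plan is to bound the size $\vert \pi\vert$ of the reduction tree by the weight $\mathrm{w}(\mathcal{D}, r)$ for a suitable parameter $r$, and then to turn that weight into the stated polynomial via \textbf{Lemma~\ref{lem: properties of weight}}. First I would fix $r \triangleq \mathrm{rk}(\mathcal{D})$; by item~\ref{enum: properties of weight 1} of \textbf{Lemma~\ref{lem: properties of weight}} this satisfies $r \leq \vert M\vert$, and being $\geq \mathrm{rk}(\mathcal{D})$ it is admissible for \textbf{Theorem~\ref{thm: weighted subject reduction}}. Since $r$ must stay admissible all along $\pi$, the invariant I would maintain, threading the derivations produced by Weighted Subject reduction, is: every term $N$ occurring as the source $N \rightarrow N', N''$ of an $s2$ node of $\pi$ carries a derivation $\mathcal{E} \triangleleft \Gamma \vdash N: \sigma$ with $\mathrm{rk}(\mathcal{E}) \leq r$ and $\mathrm{w}(\mathcal{E}, r) \leq \mathrm{w}(\mathcal{D}, r)$. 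The rank clause is precisely what lets me re-apply \textbf{Theorem~\ref{thm: weighted subject reduction}} with the \emph{same} $r$ at each step, and it rests on surface reduction never increasing the $m$-rank of a derivation; the weight clause is immediate, since each reduction strictly lowers the weight.

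With this invariant I would prove the sharper claim ``if $\mathcal{E} \triangleleft \Gamma \vdash N: \sigma$ with $\mathrm{rk}(\mathcal{E}) \leq r$ and $\rho: N \Rightarrow \mathscr{E}$, then $\vert \rho\vert \leq \mathrm{w}(\mathcal{E}, r)$'' by induction on $\vert \rho\vert$. When $\rho$ is $s1$ the bound holds trivially, as $\vert \rho\vert = 0$. When $\rho$ is $s2$, with premises $N \rightarrow N_1, N_2$ and $\rho_i: N_i \Rightarrow \mathscr{E}_i$, \textbf{Theorem~\ref{thm: weighted subject reduction}} yields $\mathcal{E}_i \triangleleft \Gamma \vdash N_i: \sigma$ with $\mathrm{w}(\mathcal{E}_i, r) < \mathrm{w}(\mathcal{E}, r)$ and, by the invariant, $\mathrm{rk}(\mathcal{E}_i) \leq r$; the induction hypothesis then gives $\vert \rho_i\vert \leq \mathrm{w}(\mathcal{E}_i, r) \leq \mathrm{w}(\mathcal{E}, r) - 1$, whence $\vert \rho\vert = \max(\vert \rho_1\vert, \vert \rho_2\vert) + 1 \leq \mathrm{w}(\mathcal{E}, r)$. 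Instantiating at $\mathcal{E} = \mathcal{D}$ and $N = M$ gives $\vert \pi\vert \leq \mathrm{w}(\mathcal{D}, r)$.

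For item~\ref{enum: soundness complexity 1} I would then chain items~\ref{enum: properties of weight 2} and~\ref{enum: properties of weight 3} of \textbf{Lemma~\ref{lem: properties of weight}} with $r \leq \vert M\vert$:
\begin{equation*}
\vert \pi\vert \leq \mathrm{w}(\mathcal{D}, r) \leq r^{\mathrm{d}(\mathcal{D})} \cdot \mathrm{w}(\mathcal{D}, 1) = r^{\mathrm{d}(\mathcal{D})} \cdot \vert M\vert \leq \vert M\vert^{\mathrm{d}(\mathcal{D})+1} .
\end{equation*}
For item~\ref{enum: soundness complexity 2}, I would take any source $N$ of an $s2$ node together with its derivation $\mathcal{E}$ from the invariant. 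Since $\mathrm{w}(\mathcal{E}, r)$ is monotone in $r$ (the parameter enters only positively, through the $sp$ clause of the weight), item~\ref{enum: properties of weight 3} of \textbf{Lemma~\ref{lem: properties of weight}} gives $\vert N\vert = \mathrm{w}(\mathcal{E}, 1) \leq \mathrm{w}(\mathcal{E}, r) \leq \mathrm{w}(\mathcal{D}, r) \leq \vert M\vert^{\mathrm{d}(\mathcal{D})+1}$, the middle inequality being the weight clause of the invariant and the last the bound just derived.

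The main obstacle I anticipate is the rank-stability invariant $\mathrm{rk}(\mathcal{E}) \leq r$: without it the parameter $r = \mathrm{rk}(\mathcal{D})$ chosen at the root need not dominate the rank at a descendant of $\pi$, and \textbf{Theorem~\ref{thm: weighted subject reduction}} could not be invoked with a fixed $r$ throughout. Checking that surface reduction does not raise the $m$-rank --- even though the forgetful reduction $(\lambda \oc x. M)\oc N \rightarrow M\lbrace \oc N/x\rbrace$ may duplicate $N$ and thereby enlarge the term --- is the delicate point, and it is exactly what legitimises the uniform choice of $r$ across the whole tree $\pi$.
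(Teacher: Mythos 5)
Your proof is correct and follows essentially the same route as the paper's: an induction on $\vert\pi\vert$ establishing $\vert\pi\vert \leq \mathrm{w}(\mathcal{D},r)$ and $\vert N\vert \leq \mathrm{w}(\mathcal{D},r)$ for the sources of $s2$ nodes via \textbf{Theorem~\ref{thm: weighted subject reduction}}, followed by \textbf{Lemma~\ref{lem: properties of weight}} to convert the weight into $\vert M\vert^{\mathrm{d}(\mathcal{D})+1}$. The rank-stability invariant you single out is indeed needed and is the one point the paper leaves tacit; it holds because the derivations produced by \textbf{Lemma~\ref{lem: weighted substitution}} and the reduction cases only reuse $m$-instances (hence $m$-ranks) already present in the input derivations, so $\mathrm{rk}(\mathcal{D}_i) \leq \mathrm{rk}(\mathcal{D}) \leq r$ propagates along $\pi$.
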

\begin{proof}
Let $\mathcal{D}\triangleleft \Gamma \vdash M: \sigma$. 
{\textbf{Lemma~\ref{lem: properties of weight}.\ref{enum: properties of weight 1}-\ref{enum: properties of weight 3}}} implies:
\allowdisplaybreaks
\begin{equation*}
 \mathrm{w}(\mathcal{D}, \mathrm{rk}(\mathcal{D}))\leq \mathrm{w}(\mathcal{D}, \vert M \vert) \leq \vert M \vert^{\mathrm{d}(\mathcal{D})} \cdot \mathrm{w}(\mathcal{D}, 1) = \vert M \vert^{\mathrm{d}(\mathcal{D})}\cdot \vert M \vert = \vert M \vert ^{\mathrm{d}(\mathcal{D})+1} .
\end{equation*}
By induction on the size of $ \pi: M \Rightarrow \mathscr{D } $,
for all  $r \geq \rk{\mathcal{D}}$, we can prove:
\begin{enumerate}[i.]
\item 
\label{enum: simplify 1} 
$\vert \pi \vert \leq \mathrm{w}(\mathcal{D},r)$;
\item  
\label{enum: simplify 2}
$\vert N\vert \leq \mathrm{w}(\mathcal{D}, r)$, 
for every  $N \rightarrow N', N''$ 
premise of $s2$ in $\pi$. 
\end{enumerate}
If the last rule of $\pi$ is $s1$, then 
both~{{\ref{enum: simplify 1}}} 
and~{{\ref{enum: simplify 2}}} here above hold trivially. 
Otherwise, the last rule of $ \pi $ is $s2$ with premises 
$M \rightarrow M_1, M_2$, 
$\pi_1:M_1 \Rightarrow \mathscr{D}_1$, 
and $\pi_2:M_2 \Rightarrow \mathscr{D}_2$.
By {\textbf{Theorem~\ref{thm: weighted subject reduction}}}, there exist 
 $\mathcal{D}_1$ and $\mathcal{D}_2$ 
such that both $\mathcal{D}_i \triangleleft \Gamma \vdash M_i : \sigma$ 
and $\mathrm{w}(\mathcal{D}_i, r)< \mathrm{w}(\mathcal{D}, r)$.  
Concerning point~{{\ref{enum: simplify 1}}}, 
by induction, 
$\vert \pi_i  \vert \leq \mathrm{w}(\mathcal{D}_i,r)$, 
with $i\in \lbrace 1, 2 \rbrace$.  
Hence,
$\vert \pi \vert= \max (\vert \pi_1 \vert, \vert \pi_2 \vert)+1 \leq  
 \max (\mathrm{w}(\mathcal{D}_1, r), 
       \mathrm{w}(\mathcal{D}_2,r))+1\leq  
 \mathrm{w}(\mathcal{D},r)$.
Concerning point~{{\ref{enum: simplify 2}}}, 
$\vert N \vert \leq \mathrm{w}(\mathcal{D}_i, r)<\mathrm{w}(\mathcal{D}, r)$
holds by induction, for all $i \in \lbrace 1, 2\rbrace$ and 
for all $N \rightarrow N'_i, N''_i$, premise of some $s2$ in $\pi_i$. 
Finally, 
by {\textbf{Lemma~\ref{lem: properties of weight}.\ref{enum: properties of weight 3}}}, we have 
$\vert M \vert = \mathrm{w}(\mathcal{D}, 1)\leq \mathrm{w}(\mathcal{D}, r)$. 
\end{proof} 

\begin{remark} 
\label{rem: reduction rules simulated PTM}  
From~\cite{terui2001light}, we know that a Turing Machine simulates a 
$\beta$-reduction $M \rightarrow_\beta M'$ in a time bounded by
$\mathcal{O}(\vert M \vert^2)$. 
Similarly, for every step $M \rightarrow M_1, M_2$ in 
{\normalfont\textbf{Definition~\ref{defn: surface reduction for plamb}}},  
a {\PTM} exists which, receiving an encoding of
$M$ as input, produces an encoding of $M_i$ as output with 
probability a half, in a time bounded by $\mathcal{O}(\vert M \vert^2)$.
\lipicsEnd
\end{remark} 

\begin{theorem}[Probabilistic Polytime Soundness of \STAP] 
\label{thm: polytime soundness complexity}
Let $\mathcal{D}\triangleleft \Gamma \vdash M: \sigma$ 
be such that $M \Rightarrow \mathscr{D}$. 
A {\normalfont\PTM} $ \mathcal{P} $ exists such that,
for all $S \in \text{supp}(\mathscr{D})$:
\begin{itemize}
\item $ \mathcal{P} $ takes an encoding of $M$ as input
and produces an encoding of the surface normal form $S$ 
as output, with probability $\mathscr{D}(S)$, and

\item $ \mathcal{P} $ runs in a time bounded by
$\mathcal{O}(\vert M \vert^{3( \mathrm{d}(\mathcal{D})+1)})$,
i.e. $ \mathcal{P} $ is a {\normalfont\pPTM}.
\end{itemize}  
\end{theorem}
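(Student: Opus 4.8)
The plan is to build a \PTM $\mathcal{P}$ that, on input (an encoding of) $M$, simulates the multi-step surface reduction $\Rightarrow$ by walking down a \emph{single} root-to-leaf branch of a derivation tree $\pi : M \Rightarrow \mathscr{D}$. At each step $\mathcal{P}$ locates a surface redex of the current term $N$ by a \emph{fixed} deterministic strategy (say the leftmost one), computes the pair $N \rightarrow N_1, N_2$ prescribed by {\normalfont\textbf{Definition~\ref{defn: surface reduction for plamb}}}, tosses one fair coin, and continues with $N_1$ or $N_2$ accordingly; this mirrors exactly one application of rule $s2$. For a $\proj\langle P, Q\rangle$ redex the two reducts $P, Q$ genuinely differ, so the toss realizes the probabilistic choice; for every other redex one has $N_1 = N_2$ and the toss is immaterial, but counting it keeps the correspondence with $s2$ exact. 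The machine halts and outputs the current term as soon as no surface redex is left, i.e.~the term lies in $\mathrm{SNF}$, which corresponds to rule $s1$. That a fixed redex-selection strategy is legitimate is guaranteed by {\normalfont\textbf{Theorem~\ref{thm: confluence for lamb}}}: the induced output distribution does not depend on the strategy and equals the unique $\mathscr{D}$.

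For correctness of the probabilities, I would read $\pi$, through rules $s1$ and $s2$ of {\normalfont\textbf{Figure~\ref{fig: operational semantics for staoplus}}}, as a finite binary tree whose leaves are the $s1$-instances $S \Rightarrow S$ and whose internal nodes are $s2$-instances, each weighting its two subtrees by $\frac{1}{2}$. A straightforward induction on $\pi$ then shows that a leaf labelled by $S$ at depth $k$ contributes $\left(\frac{1}{2}\right)^{k}$ to $\mathscr{D}(S)$, and that $\mathscr{D}(S)$ is the sum of these contributions over all leaves labelled $S$. Since $\mathcal{P}$ tosses one fair, independent coin per step, it reaches a fixed leaf at depth $k$ with probability exactly $\left(\frac{1}{2}\right)^{k}$; summing over all leaves labelled $S$ shows that $\mathcal{P}$ outputs $S$ with probability $\mathscr{D}(S)$, as required. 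The number of tosses performed along the branch, and hence the halting behaviour, is well defined because {\normalfont\textbf{Lemma~\ref{lem: all derivations of Rightarrow have same size}}} fixes $\s{\pi}$ independently of the chosen strategy.

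For the time bound, I would combine the two parts of {\normalfont\textbf{Lemma~\ref{lem: polystep soundness complexity}}} with {\normalfont\textbf{Remark~\ref{rem: reduction rules simulated PTM}}}. Part~\ref{enum: soundness complexity 1} bounds the height $\s{\pi}$ of the tree, and hence the number of rewriting steps along $\mathcal{P}$'s branch, by $\s{M}^{\mathrm{d}(\mathcal{D})+1}$. Part~\ref{enum: soundness complexity 2} bounds the size of every intermediate term $N$ reached along the way by the same quantity $\s{M}^{\mathrm{d}(\mathcal{D})+1}$. By {\normalfont\textbf{Remark~\ref{rem: reduction rules simulated PTM}}}, a single step $N \rightarrow N_i$ (locating the redex, rewriting it, and tossing the coin) costs a \PTM time $\mathcal{O}(\s{N}^{2}) = \mathcal{O}(\s{M}^{2(\mathrm{d}(\mathcal{D})+1)})$; the ancillary cost of testing membership in $\mathrm{SNF}$ is polynomial in $\s{N}$ and is absorbed. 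Multiplying the per-step cost by the number of steps yields $\s{M}^{\mathrm{d}(\mathcal{D})+1}\cdot \mathcal{O}(\s{M}^{2(\mathrm{d}(\mathcal{D})+1)}) = \mathcal{O}(\s{M}^{3(\mathrm{d}(\mathcal{D})+1)})$, so $\mathcal{P}$ is a \pPTM, matching the claimed bound.

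The step I expect to be the main obstacle is the probabilistic bookkeeping rather than the size estimates: one must ensure that following a \emph{single} random branch, instead of unfolding the whole tree, reproduces $\mathscr{D}$ exactly, and this is delicate precisely because $\pi$ need not be a complete binary tree (leaves labelled by the same $S$ may sit at different depths, and the same $S$ may arise from several leaves). Here the interplay of {\normalfont\textbf{Theorem~\ref{thm: confluence for lamb}}} and {\normalfont\textbf{Lemma~\ref{lem: all derivations of Rightarrow have same size}}} is essential: confluence lets $\mathcal{P}$ commit to one deterministic redex-selection policy without altering the induced distribution, and uniformity guarantees that the traversal terminates within a step-count bounded uniformly over all branches, so that the time bound holds on every run. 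By contrast, the size estimates are a routine multiplication of the bounds already supplied by {\normalfont\textbf{Lemma~\ref{lem: polystep soundness complexity}}} and {\normalfont\textbf{Remark~\ref{rem: reduction rules simulated PTM}}}.
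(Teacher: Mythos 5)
Your proposal is correct and follows essentially the same route as the paper's proof: bound the per-step simulation cost via Lemma~\ref{lem: polystep soundness complexity}.\ref{enum: soundness complexity 2} and Remark~\ref{rem: reduction rules simulated PTM}, bound the number of steps along a branch via Lemma~\ref{lem: polystep soundness complexity}.\ref{enum: soundness complexity 1}, and multiply. Your additional care about the probabilistic bookkeeping (one fair coin per $s2$ instance, leaves at depth $k$ contributing $(\frac{1}{2})^k$, and the appeal to Theorem~\ref{thm: confluence for lamb} and Lemma~\ref{lem: all derivations of Rightarrow have same size} to justify a fixed redex-selection policy) is left implicit in the paper's terser argument but is entirely consistent with it.
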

\begin{proof}
By {\textbf{Lemma~\ref{lem: polystep soundness complexity}.\ref{enum: soundness complexity 2}}} and  
\textsf{Remark~\ref{rem: reduction rules simulated PTM}}, 
each reduction step $P\rightarrow P_1, P_2$, premise of $s2$ in $\pi:M \Rightarrow \mathscr{D}$, 
can be simulated by a \PTM that runs in a time
bounded by $\mathcal{O}(\vert M \vert ^{2(\mathrm{d}(\mathcal{D})+1)})$.  
By 
{\textbf{Lemma~\ref{lem: polystep soundness complexity}.\ref{enum: soundness complexity 1}}}
there can be at most $\mathcal{O}(\vert M \vert^{\mathrm{d}(\mathcal{D})+1})$ 
instances of $s2$ in $\pi$. 
So, a {\PTM} exists that simulates the evaluation of 
$M$, running in time bounded by
$\mathcal{O}(\vert M \vert^{3( \mathrm{d}(\mathcal{D})+1)})$.
\end{proof}
\section{Probabilistic Polytime Completeness of \STAP}
\label{section:Probabilistic Polytime Completeness}
We prove that the terms of $ \plamb $ with a type in \STAP
are expressive enough to encode any \emph{polynomial time} Probabilistic 
Turing Machine (\pPTM), i.e. a Probabilistic Turing Machine (\PTM) whose running time  is bounded by some   polynomial in the input size. This allows us to show that \STAP is complete with respect to the functions computed  by the \pPTM.
Typically, encoding a Turing Machine by means of ($ \lambda$-)terms
requires to represent configurations, 
transitions between configurations, 
a phase of initialization, and one of output extraction. 
Here we focus on the main details of the key step to get completeness,
i.e. the definition of the transition function of
any {\pPTM} in \STAP. 
 
To that purpose, we recall that tensors 
($\otimes$) and  unit ($\mathbf{1}$) exist 
in \STAP as second-order types (see~\cite{mairson2003computational}
for example.) So, inference rules for $\otimes$ and $\mathbf{1}$ are 
derivable and we can fairly assume that the reduction rules
$\mathtt{let}\ \mathbf{I}\mathtt{\ be\ }\mathbf{I}
 \mathtt{\ in\ }N \rightarrow_\beta N$ and 
$\mathtt{let}\ M_1\otimes  M_2\mathtt{\ be\ }x_1\otimes x_2 
 \mathtt{\ in\ }N\rightarrow_\beta N[M_1/x_1, M_2/x_2]$
are available. Given tensors and unit,
the types and terms of \STAP:
 \begin{equation}\label{eqn: booleans}
 \mathbf{B}\triangleq \forall \alpha.( \alpha\multimap \alpha \multimap \alpha \otimes \alpha ) \qquad \qquad \underline{0}\triangleq \lambda xy. x\otimes y \qquad  \qquad \underline{1}\triangleq \lambda xy. y \otimes x
 \end{equation}
can represent booleans~\cite{mairson2003computational}. 
As a notation, $\mathbf{B}^n$ stands for 
$\mathbf{B}\otimes \overset{n}{\ldots}\otimes \mathbf{B}$ and 
$\mathtt{0}^n$ (resp.~$\mathtt{1}^n$) 
for $\mathtt{0}\otimes \overset{n}{\ldots}\otimes \mathtt{0}$ 
(resp.~$\mathtt{1}\otimes \overset{n}{\ldots}\otimes \mathtt{1}$.)
 
\begin{figure}[t]
	\def\ScoreOverhang{0cm}
	\def\defaultHypSeparation{\hskip .2cm}
	\AxiomC{}
	\RightLabel{$ax$}
	\UnaryInfC{$x: \mathbf{B}^{n+1} \vdash x: \mathbf{B}^{n+1} $}
	\AxiomC{\vdots \phantom{\quad ($i =0,1)$}}
	\noLine
	\UnaryInfC{$x_i:\mathbf{B}^{n+1}  \vdash \underline{\delta_i}\, x_i:  \mathbf{B}^{n+2}  \quad (i =0,1)$}
	\AxiomC{\vdots}
	\noLine
	\UnaryInfC{$\vdash \mathtt{0}^{n+1}  : \mathbf{B}^{n+1}$}
	\RightLabel{$\with$I}
	\TrinaryInfC{$x: \mathbf{B}^{n+1} \vdash\mathtt{copy}^{\mathtt{0}^{n+1}} x \mathtt{ \ as \ }x_0, x_1 \mathtt{ \ in  \ }\langle \underline{\delta_0}\,x_0, \underline{\delta_1}\,x_1\rangle :  \mathbf{B}^{n+2} \with  \mathbf{B}^{n+2} $}
	\RightLabel{$\with$E}
	\UnaryInfC{$x: \mathbf{B}^{n+1} \vdash \mathtt{proj} \, (\mathtt{copy}^{\mathtt{0}^{n+1}} x \mathtt{ \ as \ }x_0, x_1 \mathtt{ \ in  \ }\langle \underline{\delta_0}\,x_0, \underline{\delta_1}\,x_1\rangle ):  \mathbf{B}^{n+2} $}
	\RightLabel{$\multimap$I$l$}
	\UnaryInfC{$\vdash \lambda x.\mathtt{proj} \, (\mathtt{copy}^{\mathtt{0}^{n+1}} x \mathtt{ \ as \ }x_0, x_1 \mathtt{ \ in  \ }\langle \underline{\delta_0}\,x_0, \underline{\delta_1}\,x_1\rangle ): \mathbf{B}^{n+1} \multimap \mathbf{B}^{n+2} $}
	\DisplayProof
	\caption{Derivation of $\underline{\delta_{\mathcal{P}}}$ in $\STAP$}
	\label{fig: derivation of probabilistic transition function}
\end{figure}

We recall that the transition function $\delta_\mathcal{P}$ of
a {\PTM} $\mathcal{P}$ can be seen as superposing the transition functions 
$\delta_0$ and $\delta_1$ of two \emph{deterministic} Turing Machines;
every computation step of $\mathcal{P}$ selects one between 
$\delta_0$ and $\delta_1$ with probability $\frac{1}{2}$. 
So, let $\delta_0, \delta_1: Q \times \lbrace0, 1 \rbrace 
\longrightarrow Q \times \lbrace0, 1 \rbrace \times \lbrace \text{left}, 
\text{right} \rbrace$ be the transition functions of two deterministic 
Turing Machines $ \mathcal{M}_1 $ and $ \mathcal{M}_2 $
with $Q$ containing at most $2^n$ states.
Following~\cite{gaboardi2009light}, these transition functions can be encoded by  suitable terms  $\underline{\delta_1}$ and $\underline{\delta_2}$  of type $\mathbf{B}^{n+1} \multimap \mathbf{B}^{n+2}$. We can define:
\begin{equation}
\label{equation: transition function PTM}
\underline{\delta_{\mathcal{P}}}
\triangleq 
\lambda x.\mathtt{proj}\,
(\mathtt{copy}^{\mathtt{0}^{n+1}} x \mathtt{\ as\ } x_0, x_1 
 \mathtt{\ in\ }\langle 
 \underline{\delta_0}\,x_0, \underline{\delta_1}\,x_1\rangle ) 
\enspace ,
\end{equation}
the transition function of $\mathcal{P}$, whose derivation in \STAP is in 
\textsf{\textbf{Figure~\ref{fig: derivation of probabilistic transition function}}}. Let $ \underline{q}\otimes \underline{p}  $ be a pair that encodes 
	the configuration  $(q, b) \in Q \times \{ 0,1\}$ 
	of a \PTM.
	Let $\underline{\delta_i}  \, \underline{q}\otimes \underline{p}  
	\Rightarrow   \underline{q_i}\otimes \underline{b_i}\otimes \underline{m_i}$ for $i \in \lbrace 1,2\rbrace $.
	Then $\underline{\delta_{\mathcal{P}}}\,  \underline{q}\otimes \underline{p}  \Rightarrow \textstyle{ \frac{1}{2}}\cdot  \underline{q_0}\otimes \underline{b_0}\otimes \underline{m_0} +\textstyle{\frac{1}{2}}\cdot  \underline{q_1}\otimes \underline{b_1}\otimes  \underline{m_1}  $ can be easily derived.

Having an encoding for  booleans, we can now show how to represent boolean strings in \STAP. For every $i \geq 1$,
the \emph{indexed type} $\mathbf{S}_i$ and the 
\emph{indexed $n$-ary boolean strings} $ \underline{s}_i$,
whose type is $\mathbf{S}_i$, exist in \STAP:
\begin{equation}\label{eqn: indexed strings}
\begin{aligned}
\mathbf{S}_i &\triangleq \forall \alpha. \oc ^i (\mathbf{B} \multimap \alpha \multimap \alpha) \multimap (\alpha \multimap \alpha)\\
\underline{s}_i &\triangleq \lambda \oc c. \lambda z. 
\mathtt{d}^i(c)\,\underline{b_1}(\ldots(\mathtt{d}^i(c)\,\underline{b_n} z ) \ldots)  
&& \text{where }  s=b_1 \ldots b_n \in 
\lbrace 0, 1 \rbrace^n \text{ and } n \in \mathbb{N}
\enspace .
\end{aligned}
\end{equation}  
If $n=1$, we write $\mathbf{S}$ (resp.~$  \underline{s}$) in place of 
$\mathbf{S}_1$ (resp.~$ \underline{s}_1$).
The need to introduce families of terms and families of types in~\eqref{eqn: indexed strings}  is due to the inference rule $m$, as already noticed in~\cite{gaboardi2009light}.

The following result states that \STAP characterizes the functions computed in polynomial time by a \PTM.

\begin{theorem}[Probabilistic Polytime Completeness of \STAP]
\label{thm: completeness complexity PTM} 
Let $\mathcal{P}$ be a \PTM. If:
\begin{enumerate}
\item \label{thm: completeness complexity PTM 1}
$\mathcal{P}$ runs in $p(n)$-time, for some polynomial $p: \mathbb{N}\to \mathbb{N}$ with  $\deg(p)=d_1$, and
\item \label{thm: completeness complexity PTM 2}
$\mathcal{P}$ runs in $q(n)$-space, for some polynomial $q: \mathbb{N}\to \mathbb{N}$ with $\deg(q)=d_2$, and
\item \label{thm: completeness complexity PTM 3}
for every $s\in \lbrace 0,1 \rbrace^*$,
$\mathscr{S}_s: \lbrace 0, 1 \rbrace^* \rightarrow [0, 1]$ 
is the probabilistic distribution of the strings that 
$\mathcal{P}$ outputs when applied to input $s$,
\end{enumerate}
then, a term $\underline{\mathcal{P}}$ with type 
$\oc^{\max(d_1,  d_2, 1)+1}\mathbf{S}\multimap \mathbf{S}_{2d_2+1}$ 
exists in {\normalfont\STAP} such that, for every $s\in \lbrace0,1 \rbrace^*$, there exists a surface distribution $\mathscr{D}_s$  satisfying the following conditions:
\begin{enumerate}[i.]
\item \label{enum: completeness 1} 
$\underline{\mathcal{P}}\,  (\oc ^{\max( d_1, d_2, 1)+1} \underline{s} )
 \Rightarrow \mathscr{D}_s$;
\item \label{enum: completeness 2} 
$\mathscr{D}_s(\underline{s'})= \mathscr{S}_s(s')$,
for every $s'\in \lbrace0,1 \rbrace^*$.
\end{enumerate}
\end{theorem}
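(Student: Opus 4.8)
The plan is to follow the standard recipe for encoding a Turing machine in a light logic, adapted to the probabilistic setting, by composing the transition-function construction already given in~\eqref{equation: transition function PTM} with the iteration machinery supplied by the indexed string types of~\eqref{eqn: indexed strings}. The term $\underline{\mathcal{P}}$ decomposes into three phases threaded together by the derived $\otimes$/$\mathbf{1}$ rules: an \emph{initialization} phase turning the input $\underline{s}$ into an initial configuration, an \emph{iteration} phase that applies the probabilistic transition $\underline{\delta_{\mathcal{P}}}$ exactly $p(n)$ times (halting states acting as fixed points, so padding the run to full length is harmless), and an \emph{extraction} phase reading the output string off the final configuration. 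First I would fix a representation of whole configurations: since $\mathcal{P}$ visits at most $q(n)$ tape cells and has at most $2^n$ states, a configuration is a term of type $\mathbf{B}^{k}$ with $k$ of order $q(n)$, packaging tape, head position and state. On this representation the \emph{local} map $\underline{\delta_{\mathcal{P}}}$ is lifted to a term $\mathtt{step}$ sending one configuration to the next, preserving the configuration type and performing exactly one probabilistic choice through $\proj$. As already observed right after~\eqref{equation: transition function PTM}, a single application of $\underline{\delta_{\mathcal{P}}}$, hence of $\mathtt{step}$, multi-step reduces to the even mixture $\frac{1}{2}\cdot(\mathrm{next}_0)+\frac{1}{2}\cdot(\mathrm{next}_1)$ of the two deterministic successors, i.e.\ to exactly one fair coin flip of $\mathcal{P}$.

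Next comes the iteration, where the exponentials are consumed. A string of type $\mathbf{S}$ drives one linear iteration of $\mathtt{step}$, so reaching a number of iterations polynomial of degree $d_1$ requires nesting $d_1$ such iterations; this is precisely what the $\oc^{\max(d_1,d_2,1)+1}$ on the input type pays for, each additional leading $\oc$ together with $sp$, $m$, $\multimap$I$e$ and dereliction licensing enough copies of the iterator to reach the $p(n)$ bound while staying inside the polynomial regime inherited from \STA. Because $q$ has degree $d_2$, the configuration width and hence the index of the produced output string grows to $\mathbf{S}_{2d_2+1}$, which the extraction phase returns. Apart from the $\with$I/$\with$E block hidden inside $\underline{\delta_{\mathcal{P}}}$, every typing derivation here is assembled exactly as in the deterministic \STA completeness argument of~\cite{gaboardi2009light}, so I would import that bookkeeping wholesale and only check that the new additive block keeps all relevant intermediate types $\forall\oc$-lazy, as the rules $\with$I and $\with$E demand.

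Correctness of the distribution is established last. By Weighted Subject reduction (\textbf{Theorem~\ref{thm: weighted subject reduction}}) and the resulting strong normalization, the term $\underline{\mathcal{P}}\,(\oc^{\max(d_1,d_2,1)+1}\underline{s})$ admits some surface distribution $\mathscr{D}_s$, and by \textbf{Theorem~\ref{thm: confluence for lamb}} this $\mathscr{D}_s$ is unique, so I may compute it along any convenient strategy. Choosing the strategy that fires the simulated coins one iteration at a time, an induction on the step count $t$ shows that after $t$ applications of $\mathtt{step}$ the term reduces to the distribution placing weight $2^{-t}$ on each length-$t$ sequence of coin outcomes, i.e.\ to the law of $\mathcal{P}$ after $t$ steps. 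Instantiating $t=p(n)$ and composing with extraction then yields condition~\ref{enum: completeness 1} and $\mathscr{D}_s(\underline{s'})=\mathscr{S}_s(s')$ for every $s'$, which is condition~\ref{enum: completeness 2}.

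The main obstacle I expect is the type-level accounting of the exponentials: matching the $\max(d_1,d_2,1)+1$ leading $\oc$'s on the input and the index $2d_2+1$ on the output against what the nested iterations actually consume and produce, while ensuring that every type fed to the $\with$ rules remains $\forall\oc$-lazy. A secondary but delicate point is verifying that the $\oc$-discipline of surface reduction forces each simulated coin to be flipped exactly once per step (cf.\ \textbf{Example~\ref{exmp: non-confluence}}), so that the uniqueness of $\mathscr{D}_s$ genuinely reflects the intended $2^{-p(n)}$ weights rather than an artifact of over- or under-duplicating $\coin$-like subterms inside the iterator.
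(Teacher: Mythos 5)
Your overall architecture (initialization, iterated probabilistic transition built from $\underline{\delta_{\mathcal{P}}}$ in~\eqref{equation: transition function PTM}, extraction, followed by uniqueness of the surface distribution via \textbf{Theorem~\ref{thm: confluence for lamb}} and an induction on the number of coin flips) is the same as the paper's. But there is a genuine gap in your choice of configuration representation, and that choice is precisely where the difficulty of the encoding lives. You propose to type a configuration as $\mathbf{B}^{k}$ with $k$ of order $q(n)$. The theorem requires a \emph{single} term $\underline{\mathcal{P}}$ of the \emph{fixed} type $\oc^{\max(d_1,d_2,1)+1}\mathbf{S}\multimap \mathbf{S}_{2d_2+1}$ that works for every input $s$; a configuration type whose width depends on $|s|$ forces $\mathtt{step}$, the initializer and the extractor to be families of terms indexed by the input length, not single terms, so you would only obtain a non-uniform family $\{\underline{\mathcal{P}}_n\}_n$. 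Moreover, the Church-numeral-driven iteration needs the step function to have type $A\multimap A$ for one fixed $A$ so that the numeral $\underline{p(|s|)}$ can be applied to $\oc^{\,\cdot}\,\mathtt{step}$, and the initialization phase would have to produce, at a fixed type, a tuple whose arity is computed at runtime from $q(|s|)$ --- which is impossible. The paper avoids all of this by encoding the two halves of the tape as Church-style iterators inside the polymorphic type $\mathbf{PTM}^k_i\triangleq\forall\alpha.\oc^i(\mathbf{B}\multimap\alpha\multimap\alpha)\multimap((\alpha\multimap\alpha)^2\otimes\mathbf{B}^k)$, where $k$ depends only on the (constant) number of machine states: the tape length is absorbed into the term structure, not into the type, and the transition is realized as $\mathtt{tr}_i=\mathtt{com}_i\circ\mathtt{decom}_i:\mathbf{PTM}^k_i\multimap\mathbf{PTM}^k_i$ in the Mairson--Terui style. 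Relatedly, your phrase ``at most $2^n$ states'' reads as if $n$ were the input length; in the paper $n$ there is a constant of the machine, which is what keeps $\mathbf{B}^{n+1}$ in \textbf{Figure~\ref{fig: derivation of probabilistic transition function}} a legitimate fixed type.

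A secondary imprecision: the iteration count is not obtained by ``nesting $d_1$ string-driven iterations'' but by first computing the numeral $P=\underline{p}\{\oc^{d_1}(\mathtt{len}_1\,\mathtt{d}^{d_1}(s_p))/x\}$ of type $\mathbf{N}_{2d_1+1}$ (via \textbf{Theorem~\ref{thm: polynomial completeness}}) and then applying it to $\oc^{2d_1+1}\mathtt{tr}_{2d_2+1}$; the budget $\oc^{\max(d_1,d_2,1)+1}$ on the input pays for the copies of $\underline{s}$ consumed by $\mathtt{len}$ inside $P$, inside $Q$, and by the initialization $\mathtt{in}$. This is repairable bookkeeping, whereas the configuration typing is a conceptual fix you would need to make before the rest of your plan goes through.
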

\begin{proof}[sketch]  
The basic scheme of the proof comes from~\cite{gaboardi2009light}. We first encode natural numbers (with indexed types $\mathbf{N}_i$), all polynomials $p:\mathbb{N}\to \mathbb{N} $, and we define a term $\mathtt{len}_i: \mathbf{S}_i \multimap \mathbf{N}_i$ which, when applied to the encoding $\underline{s}_i$ of a boolean string,  returns $\underline{\vert s \vert}_i$ (where $\vert s\vert$ is the size of  $s$). Then, we firstly represent configurations with indexed types $\mathbf{PTM}_{i}$. Secondly, we encode the transition $\mathtt{tr}: \mathbf{PTM}_i\multimap \mathbf{PTM}_i$ between configurations; $\mathtt{tr}$ relies on the transition functions in~\eqref{equation: transition function PTM}. We also introduce the terms $\mathtt{init}_i: \mathbf{N}_i \multimap \mathbf{PTM}_i$ and 
$\mathtt{in}_i: \mathbf{S} \multimap \mathbf{PTM}_i \multimap \mathbf{PTM}_i$.
The former returns a configuration $\mathtt{C}_0$ having an empty tape with  
$n$ cells, when applied to the numeral $\underline{n}_i$.
The latter fills the empty tape of $\mathtt{C}_0$ with the encodings of the booleans in $s$, whenever applied to the encoding $\underline{s}$ of a boolean string and to $\mathtt{C}_0$. 
Finally, we require the term $\mathtt{ext}_i^{\mathbf{S}}: \mathbf{PTM}_i \multimap \mathbf{S}_i$. It extracts the boolean string on the tape when applied to the 
encoding of a configuration. To sum up, we construct $\underline{\mathcal{P}}$ 
in such a way that, when applied to the encoding of a boolean string $s$:
\begin{itemize}
\item it produces the numerals $\underline{p(\vert s \vert)}$ and $\underline{q(\vert s \vert)}$, where $p: \mathbb{N}\to \mathbb{N}$ is the polynomial bounding the running time of $\mathcal{P}$, and $p: \mathbb{N}\to \mathbb{N}$ is the polynomial bounding the working tape of $\mathcal{P}$;
\item by applying the terms $\mathtt{init}_i$ and $\mathtt{in}_i$, it  constructs the encoding of  the initial configuration  having $q(\vert s \vert)$ cells and  the input string $s$ written on the tape;
\item  it iterates  $p(\vert s \vert)$ times the  transition  $\mathtt{tr}$ to the encoding of the  initial configuration, in order to obtain the encoding of the final configuration;
\item by applying the term $\mathtt{ext}_i^{\mathbf{S}}$  to  the  encoding of the final configuration, it  extracts the encoding of the output string.
\end{itemize}
\end{proof}

\section{\STAP characterizes both \PP and \BPP}
\label{section:STAP characterizes PP and contains BPP}
Previous sections show that \STAP is sound and complete with respect to the functions that a \PTM computes in polynomial time. What about probabilistic polytime complexity classes? 

Let us recall a first basic definition from~\cite{arora2009computational}.
\begin{definition}
[Recognizing a language with error probability $ \epsilon $ by a \PTM] 
\label{definition:Recognizing a language with error epsilon}
Let $\epsilon \in [0,1]$.
Let $T :\mathbb{N} \longrightarrow \mathbb{N}$ be a function. 
Let $L \subseteq \lbrace 0,1 \rbrace^*$ be a language. 
We say that a {\normalfont \PTM} $\mathcal{P}$ 
\emph{recognizes $L$ with error 
probability $\epsilon$  in $T(n)$-time} if:
\begin{itemize}
 \item $x \in L$ implies
 $\mathrm{Pr}[\mathcal{P} \text{ accepts } x] 
 \geq 1-\epsilon$;
 
 \item $x \not \in L$ implies 
 $\mathrm{Pr}[\mathcal{P} \text{ rejects }x]\geq 1-\epsilon$;
 
 \item $\mathcal{P}$ answers ``Accept'' or ``Reject'',
 regardless of its random choices, in at most $T(\vert x \vert)$ steps,
 on every input $x$,
\end{itemize}
where 
$\mathrm{Pr}[\mathcal{P} \text{ accepts } x]$  
(resp.~$\mathrm{Pr}[\mathcal{P} \text{ rejects } x]$) denotes the probability that $\mathcal{P}$ terminates in
an accepting (resp.~rejecting) state on input $x$. 
\end{definition} 
\noindent

Being our goal the characterization of  probabilistic complexity classes 
by means of \STAP, we have to set how a term $ M $, with type in \STAP,
accepts a language. The natural counterpart of 
{\textbf{Definition~\ref{definition:Recognizing a language with error epsilon}}} 
is:

\begin{definition}
[Recognizing a language with error probability $\epsilon$ by \STAP ]
\label{defn: error in STAP} 
Let $\epsilon \in [0,1]$. Let $L \subseteq \lbrace 0,1 \rbrace^*$ be a 
language. By definition, $M : \oc ^n \mathbf{S}\multimap \mathbf{B}$ in 
{\normalfont\STAP}, for some $n \in \mathbb{N}$, 
\emph{recognizes $L $ with error probability $\epsilon$} 
whenever, for every $ x \in\lbrace 0, 1\rbrace^* $, the (unique) surface distribution $ \mathscr{D}_x$ such that $M\, \oc^n\!\underline{x}\Rightarrow \mathscr{D}_x$ satisfies the following conditions:
\begin{enumerate}[(1)]
\item \label{defn: error in STAP 1}
if $x \in L$ then  $\mathscr{D}_x(\underline{0})\geq 1-\epsilon$;
		
\item \label{defn: error in STAP 2}
if $x \not \in L$ then $\mathscr{D}_x(\underline{1})\geq 1-\epsilon$.
\end{enumerate}
\end{definition}

\begin{definition}[The class \PP (from~\cite{arora2009computational})]
\label{defn: The class PP} {\ }
{\normalfont\PP} contains all the languages $ L $ for which a 
{\normalfont \pPTM} $\mathcal{P}$ exists that recognizes $ L $
in $p(n)$-time with error probability $0 \leq \epsilon \leq  \frac{1}{2}$,
where $p$ is a polynomial that depends on $\mathcal{P}$ only.
\end{definition}

\begin{theorem}[\STAP characterizes \PP]
\label{thm: STAP characterizes PP} 
{\normalfont\STAP} is sound and complete w.r.t.~{\normalfont\PP}.
\end{theorem}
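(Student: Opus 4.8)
The plan is to prove the two inclusions separately: completeness ($\PP$ is captured by $\STAP$) and soundness ($\STAP$ captures only languages in $\PP$), where the machine/term interface is fixed by \textbf{Definition~\ref{defn: error in STAP}} and \textbf{Definition~\ref{defn: The class PP}}.

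For completeness I would start from a language $L \in \PP$ together with a $\pPTM$ $\mathcal{P}$ recognizing $L$ in $p(n)$-time with error $\epsilon \leq \frac{1}{2}$. Without loss of generality I normalize $\mathcal{P}$ so that, on every computation, it writes a single answer bit ($0$ for ``Accept'', $1$ for ``Reject'') into a designated cell and halts; this standard transformation preserves polynomial time and the output distribution. I would then reuse the construction of \textbf{Theorem~\ref{thm: completeness complexity PTM}} verbatim up to the output-extraction phase, replacing the string extractor $\mathtt{ext}_i^{\mathbf{S}} : \mathbf{PTM}_i \multimap \mathbf{S}_i$ by a boolean extractor $\mathtt{ext}_i^{\mathbf{B}} : \mathbf{PTM}_i \multimap \mathbf{B}$ that reads the answer bit off the final configuration. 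This yields a term $M : \oc^n \mathbf{S} \multimap \mathbf{B}$ (with $n = \max(d_1,d_2,1)+1$) such that $M\,\oc^n\underline{x} \Rightarrow \mathscr{D}_x$, where $\mathscr{D}_x(\underline{0})$ equals the probability that $\mathcal{P}$ accepts $x$ and $\mathscr{D}_x(\underline{1})$ the probability that it rejects, the $\frac{1}{2},\frac{1}{2}$ branching being the one already carried by $\underline{\delta_{\mathcal{P}}}$ in~\eqref{equation: transition function PTM}. The error bound of $\mathcal{P}$ then gives $\mathscr{D}_x(\underline{0}) \geq 1-\epsilon$ when $x \in L$ and $\mathscr{D}_x(\underline{1}) \geq 1-\epsilon$ when $x \notin L$, which is exactly the requirement of \textbf{Definition~\ref{defn: error in STAP}}; hence $M$ recognizes $L$ with error $\epsilon \leq \frac{1}{2}$ and $L$ is captured by $\STAP$.

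For soundness I would start from a term $M : \oc^n \mathbf{S} \multimap \mathbf{B}$ recognizing $L$ with error $\epsilon \leq \frac{1}{2}$, with derivation $\mathcal{D}$ and unique distribution by \textbf{Theorem~\ref{thm: confluence for lamb}}. By \textbf{Theorem~\ref{thm: polytime soundness complexity}}, for each $x$ there is a $\PTM$ simulating $M\,\oc^n\underline{x} \Rightarrow \mathscr{D}_x$ that outputs each surface normal form $S$ with probability $\mathscr{D}_x(S)$; since $M$ is fixed, $\s{\oc^n\underline{x}}$ is linear in $\vert x\vert$ and the depth $\mathrm{d}(\mathcal{D})$ of the whole derivation stays constant (the string encoding $\underline{x}$ only widens, never deepens, the $sp$-nesting), so the running time $\mathcal{O}(\s{M\,\oc^n\underline{x}}^{3(\mathrm{d}(\mathcal{D})+1)})$ is polynomial in $\vert x\vert$. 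I then build a $\pPTM$ $\mathcal{P}$ that runs this simulation, decodes the resulting normal form, and answers ``Accept'' iff it is syntactically $\underline{0}$ and ``Reject'' otherwise; the equality test is a polynomial-time syntactic check and $\mathcal{P}$ always halts because typable terms are strongly normalizing. Correctness follows arithmetically: for $x \in L$, $\mathrm{Pr}[\mathcal{P}\text{ accepts }x] = \mathscr{D}_x(\underline{0}) \geq 1-\epsilon$; for $x \notin L$, $\mathscr{D}_x(\underline{0}) \leq 1 - \mathscr{D}_x(\underline{1}) \leq \epsilon$, whence $\mathrm{Pr}[\mathcal{P}\text{ rejects }x] \geq 1-\epsilon$. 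Thus $\mathcal{P}$ recognizes $L$ with error $\epsilon \leq \frac{1}{2}$ in polynomial time, i.e.~$L \in \PP$.

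The main obstacle I expect lies in the completeness direction, and specifically in keeping the probabilistic correspondence \emph{exact} rather than merely approximate: I must check that swapping $\mathtt{ext}_i^{\mathbf{S}}$ for $\mathtt{ext}_i^{\mathbf{B}}$ leaves the faithfulness clause $\mathscr{D}_s(\underline{s'}) = \mathscr{S}_s(s')$ of \textbf{Theorem~\ref{thm: completeness complexity PTM}} intact, so that $\mathscr{D}_x(\underline{0})$ and $\mathscr{D}_x(\underline{1})$ coincide on the nose with $\mathcal{P}$'s acceptance and rejection probabilities with no probability mass escaping onto other normal forms of type $\mathbf{B}$, and that the boolean extractor is typable within the same indexed-type discipline. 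On the soundness side the only delicate point is the uniform polynomiality of the running time, which I have isolated above as the constancy of $\mathrm{d}(\mathcal{D})$ under input growth.
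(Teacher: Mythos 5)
Your proposal follows essentially the same route as the paper: soundness is obtained by invoking Theorem~\ref{thm: polytime soundness complexity} to simulate the evaluation of $M\,\oc^n\underline{x}$ by a \pPTM{} with the same distribution, and completeness by rerunning the construction of Theorem~\ref{thm: completeness complexity PTM} with the string extractor $\mathtt{ext}_i^{\mathbf{S}}$ replaced by a boolean extractor $\mathtt{ext}_i^{\mathbf{B}}$ reading the accept/reject outcome off the final configuration. The additional points you flag (the constancy of $\mathrm{d}(\mathcal{D})$ as $x$ grows, hence uniform polynomiality in $\vert x\vert$, and the exactness of the probability correspondence after swapping extractors) are correct and in fact make explicit two details the paper's own two-paragraph proof leaves implicit.
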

\begin{proof}
Concerning the soundness of \STAP w.r.t.~\PP, 
let us fix $ M $ with type in \STAP such that 
$\pi: M \Rightarrow \mathscr{D} $.
{\textbf{Theorem~\ref{thm: polytime soundness complexity}}} 
assures that a {\pPTM} $ \mathcal{P}_{M} $ exists which simulates 
$ \pi $ with a polynomial overhead and with the same probability 
distribution as $ \mathscr{D} $. 
So, if $ M $ recognizes a language $ L $ with error 
probability $ 0 \leq \epsilon \leq \frac{1}{2}$, 
then $ \mathcal{P}_{M} $ does, hence
$ \mathcal{P}_{M} $ is in \PP.
\par
Concerning completeness of \STAP w.r.t.~\PP, 
let $ \mathcal{P} $ be a {\pPTM} in \PP . The proof is the one for  
{\textbf{Theorem~\ref{thm: completeness complexity PTM}}}, 
but we have to represent a \pPTM that decides a problem instead of one that
computes a function. 
W.l.o.g.,~we assume that a final state is either accepting or rejecting. 
Then, we simply replace the term 
$\mathtt{ext}^{\mathbf{B}}_i: \mathbf{PTM}_i \multimap \mathbf{B}$,
which extracts the final state from the final configuration (see~\cite{gaboardi2009light}),
for $\mathtt{ext}^{\mathbf{S}}_i: \mathbf{PTM}_i \multimap \mathbf{S}_i$,
which extracts the output string from the final configuration.
(We recall that $\mathbf{PTM}_i$ is the indexed type for configurations.)
So, $ \underline{\mathcal{P}} $ accepts a 
language $ L $ with the same probability error 
$ 0\leq \epsilon \leq \frac{1}{2} $ as $ \mathcal{P} $.
\end{proof}
\vspace{.5\baselineskip}
\noindent
Here above, \PP is instance of a general notion, formalized in 
{\textbf{Definition~\ref{definition:Recognizing a language with error epsilon}}}.
However, the interval that the  error  probability identifying \PP belongs to 
allows for a further definition of this class, equivalent to
{\textbf{Definition~\ref{defn: The class PP}}}.

\begin{definition}
	[\PP recognizes by majority]
\label{definition:PP and recognizing by majority}
{\normalfont\PP} contains all the languages $ L $ for which a 
{\normalfont \pPTM} $\mathcal{P}$ exists such that,
for every $ x\in\lbrace 0, 1\rbrace^* $, 
both the following points
{{{(\ref{definition:PP and recognizing by majority 1})}}}
and~{{{(\ref{definition:PP and recognizing by majority 2})}}}
hold:
\begin{enumerate}[(1)]
	\item\label{definition:PP and recognizing by majority 1} 
	if $x \in L$, then 
	$\mathrm{Pr}[\mathcal{P} \textrm{ accepts } x] 
	   \geq \mathrm{Pr}[\mathcal{P} \textrm{ rejects } x]$;
	\item\label{definition:PP and recognizing by majority 2} 
	if $x \not\in L$, then 
    $\mathrm{Pr}[\mathcal{P} \textrm{ rejects } x] 
       \geq \mathrm{Pr}[\mathcal{P} \textrm{ accepts } x]$.
\end{enumerate}
\end{definition}

\begin{definition}[\STAP recognizes by majority] 
\label{defn: STAP by majority}  
Let $L \subseteq \lbrace 0,1 \rbrace^*$ be a language.
Let $M$  be a term with type $\oc ^n \mathbf{S}\multimap \mathbf{B}$ 
in {\normalfont\STAP}, for some $n \in \mathbb{N}$. 
We say that $M$ \emph{accepts $L$ by majority}  whenever, for every $ x \in\lbrace 0, 1\rbrace^* $, the (unique) surface distribution $ \mathscr{D}_x$ such that $M\, \oc^n\!\underline{x}\Rightarrow \mathscr{D}_x$ satisfies the following conditions:
\begin{enumerate}[(1)]
\item\label{defn: STAP by majority 1} 
if $x \in L$ then 
$\mathscr{D}_x(\underline{0}) \geq \mathscr{D}_x(\underline{1})$;

\item\label{defn: STAP by majority 2} 
if $x \not \in L$ then 
$\mathscr{D}_x(\underline{1})\geq \mathscr{D}_x(\underline{0})$.
\end{enumerate}
\end{definition}
\vspace{.5\baselineskip}
\noindent
A proof analogous to the one for 
{\textbf{Theorem}}~\ref{thm: STAP characterizes PP}
exists for the following theorem which, however, refers to
{\textbf{Definition}}~\ref{definition:PP and recognizing by majority} 
and 
{\textbf{Definition}}~\ref{defn: STAP by majority}:

\begin{theorem}
[\STAP characterizes $\mathsf{PP}$  by majority] 
\label{thm:STAP characterizes PP with a recognition by majority}
{\normalfont\STAP} is sound and complete w.r.t.~{\normalfont\PP}.
\end{theorem}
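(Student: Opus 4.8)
The plan is to observe that \textbf{Definition~\ref{definition:PP and recognizing by majority}} and \textbf{Definition~\ref{defn: STAP by majority}} differ from \textbf{Definition~\ref{defn: The class PP}} and \textbf{Definition~\ref{defn: error in STAP}} only in how the acceptance/rejection balance is phrased, so that the argument of \textbf{Theorem~\ref{thm: STAP characterizes PP}} can be replayed almost verbatim. The single preliminary remark I would make is that, since a term recognizing $L$ has type $\oc^n \mathbf{S}\multimap \mathbf{B}$ and (w.l.o.g.) every final state of the machine is either accepting or rejecting, the relevant surface distribution $\mathscr{D}_x$ is supported on $\lbrace\underline{0}, \underline{1}\rbrace$, whence $\mathscr{D}_x(\underline{0}) + \mathscr{D}_x(\underline{1}) = 1$. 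Consequently $\mathscr{D}_x(\underline{0})\geq \frac{1}{2}$ iff $\mathscr{D}_x(\underline{0})\geq \mathscr{D}_x(\underline{1})$, and symmetrically for $\underline{1}$; the same pointwise equivalence holds at the machine level between $\mathrm{Pr}[\mathcal{P}\text{ accepts }x]$ and $\mathrm{Pr}[\mathcal{P}\text{ rejects }x]$. This is exactly the equivalence between the two formulations of \PP announced in the text.

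For soundness, I would fix a term $M:\oc^n\mathbf{S}\multimap\mathbf{B}$ that accepts $L$ by majority in the sense of \textbf{Definition~\ref{defn: STAP by majority}}, together with the unique $\pi:M\,\oc^n\underline{x}\Rightarrow\mathscr{D}_x$. \textbf{Theorem~\ref{thm: polytime soundness complexity}} supplies a \pPTM $\mathcal{P}_M$ that, on input $\underline{x}$, outputs each $S\in\mathrm{supp}(\mathscr{D}_x)$ with probability \emph{exactly} $\mathscr{D}_x(S)$, in polynomial time. Declaring $\mathcal{P}_M$ to accept when it outputs $\underline{0}$ and to reject when it outputs $\underline{1}$ gives $\mathrm{Pr}[\mathcal{P}_M\text{ accepts }x]=\mathscr{D}_x(\underline{0})$ and $\mathrm{Pr}[\mathcal{P}_M\text{ rejects }x]=\mathscr{D}_x(\underline{1})$; the majority conditions for $M$ then transfer verbatim, so $L\in\PP$ by \textbf{Definition~\ref{definition:PP and recognizing by majority}}.

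For completeness, I would start from a \pPTM $\mathcal{P}$ recognizing $L$ by majority and build $\underline{\mathcal{P}}$ exactly as in the proof of \textbf{Theorem~\ref{thm: completeness complexity PTM}}, the only change being to extract the final accept/reject bit with $\mathtt{ext}^{\mathbf{B}}_i:\mathbf{PTM}_i\multimap\mathbf{B}$ instead of the output string with $\mathtt{ext}^{\mathbf{S}}_i$, precisely as already done in the completeness part of \textbf{Theorem~\ref{thm: STAP characterizes PP}}. Condition~(\ref{enum: completeness 2}) of \textbf{Theorem~\ref{thm: completeness complexity PTM}} then yields $\mathscr{D}_x(\underline{0})=\mathrm{Pr}[\mathcal{P}\text{ accepts }x]$ and $\mathscr{D}_x(\underline{1})=\mathrm{Pr}[\mathcal{P}\text{ rejects }x]$, so the majority inequalities of \textbf{Definition~\ref{definition:PP and recognizing by majority}} become those of \textbf{Definition~\ref{defn: STAP by majority}}.

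The proof carries no genuinely new technical content, and the only point that needs care is that both translations preserve the output probabilities \emph{exactly} rather than up to some slack: this is what licenses passing from the $\frac{1}{2}$-threshold phrasing to the majority phrasing without losing the borderline case $\mathscr{D}_x(\underline{0})=\mathscr{D}_x(\underline{1})=\frac{1}{2}$. That exactness is already built into \textbf{Theorem~\ref{thm: polytime soundness complexity}} (output probability $\mathscr{D}(S)$) and into condition~(\ref{enum: completeness 2}) of \textbf{Theorem~\ref{thm: completeness complexity PTM}} (the equality $\mathscr{D}_s(\underline{s'})=\mathscr{S}_s(s')$), so I expect no real obstacle beyond reusing \textbf{Theorem~\ref{thm: STAP characterizes PP}} together with the binary-support observation above.
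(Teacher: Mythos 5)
Your proposal matches the paper exactly: the paper's entire proof of this theorem is the remark that it is obtained by replaying the argument of \textbf{Theorem~\ref{thm: STAP characterizes PP}} with \textbf{Definition~\ref{definition:PP and recognizing by majority}} and \textbf{Definition~\ref{defn: STAP by majority}} in place of the threshold-based definitions, which is precisely what you do. Your additional observations (binary support of $\mathscr{D}_x$ and exact preservation of output probabilities) are correct and only make explicit what the paper leaves implicit.
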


Let us now turn our attention to the relation between \STAP and \BPP.

\begin{definition}[The class \BPP (from~\cite{arora2009computational})]
\label{defn: the class BPP} {\ }
{\normalfont\BPP} is the class of all languages $ L $ for which a 
{\normalfont\pPTM} $\mathcal{P}$ exists that recognizes $ L $
in $p(n)$-time with error probability $0 \leq \epsilon <  \frac{1}{2}$,
and $p$ is a polynomial that depends on $\mathcal{P}$ only.
\end{definition}

\begin{remark}
The value $\epsilon$ cannot be equal to $\frac{1}{2}$ in \BPP. 
Due to this restriction the error probability can be made 
exponentially small at the cost of a polynomial  
slowdown~\cite{sipser2012introduction}. This is why 
\BPP is widely considered as the class capturing \emph{efficient} 
(probabilistic) computations.
\lipicsEnd
\end{remark}

\begin{theorem}[\STAP characterizes \BPP]
\label{thm: STAP characterizes BPP} 
{\normalfont\STAP} is sound and complete w.r.t.~{\normalfont\BPP}.
\end{theorem}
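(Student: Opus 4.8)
The plan is to mirror the argument for \textbf{Theorem~\ref{thm: STAP characterizes PP}}, whose soundness and completeness both hinge on the fact that the simulations relating a term of $\plamb$ and a \pPTM preserve the output distribution \emph{exactly}. Since the only difference between \BPP (\textbf{Definition~\ref{defn: the class BPP}}) and \PP (\textbf{Definition~\ref{defn: The class PP}}) is the strict inequality $\epsilon < \frac{1}{2}$ in place of $\epsilon \leq \frac{1}{2}$, and since an exact simulation transfers any error bound unchanged, the same proof goes through almost verbatim, with $<$ replacing $\leq$ throughout and with recognition read in the sense of \textbf{Definition~\ref{defn: error in STAP}}.

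For soundness, I would fix a term $M : \oc^n \mathbf{S}\multimap \mathbf{B}$ that recognizes $L$ with error probability $0 \leq \epsilon < \frac{1}{2}$, so that the unique surface distribution $\mathscr{D}_x$ with $M\,\oc^n\underline{x}\Rightarrow \mathscr{D}_x$ assigns the correct boolean probability at least $1-\epsilon$. Invoking \textbf{Theorem~\ref{thm: polytime soundness complexity}} yields a \pPTM $\mathcal{P}_M$ that, on an encoding of $M\,\oc^n\underline{x}$, halts in polynomial time with output distribution equal to $\mathscr{D}_x$. Because $\mathcal{P}_M$ reproduces $\mathscr{D}_x$ exactly, it accepts each $x\in L$ (resp.\ rejects each $x\notin L$) with probability at least $1-\epsilon > \frac{1}{2}$, so $\mathcal{P}_M$ witnesses $L\in\BPP$.

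For completeness, I would start from a \pPTM $\mathcal{P}$ in \BPP recognizing $L$ with error $0\leq\epsilon<\frac{1}{2}$, and run the construction of \textbf{Theorem~\ref{thm: completeness complexity PTM}} exactly as adapted in the completeness part of \textbf{Theorem~\ref{thm: STAP characterizes PP}}: replace the string-extraction term $\mathtt{ext}^{\mathbf{S}}_i$ by the state-extraction term $\mathtt{ext}^{\mathbf{B}}_i:\mathbf{PTM}_i\multimap\mathbf{B}$, obtaining a term $\underline{\mathcal{P}}:\oc^n\mathbf{S}\multimap\mathbf{B}$. Conditions~(\ref{enum: completeness 1})--(\ref{enum: completeness 2}) of \textbf{Theorem~\ref{thm: completeness complexity PTM}} then guarantee that $\underline{\mathcal{P}}\,\oc^n\underline{x}\Rightarrow\mathscr{D}_x$ with $\mathscr{D}_x$ equal to the accept/reject distribution of $\mathcal{P}$ on $x$; hence $\underline{\mathcal{P}}$ recognizes $L$ with the very same error $\epsilon<\frac{1}{2}$, as required by \textbf{Definition~\ref{defn: error in STAP}}.

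The point I expect to be delicate is not the mechanics of the argument, which is an almost exact copy of the \PP case, but its \emph{status as a characterization}. Unlike \PP, for which \textbf{Definition~\ref{defn: STAP by majority}} offers a genuinely implicit ``recognition-by-majority'' criterion internal to \STAP, the class \BPP cannot be captured by any purely type-theoretic condition of this kind: the constant gap $\frac{1}{2}-\epsilon>0$ must be supplied externally in the statement, since no feature of the type $\oc^n\mathbf{S}\multimap\mathbf{B}$ nor of the derivation forces the winning probability to remain bounded away from $\frac{1}{2}$. This is precisely the inherent obstacle to characterizing \BPP implicitly recalled in the introduction, and it means the theorem should be read as an equivalence \emph{modulo} the explicit error bound rather than as a fully implicit characterization.
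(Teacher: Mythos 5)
Your proposal is correct and follows essentially the same route as the paper, whose own proof of this theorem is simply a reference back to the argument for \textbf{Theorem~\ref{thm: STAP characterizes PP}} (soundness via \textbf{Theorem~\ref{thm: polytime soundness complexity}}, completeness via the $\mathtt{ext}^{\mathbf{B}}_i$ variant of \textbf{Theorem~\ref{thm: completeness complexity PTM}}), with the strict bound $\epsilon<\frac{1}{2}$ carried through unchanged by the exact simulations. Your closing observation about the explicit error bound also matches the paper's own discussion following the theorem.
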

\begin{proof}
It is like the proof of 
{\textbf{Theorem~\ref{thm: STAP characterizes PP}}}.
\end{proof}

As far as we know,
no alternative definition of \BPP, analogous to 
{\textbf{Definition~\ref{definition:PP and recognizing by majority}}} and referring to an error probability implicitly, exists. Our feeling is that one can achieve a better insight on this class  by moving to a semantic framework. This is where $\STAP$ can play a role. One can indeed exploit denotational semantics, 
available for deductive systems based on \LL, to semantically characterize 
probabilistic computational complexity classes which, currently,
\STAP characterizes operationally.  {\textbf{Conclusions}} elaborate slightly on this.

\section{Conclusions}
\label{section:Conclusions}
We illustrate how the relevant features of \STAP, i.e.~both its
polynomially costing non-deterministic normalization, with a natural 
probabilistic interpretation, and its connections with \LL structural proof-theory, 
can be the base for generalizing known results or shading some light on open issues.
\par
We think that \STAP can be used to improve  
known characterizations of the class \NPTIME, as given in  $\mathsf{STA}_+$ by Marion et al.~\cite{gaboardi2008soft}.
We recall that $\mathsf{\STA}_+$ is \STA extended with a \emph{sum-rule}.
That sum-rule gives a type to a choice operator $M + N$, i.e. to 
an oracle that autonomously ``decides'' when reducing to either 
$M$ or $N$.  
The normalization steps associated with the sum-rule suffer the typical drawback  of additives in deductive systems based on \LL: the cost of 
normalizing terms with a type in $\mathsf{STA}_+$ may result in an exponential
blow up. To recover \NPTIME soundness, the normalization of
terms with a type in $\mathsf{STA}_+$ must be a variant of the 
leftmost outermost strategy, delaying substitutions as long as 
possible. 
By contrast, thanks to the inherently linear nature of non-determinism in \STAP, arising from a careful managing of context-sharing in Linear additives,  \STAP enjoys a \emph{strong} polynomial time normalization. Therefore, non-deterministic Linear additives can  be employed to make the characterization  of \NPTIME free of any explicit  reference to  reduction strategies.

We also think that \STAP, which stems from proof-theoretical principles, will be useful to 
address the problem of characterizing implicitly the class \BPP. As pointed out  
also in~\cite{dal2015higher}, characterizing \BPP by purely syntactical means 
is far from obvious, for it boils down to identify some structural invariant 
that allows to recognize a language with an error probability 
\emph{strictly smaller} than $ \frac{1}{2} $.
Given that invariant, possibly captured inside an inductively defined formal
system, one could be able, in principle, to enumerate all the algorithms of
\BPP. 
\par
Denotational semantics can be a way to suggest such a
structural invariant, and \STAP can play a crucial role.
\STAP is a probabilistic type-theoretical formulation of \SLL, a 
subsystem of \LL capturing the complexity class \PTIME.
Probabilistic denotational models for \LL exist, e.g.~Probabilistic Coherence Spaces $\mathsf{PCoh}$~\cite{danos2011probabilistic} 
or Weighted Relational Semantics~\cite{laird2013weighted}, so they can be easily adapted to \STAP.
What we are looking for in these models is a probabilistic version of the notion of  \emph{obsessionality}~\cite{laurent2006obsessional}, an 
invariant found in relational models for \SLL, and used to characterize 
\PTIME denotationally.

\bibliographystyle{plainurl}
\bibliography{main}

\appendix
\section{Confluence for $\plamb$}\label{app3}
In this section we prove that the   probabilistic multi-step reduction $\Rightarrow$ defined in \textbf{Figure}~\ref{fig: operational semantics for staoplus} is confluent, that is,  each term of $\plamb$ can be associated with at most one surface distribution. This property is shown by adapting the techniques in Dal Lago and Toldin~\cite{dal2015higher}. 

The first step  is to prove that   $\rightarrow$ enjoys a strong confluence property for $\plamb$:
\begin{lemma} \label{lem: simpson sub lemma} Let $M, N\in \plamb$:
\begin{enumerate}[(1)]
\item \label{enum: simpson conf 1} If $M \rightarrow M', M''$ then $M\lbrace  N/x\rbrace \rightarrow M'\lbrace  N/x\rbrace , M'' \lbrace  N/x\rbrace $
\item \label{enum: simpson conf 2}If $N \rightarrow N', N''$ and $x$ is linear in $M$ then $M[N/x]\rightarrow M[N'/x],M[N''/x]$.
\end{enumerate}
\end{lemma}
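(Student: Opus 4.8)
The plan is to establish the two items by separate inductions, treating item~(\ref{enum: simpson conf 2}) first since it is the more direct. By hypothesis $x$ is ($s$-)linear in $M$, so it occurs free exactly once and not beneath any $\oc$ or $\mathtt{d}$; its unique occurrence therefore sits at a genuine surface position and I can write $M = \mathcal{C}[x]$ for a surface context $\mathcal{C}$ in which $x$ does not otherwise occur. After alpha-renaming the binders of $M$ along the path to the hole so that none of them captures a free variable of $N$ (and hence of $N'$ or $N''$, whose free variables are contained in those of $N$ since $\rightarrow$ creates no free variables), the capture-avoiding substitutions coincide with hole-filling: $M[N/x] = \mathcal{C}[N]$, $M[N'/x] = \mathcal{C}[N']$ and $M[N''/x] = \mathcal{C}[N'']$. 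Since $\rightarrow$ is by definition closed under surface contexts, the hypothesis $N \rightarrow N', N''$ lifts to $\mathcal{C}[N] \rightarrow \mathcal{C}[N'], \mathcal{C}[N'']$, which is exactly the desired conclusion.

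For item~(\ref{enum: simpson conf 1}) I would argue by induction on the derivation of $M \rightarrow M', M''$, i.e.~on the surface context $\mathcal{C}$ with $M = \mathcal{C}[R]$ and $R$ the contracted redex. The engine of the argument is an auxiliary \emph{distribution} property stating that $\{N/x\}$ acts homomorphically on every term constructor, the sole exception being that it cancels a dereliction against one exponential layer, $\mathtt{d}(x)\{\oc Q/x\} = x'\{Q/x'\}$; equivalently, as long as the immediate subterm substituted into is not literally $\mathtt{d}(x)$, the operation $\{N/x\}$ commutes with the enclosing constructor. This is proved by structural induction on the subject, with a case split on whether $N$ has the shape $\oc Q$ and whether $x$ has a bare occurrence (forcing the ``otherwise'' clause) or occurs only under $\mathtt{d}$ (triggering the peeling clause). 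Using it, every inductive case is immediate: because the hole of $\mathcal{C}$ is filled with a redex rather than with $x$, no cancellation can occur along the hole path, so $(\mathcal{C}[R])\{N/x\} = \mathcal{C}'[R\{N/x\}]$ for a surface context $\mathcal{C}'$, and the induction hypothesis together with closure under surface contexts concludes.

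It remains to treat the base cases, where $M$ is itself one of the four redexes. The projection case $\proj\langle P, R\rangle \rightarrow P, R$ is the easiest: $\{N/x\}$ distributes over the pair, so $(\proj\langle P, R\rangle)\{N/x\} = \proj\langle P\{N/x\}, R\{N/x\}\rangle \rightarrow P\{N/x\}, R\{N/x\}$. The copy case is similar, using that the substituted value $V \in \pvalb$ is closed, whence $V\{N/x\} = V$ and $[V/y]$ commutes cleanly with $\{N/x\}$. The beta case needs only the commutation $(P[R/y])\{N/x\} = (P\{N/x\})[R\{N/x\}/y]$ with $y$ chosen fresh for $N$.

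The main obstacle is the forgetful redex $(\lambda \oc y. P)\,\oc R \rightarrow P\{\oc R/y\}$, whose base case reduces to a commutation lemma for surface-preserving substitution with itself, namely $(P\{\oc R/y\})\{N/x\} = (P\{N/x\})\{\oc(R\{N/x\})/y\}$. This identity is delicate precisely when $N = \oc Q$, since then both $\{\oc R/y\}$ and $\{N/x\}$ peel exponential layers while cancelling derelictions, and one must verify that the two peelings do not interfere. I expect to prove it by an inner induction on the number of $\oc$'s in $N$ (equivalently on the $\mathtt{d}$-depth of the occurrences involved), again splitting on whether the relevant variable occurs bare or only under $\mathtt{d}$. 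Once this lemma and the analogous beta and copy commutations are secured, the base cases close and the induction is complete.
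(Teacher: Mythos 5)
Your treatment of item~(\ref{enum: simpson conf 2}) is correct and is exactly the intended argument: $s$-linearity places the unique occurrence of $x$ at a surface position, so the reduction of $N$ lifts through the enclosing surface context. For item~(\ref{enum: simpson conf 1}), however, the load-bearing step fails. You claim that, since the hole of $\mathcal{C}$ contains a redex rather than $x$, ``no cancellation can occur along the hole path'' and hence $(\mathcal{C}[R])\{N/x\} = \mathcal{C}'[R\{N/x\}]$. But the peeling clause of \textbf{Definition~\ref{definition:Surface-preserving substitution}} is not a local, path-wise condition: it fires precisely when \emph{every} free occurrence of $x$ in the \emph{whole} term sits immediately under a $\mathtt{d}$ (and $N=\oc Q$), so the number of exponential layers stripped from $N$ before it lands inside $R$ is governed by $M$ globally, not by $R$. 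Concretely, take $M = x\,R$ with $N=\oc Q$ and $R$ containing only occurrences of the form $\mathtt{d}(x)$: the bare head occurrence forces the ``otherwise'' clause, so the residual of $R$ inside $M\{N/x\}$ is $R[\oc Q/x]$, which keeps its derelictions, whereas $R\{N/x\}$ peels them away. The factorisation you use to make ``every inductive case immediate'' is therefore false, and the base cases inherit the problem: for $M=\proj\langle x,\mathtt{d}(x)\rangle$ and $N=\oc Q$ with $Q$ normal, $M\{N/x\}=\proj\langle \oc Q,\mathtt{d}(\oc Q)\rangle$ reduces only to $\oc Q,\ \mathtt{d}(\oc Q)$, while $M''\{N/x\}=Q$ --- so, read literally over all of $\plamb$, the statement is already delicate exactly where your argument glosses over it.

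The repair is to run the induction on the structure of $M$ carrying a stronger invariant: either prove the claim simultaneously for the surface-preserving substitution $\{N/x\}$ and for the ordinary substitution $[N/x]$ by a partially unpeeled $N$ (since once the global peeling bottoms out, what reaches the subterms is an ordinary substitution), or restrict to terms in which all free occurrences of $x$ occur at a uniform dereliction depth --- which is what typability in \STAP guarantees via $sp$ and $m$, and what the lemma's actual uses (substituting for a $\lambda\oc$-bound variable inside a typed term) provide. Your commutation identity for the forgetful redex, $(P\{\oc R/y\})\{N/x\} = (P\{N/x\})\{\oc(R\{N/x\})/y\}$, needs the same caveat, and your inner induction on the number of $\oc$'s in $N$ is the right tool once that invariant is in place. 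The paper itself offers no detail here (its proof is the single line ``easy induction on the structure of $M$''), so your proposal does not contradict it --- but as written it does not close.
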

\begin{proof}
Easy induction on the structure of $M$.
\end{proof}
\begin{lemma} \label{lem: one-reduction step confluence 1} Let $M\in \plamb$. If $M \rightarrow M'$ and $M \rightarrow M''$, with $M'$ and $M''$ distinct, then there exists a term $N$ such that $M' \rightarrow N$ and $M'' \rightarrow N$.
\end{lemma}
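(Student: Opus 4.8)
The plan is to prove a one-step diamond property for the \emph{deterministic} steps of $\rightarrow$. Recall that $M\rightarrow M'$ abbreviates $M\rightarrow M',M'$, so contracting the same redex twice yields the same term; hence the hypothesis that $M'$ and $M''$ are distinct forces the two reductions to contract \emph{different} surface redexes $R'$ and $R''$ of $M$ (in particular neither is a projection $\proj\langle P,Q\rangle$ with $P\neq Q$, which is branching and not a step in this sense). I would argue by induction on the structure of $M$, which at each node amounts to a case analysis on the relative positions of $R'$ and $R''$, exactly in the style of the companion \textbf{Lemma~\ref{lem: simpson sub lemma}}. The commutation engine throughout is \textbf{Lemma~\ref{lem: simpson sub lemma}} itself: part~(\ref{enum: simpson conf 1}) lets a surface reduction survive a surface-preserving substitution $M\{N/x\}$, while part~(\ref{enum: simpson conf 2}) lifts a reduction of an argument $N$ into $M[N/x]$ whenever $x$ is $s$-linear in $M$.

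First I would dispatch the case in which $R'$ and $R''$ lie in disjoint subterms: contracting $R''$ in $M'$ and $R'$ in $M''$ closes the diagram on a common $N$; and when both redexes sit inside the same proper subterm the claim is immediate from the induction hypothesis. The substantive cases are those where one redex, say $R'$, is contracted at the root while $R''$ occurs strictly inside it. Here I would split on the shape of the root redex of $M$: a linear $\beta$-redex $(\lambda x.P)Q$, a forgetful redex $(\lambda\oc x.P)\oc Q$, a $\mathtt{copy}$-redex, or a projection $\proj\langle P,P\rangle$ with equal components. In each subcase $R''$ lives either in the body $P$ or in the argument, and I would commute the two reductions with the appropriate part of \textbf{Lemma~\ref{lem: simpson sub lemma}}: if $R''$ sits in the argument and the bound variable is $s$-linear, part~(\ref{enum: simpson conf 2}) applies; if $R''$ sits in the body, part~(\ref{enum: simpson conf 1}) propagates it through the substitution. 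Crucially, $s$-linearity guarantees that a surface redex in the body is neither erased nor duplicated by the root contraction, so a single step on each side suffices to reach the common reduct $N$.

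I expect the forgetful-reduction case to be the main obstacle. There the contraction uses surface-preserving substitution $P\{\oc Q/x\}$, which collapses the pattern $\mathtt{d}(\oc Q)$ to $Q$ and thereby exposes at the surface material previously frozen under a $\oc$. One must check that the inner surface redex $R''$ is never among this freshly exposed material: since $R''$ is by hypothesis a surface redex of $M$, it cannot lie inside the sealed argument $\oc Q$, so it is preserved as a single redex rather than being duplicated. The delicate bookkeeping is to verify that $R''$ remains a redex of the same kind after the substitution and that, symmetrically, the root redex persists after $R''$ is contracted, so that the two one-step reductions $M'\rightarrow N$ and $M''\rightarrow N$ genuinely meet; this is precisely where the definition of surface-preserving substitution and the $s$-linearity of $\plamb$ do the work. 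Once this diamond is in place for $\rightarrow$, it combines with the four-reduct strong-confluence property for pairs of branching $\proj$ steps (stated in the sketch of \textbf{Theorem~\ref{thm: confluence for lamb}}) and is then lifted along $\Rrightarrow$ to yield confluence of $\Rightarrow$.
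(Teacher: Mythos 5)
Your proposal is correct and follows essentially the same route as the paper: an induction on the structure of $M$, with the critical root-redex cases (linear $\beta$, forgetful, $\mathtt{copy}$) closed by the two parts of \textbf{Lemma~\ref{lem: simpson sub lemma}}, using $s$-linearity to identify $P[Q/x]$ with $P\{Q/x\}$ and the surface restriction to rule out redexes inside $\oc Q$. The points you flag as delicate (no duplication or erasure of the inner redex, nothing freshly exposed from under a $\oc$) are exactly the ones the paper's proof relies on.
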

\begin{proof}
By induction on the structure of $M$. We just consider the most interesting cases. If $M= (\lambda x. P)Q\rightarrow P[Q/x]=M'$,  then either $M''= (\lambda x. P')Q$ with $P \rightarrow P'$ or $M''= (\lambda x. P)Q'$ with $Q\rightarrow Q'$. Since $M$ is $s$-linear, $x$ is $s$-linear in $P$ and hence $x$ does not lie within the scope of a $\mathtt{d}$-operator. This means that $P[Q/x]= P\lbrace Q/x \rbrace$ by definition.  In the first case, we have $M' \rightarrow P'[Q/x]$ by \textbf{Lemma}~\ref{lem: simpson sub lemma}.\ref{enum: simpson conf 1} and also $M'' \rightarrow P'[Q/x]$. In the second case, we have $M' \rightarrow P[Q'/x]$ by \textbf{Lemma}~\ref{lem: simpson sub lemma}.\ref{enum: simpson conf 2}, and also $M'' \rightarrow P[Q'/x]$. Similarly, if $M= (\lambda \oc x. P)\oc Q  \rightarrow P\lbrace \oc Q /x \rbrace =M'$  then the only case is  $M''= (\lambda \oc x. P')\oc Q$ where $P \rightarrow P'$, since reduction is forbidden in $Q$.  By  \textbf{Lemma}~\ref{lem: simpson sub lemma}.\ref{enum: simpson conf 1},  $M' \rightarrow  P'\lbrace \oc Q/x \rbrace$,  and also $M'' \rightarrow  P'\lbrace \oc Q/x\rbrace $. Last, we consider the case where $M= \mathtt{copy}^{U} \, V \mathtt{\ as \ }x_1, x_2 \mathtt{ \ in \ }\langle N_1, N_2\rangle $,  $M'= \langle N_1[V/x_1], N_2[V/x_2]\rangle $, and $M''= \mathtt{copy}^{U} \, V \mathtt{\ as \ }x_1, x_2 \mathtt{ \ in \ }\langle N'_1, N_2\rangle $. Since $M$ is $s$-linear, $x_1$ is $s$-linear in $N_1$ and hence $x$ does not lie within the scope of a $\mathtt{d}$-operator. This means that $N_1[V/x]= N_1 \lbrace V/x \rbrace$ by definition.  Then $M' \rightarrow \langle N'_1[V/x_1], N_2[V/x_2]\rangle $ by \textbf{Lemma}~\ref{lem: simpson sub lemma}.\ref{enum: simpson conf 1} and also $M'' \rightarrow \langle N'_1[V/x_1], N_2[V/x_2]\rangle $.
\end{proof}
\begin{lemma} \label{lem: one-reduction step confluence 2} Let $M\in \plamb$.  If $M \rightarrow M'_1, M'_2$ and $M \rightarrow M''$, with $M'_1$ and $M'_2$ distinct, then there exist terms $N_1$ and  $N_2$ such that $M'_1 \rightarrow N_1$, $M'_2\rightarrow N_2$ and $M''\rightarrow N_1, N_2$.
\end{lemma}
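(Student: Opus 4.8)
The plan is to argue by induction on the structure of $M$, following the same pattern as Lemma~\ref{lem: one-reduction step confluence 1} and relying throughout on the surface-preserving substitution Lemma~\ref{lem: simpson sub lemma}. Since $M'_1 \neq M'_2$, the only rule that can justify the step $M \rightarrow M'_1, M'_2$ is the projection rule, so $M$ contains a redex $R_1 \triangleq \proj\langle P, Q\rangle$ with $P \neq Q$, and $M'_1$ (resp.\ $M'_2$) is $M$ with this occurrence replaced by $P$ (resp.\ $Q$). The step $M \rightarrow M''$ contracts a further, deterministic redex $R_2$; as a single occurrence determines a single pair of reducts, $R_1$ and $R_2$ are distinct occurrences, and I would classify the argument according to their relative position in $M$.

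When both $R_1$ and $R_2$ lie inside one and the same immediate subterm of $M$, the claim follows from the induction hypothesis applied to that subterm; when they lie in two disjoint immediate subterms, it follows by contracting, in each reduct, the residual of the redex in the other subterm. The genuinely substitutive situations are those in which $R_2$ is a root $\beta$-redex, $\oc$-redex, or $\mathtt{copy}$-redex and $R_1$ sits inside a component that $R_2$ moves. If $R_1$ is inside the body (the term substituted \emph{into}), I would push the branching through the substitution by Lemma~\ref{lem: simpson sub lemma}.\ref{enum: simpson conf 1}: from $A \rightarrow A_1, A_2$ one obtains $A\lbrace B/x\rbrace \rightarrow A_1\lbrace B/x\rbrace, A_2\lbrace B/x\rbrace$ (writing $B$ for the substituted term), so that contracting the residual of $R_2$ in $M'_1, M'_2$ and the residual of $R_1$ in $M''$ closes the diagram with $N_i \triangleq A_i\lbrace B/x\rbrace$. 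If instead $R_1$ is inside the argument of a linear redex, the symmetric Lemma~\ref{lem: simpson sub lemma}.\ref{enum: simpson conf 2} applies. In both places $s$-linearity is essential: it guarantees that the bound variable of a $\lambda x$ or of a $\mathtt{copy}$ occurs exactly once and outside every $\mathtt{d}$, so that ordinary and surface-preserving substitution coincide and $R_2$ has a unique residual; moreover no surface redex can occur under a $\oc$, and by the grammar~\eqref{aligned: values grammar} the frozen values $U, V$ of a $\mathtt{copy}$ contain no $\proj$, so $R_1$ can never hide there.

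The delicate case, which I expect to be the main obstacle, is the overlap at the projection itself: $R_1 = \proj\langle P, Q\rangle$ is the contracted root redex and $R_2$ is a deterministic redex lying strictly inside a single component, say $P \rightarrow P'$. Contracting the root projection splits $M$ into the two branches $M'_1 = P$ and $M'_2 = Q$, so the residual of $R_2$ survives in the branch $M'_1$, giving $M'_1 \rightarrow P' \triangleq N_1$, but is \emph{discarded} in the complementary branch $M'_2 = Q$, which may already be normal. Contracting the root projection in $M'' = \proj\langle P', Q\rangle$ yields $M'' \rightarrow N_1, N_2$ with $N_2 \triangleq Q$, and the only side of the diagram that is not a genuine step is the \emph{reflexive} move $Q \rightarrow N_2$. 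This is precisely why the global confluence argument does not remain inside the strict one-step relation but passes to the relaxed relation $\Rrightarrow$ of the confluence sketch, in which $M \Rrightarrow M$ is admitted: read for $\Rrightarrow$, the discarded branch is closed by its reflexive step and the diamond goes through uniformly. The symmetric situation with $R_2$ inside $Q$, and the analogous case of a diagonal projection $\proj\langle S, S\rangle$ carrying $R_1$ inside $S$ (where firing the root keeps a single copy of $S$ together with the residual of $R_1$), are handled in the same way.
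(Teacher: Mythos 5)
Your case analysis follows the same route as the paper's own proof---structural induction on $M$, classification of the two contracted redexes by their relative positions, and closure of the disjoint and substitutive cases via Lemma~\ref{lem: simpson sub lemma}---and for all of those cases your argument is sound. The one place where you part company with the paper is precisely the ``delicate case'' you single out, and there you are right and the paper is not. The paper's proof opens by asserting that the only possible situation is when both reductions are applied in surface contexts $\mathcal{C}\neq[\cdot]$, which silently excludes the configuration where the branching redex \emph{is} the root projection and the deterministic redex sits strictly inside one component. That configuration is perfectly possible and is a genuine counterexample to the lemma as stated: take $M \triangleq \proj\langle (\lambda x.x)y,\, z\rangle \in \plamb$. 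Then $M \rightarrow (\lambda x.x)y,\, z$ by firing the root projection, and $M \rightarrow \proj\langle y, z\rangle$ by the inner $\beta$-step; but $M'_2 = z$ is a surface normal form, while the only reduction available from $M''=\proj\langle y,z\rangle$ is $M''\rightarrow y, z$, so the required $N_2$ with $z\rightarrow N_2$ does not exist.

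So, strictly speaking, neither you nor the paper proves the statement as written, because it is false; but your diagnosis of why, and of how the surrounding development survives, is correct. The conclusion must be weakened so that the discarded branch may be closed reflexively (e.g.\ ``$M'_2 \rightarrow N_2$ or $M'_2 = N_2$''), which is harmless for the only consumer of this lemma, namely Lemma~\ref{lem: strenghtening of confluence}: that argument already lives in the relation $\Rrightarrow$, where rule $t1$ supplies exactly the reflexive step you need on the discarded component. (The same repair is needed for Lemma~\ref{lem: one-reduction step confluence 3}, whose proof makes the same unjustified exclusion.) If you recast your ``delicate case'' as a proof of the weakened statement rather than as an apology for the original one, your argument is complete and, in this respect, more careful than the paper's.
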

\begin{proof}
The proof is by induction on the structure of $M$. The only possible situation is when both the  surface reductions $M \rightarrow M'_1, M'_2$ and $M \rightarrow M''$ are applied in surface  contexts $\mathcal{C}\neq [\cdot]$, and we proceed by case analysis. We just consider a possible case. Suppose $M= PQ \rightarrow P'_1Q , P'_2Q$, where $P'_1Q=M'_1$ and $P'_2Q=M'_2$. Then either $M''= P''Q$, where $P \rightarrow P''$, or $M''= PQ''$, where $Q \rightarrow Q''$. In the first case we apply the induction hypothesis on $P \rightarrow P'_1, P'_2$ and $P \rightarrow P''$ and we get that there exist $R_1$ and $R_2$  such that $P'_1 \rightarrow R_1$, $P'_2 \rightarrow R_2$ and $P'' \rightarrow R_1, R_2$, so that $P'_1Q \rightarrow R_1Q$, $P'_2Q\rightarrow R_2Q$ and $P''Q\rightarrow R_1Q, R_2Q$. In the second case, we have $P'_1Q \rightarrow P'_1Q''$,  $P'_2 Q \rightarrow P'_2Q''$ and $PQ''\rightarrow P'_1Q'', P'_2Q''$. 

\end{proof}

\begin{lemma}\label{lem: one-reduction step confluence 3}   Let $M\in \plamb$. If $M \rightarrow M'_1, M'_2$ and $M \rightarrow M''_1, M''_2$, with $M'_1$, $M'_2$, $M''_1$, $M''_2$ all distinct,  then there exist $N_1, N_2, N_3, N_4$ such that $M'_1 \rightarrow N_1, N_2$, $M'_2 \rightarrow N_3, N_4$ and $\exists i\in \lbrace 1,2\rbrace$ such that $M''_i \rightarrow N_1, N_3$ and $M''_{3-i}\rightarrow N_2, N_4$. 
\end{lemma}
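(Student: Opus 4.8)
The plan is to follow the inductive pattern of Lemma~\ref{lem: one-reduction step confluence 1} and Lemma~\ref{lem: one-reduction step confluence 2}, arguing by induction on the structure of $M$. First I would note that among the one-step rules of Definition~\ref{defn: surface reduction for plamb} only the $\mathtt{proj}$-contraction can return a genuinely binary outcome: the $\beta$-, $\oc$- and $\mathtt{copy}$-rules all produce a pair $N, N$. Hence the hypotheses $M'_1\neq M'_2$ and $M''_1\neq M''_2$ force both contracted redexes to be $\mathtt{proj}$-redexes of the form $\mathtt{proj}\langle\cdot,\cdot\rangle$, and since they are \emph{distinct} redexes occurring at surface positions, their occurrences in $M$ are either disjoint or nested, one inside a component of the other's pair. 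The argument then splits on these two geometries.

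When the two redexes lie in disjoint subterms of $M$ the two contractions commute, and I would build the ``commuting square'' directly. Writing the redexes as $\mathtt{proj}\langle A_1, A_2\rangle$ and $\mathtt{proj}\langle B_1, B_2\rangle$, I take $N_1, N_2, N_3, N_4$ to be $M$ with the two redexes replaced by the component pairs $(A_1, B_1)$, $(A_1, B_2)$, $(A_2, B_1)$, $(A_2, B_2)$ respectively; contracting the surviving redex in each of $M'_1, M'_2, M''_1, M''_2$ then yields the four required reductions, with the choice $i=1$, using only closure of $\rightarrow$ under surface contexts (Definition~\ref{defn: surface reduction for plamb}). If instead both redexes lie strictly inside the \emph{same} immediate subterm of $M$, I would apply the induction hypothesis to that subterm and propagate the four joins $N_1,\dots,N_4$ through the enclosing surface context, exactly as in the congruence cases of Lemma~\ref{lem: one-reduction step confluence 2}.

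The critical case, which I expect to be the main obstacle, is the nested one: one contraction fires the outer redex $\mathtt{proj}\langle C_1, C_2\rangle$, so that $\{M'_1, M'_2\}=\{C_1, C_2\}$, while the other fires a $\mathtt{proj}$-redex occurring inside a component, say inside $C_1$, giving $M''_1=\mathtt{proj}\langle C_1', C_2\rangle$ and $M''_2=\mathtt{proj}\langle C_1'', C_2\rangle$. The branch $M'_1=C_1$ still contains the inner redex and reduces to the pair $C_1', C_1''$, supplying $N_1, N_2$; but the branch $M'_2=C_2$ has that redex \emph{erased}, and since $C_2$ may well already be a surface normal form it admits no genuine contraction. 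The diagram is therefore closed by taking $N_3=N_4=C_2$ and reading $M'_2\rightarrow N_3, N_4$ as a reflexive, zero-step move, after which $M''_1\rightarrow C_1', C_2=N_1, N_3$ and $M''_2\rightarrow C_1'', C_2=N_2, N_4$ close the diagram with $i=1$. This is precisely the point where the erasing nature of $\mathtt{proj}$ blocks a strict one-step tiling, and it is the reason the confluence of Theorem~\ref{thm: confluence for lamb} is obtained by lifting to the reflexive relation $\Rrightarrow$ (which admits $M\Rrightarrow M$) rather than to $\rightarrow$ directly; I would make this zero-step convention explicit in the statement so that the case is covered. The remaining subcases, namely the inner redex sitting in $C_2$ instead of $C_1$ and the symmetric assignment of the roles of the two contractions, follow by relabelling and by the symmetric choice $i=2$.
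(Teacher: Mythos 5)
Your treatment of the cases the paper actually works through is the same as the paper's: structural induction on $M$, the observation that a genuinely binary outcome forces a $\proj$-redex, a direct commuting square when the two redexes are disjoint, and propagation of the induction hypothesis through the enclosing surface context when both sit in the same immediate subterm (compare the $M=PQ$ case of the paper's proof). The divergence is exactly where you predicted: the nested case. The paper disposes of it in one sentence by asserting that, since $M'_1,M'_2,M''_1,M''_2$ are all distinct, both contractions must be applied in surface contexts $\mathcal{C}\neq[\cdot]$, which is precisely the claim that one redex can never be the root $\proj$ containing the other. That assertion does not appear to be justified, and your worry is well founded. Take $M\triangleq\proj\langle \proj\langle \mathbf{I},\mathbf{I}'\rangle,\ \mathbf{I}\rangle$ with $\mathbf{I}\triangleq\lambda x.x$ and $\mathbf{I}'\triangleq\lambda x.\lambda y.\,xy$, both $s$-linear and in surface normal form. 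Then $M\rightarrow \proj\langle \mathbf{I},\mathbf{I}'\rangle,\,\mathbf{I}$ by firing the root redex, and $M\rightarrow \proj\langle \mathbf{I},\mathbf{I}\rangle,\,\proj\langle \mathbf{I}',\mathbf{I}\rangle$ by firing the inner redex in the context $\proj\langle[\cdot],\mathbf{I}\rangle$; the four reducts are pairwise distinct, yet $M'_2=\mathbf{I}$ is normal, so no $N_3,N_4$ with $M'_2\rightarrow N_3,N_4$ exist and the lemma fails as literally stated. So on this point your proposal is more careful than the paper's own proof, which never confronts the configuration.

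Your repair --- taking $N_3=N_4=C_2$ and reading the branch $M'_2\rightarrow N_3,N_4$ as a reflexive, zero-step move --- is the natural one, and it is compatible with the only place the lemma is consumed: the diamond is ultimately tiled at the level of the reflexive relation $\Rrightarrow$ of Lemma~\ref{lem: strenghtening of confluence}, whose rule $t1$ admits $M\Rrightarrow M$ for every $M\in\plamb$, so a degenerate tile costs nothing there. Be aware, though, that the weakened statement is not a drop-in replacement: the case of Lemma~\ref{lem: strenghtening of confluence} that invokes this lemma, and the proof of Lemma~\ref{lem: all derivations of Rightarrow have same size}, both build derivations by applying $t2$ (respectively $s2$) to a premise of the form $M'_2\rightarrow N_3,N_4$, and in the degenerate branch these must be replaced by $t1$ (respectively $s1$) instances, with the accompanying size and probability bookkeeping adjusted. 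With the zero-step convention made explicit in the statement and those downstream uses patched accordingly, your argument is sound.
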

\begin{proof}
The proof is by induction on the structure of $M$. The only possible situation is when both the  surface reductions $M \rightarrow M'_1, M'_2$ and $M \rightarrow M''_1, M''_2$ are applied in surface  contexts $\mathcal{C}\neq [\cdot]$, and we proceed by case analysis. We just consider a possible case. Suppose $M= PQ \rightarrow P'_1Q , P'_2Q$, where $M'_1=P'_1Q$ and $M'_2= P'_2Q$. Then either $M''_1= P''_1Q$, $M''_2=P''_2Q$ or $M''_1= PQ''_1$, $M''_2=PQ''_2$. In the first case we apply the induction hypothesis on $P \rightarrow P'_1, P'_2$ and $P \rightarrow P''_1, P''_2$ and we have that there exist $R_1, R_2, R_3, R_4$ such that $P'_1 \rightarrow R_1, R_2$,  $P'_2 \rightarrow R_3, R_4$ and $\exists i$ such that $P''_i \rightarrow R_1, R_3$ and $P''_{3-i}\rightarrow R_2, R_4$. Then,  we have $P'_1Q \rightarrow R_1Q, R_2Q$, $P'_2Q \rightarrow R_3Q, R_4Q$, $P''_iQ \rightarrow R_1Q, R_3Q$, and $P''_{3-i}Q \rightarrow R_2Q, R_4Q$. In the second case we have  $P'_1Q\rightarrow P'_1Q''_1, P'_1Q''_2$, $P'_2Q \rightarrow P'_2Q''_1, P'_2Q''_2$, $PQ''_1 \rightarrow P'_1Q''_1, P'_1Q''_2$, and $PQ''_2 \rightarrow P'_1Q''_2, P'_2Q''_2$. 

\end{proof}

The next step is to introduce  a probabilistic multi-step reduction relation $\Rrightarrow$ which is \enquote{laxer} than $\Rightarrow$, i.e.~such that ${\Rightarrow}  \subseteq  {\Rrightarrow}$.

\begin{definition}[Multi-step reduction $\Rrightarrow$]\label{defn: multistep probabilsitic operational semantics for surface reduction}{\ }
\begin{itemize}
\item A \emph{term distribution} is a probability distribution over $\plamb$, i.e.~a function $\mathscr{D}:\plamb \longrightarrow [0, 1]$ such that $\textstyle \sum_{M\in \plamb} \mathscr{D}(M)=1$.
\item The \emph{multi-step reduction $\Rrightarrow $} is the  relation    between terms of $\plamb $   and term distributions,  defined by the rules in \normalfont{\textbf{Figure}}~\ref{fig: multistep probabilsitic operational semantics for surface reduction}. Derivations of $M \Rightarrow \mathscr{D}$ are ranged over by $\pi, \rho$.   
\item The  \textit{size}  $\vert \pi \vert$ of a derivation $\pi: M \Rrightarrow \mathscr{D}$ is  $0$ if   $\pi$ is $t1$, and $\s{\pi}\triangleq\max(\s{\pi_1}, \s{\pi_2})+1$ if $\pi$ is $t2$ with premises  $M \rightarrow M_1, M_2$, $\pi_1:M_1 \Rrightarrow \mathscr{D}_1$ and $\pi_2:M_2 \Rrightarrow \mathscr{D}_2$. Henceforth, with a little abuse of notation, we  shall  write $\s{ M \Rrightarrow \mathscr{D}}$ in place of $\s{\pi}$, whenever $\pi: M \Rrightarrow \mathscr{D}$.
\end{itemize}
\end{definition}

\begin{figure}[t]
\begin{mathpar}
 \inferrule*[Right=$t1$]{M \in \plamb}{M \Rrightarrow M}\and
\inferrule*[Right=$t2$]{M \rightarrow M_1, M_2 \\ M_1 \Rrightarrow \mathscr{D}_1 \\ M_2 \Rrightarrow \mathscr{D}_2}{M \Rrightarrow  \textstyle{\frac{1}{2}}\cdot \mathscr{D}_1 + \textstyle{\frac{1}{2}}\cdot \mathscr{D}_2}
\end{mathpar}
\caption{Multi-step reduction $\Rrightarrow$ for $\plamb$.}
\label{fig: multistep probabilsitic operational semantics for surface reduction}
\end{figure}
Notice that the only difference between the relations $\Rightarrow$ and $\Rrightarrow$ is that $s1$ applies to surface normal forms only, while  $t1$ applies to all terms.  The following states that ${\Rightarrow}  \subseteq  {\Rrightarrow}$:
\begin{lemma} \label{lem: rightarrow implies rrightarrow}If   $\pi : M\Rightarrow \mathscr{D} $ then  there exists a derivation $\pi'$  such that $\pi':M \Rrightarrow \mathscr{D}$ and $\s{\pi}=\s{\pi'}$.
\end{lemma}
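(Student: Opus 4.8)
The plan is to argue by induction on the structure of the derivation $\pi : M \Rightarrow \mathscr{D}$, exploiting the fact that $\Rightarrow$ and $\Rrightarrow$ differ only in their axiom rules. Rule $s1$ builds a derivation $S \Rightarrow S$ only when $S \in \mathrm{SNF}$, whereas rule $t1$ builds $M \Rrightarrow M$ for \emph{every} $M \in \plamb$; the step rules $s2$ and $t2$ are literally identical. So the strategy is to transform $\pi$ into $\pi'$ by replaying each instance of $s1$ as an instance of $t1$ and leaving the shape of the derivation otherwise untouched. Since $t1$ and $s1$ produce the same conclusion (the Dirac distribution) on a surface normal form, this simultaneously preserves the output distribution and the size.

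Concretely, in the base case $\pi$ consists of a single application of $s1$, hence $M \in \mathrm{SNF}$ and $\mathscr{D} = M$, i.e.~the distribution concentrated on $M$. Since $t1$ imposes no normality constraint, I take $\pi'$ to be the corresponding instance of $t1$, which gives $\pi' : M \Rrightarrow M = \mathscr{D}$. By the definition of the size of a derivation, $\s{\pi} = 0 = \s{\pi'}$.

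In the inductive case the last rule of $\pi$ is $s2$, with premise $M \rightarrow M_1, M_2$ together with sub-derivations $\pi_1 : M_1 \Rightarrow \mathscr{D}_1$ and $\pi_2 : M_2 \Rightarrow \mathscr{D}_2$, and $\mathscr{D} = \frac{1}{2}\cdot \mathscr{D}_1 + \frac{1}{2}\cdot \mathscr{D}_2$. Applying the induction hypothesis to $\pi_1$ and $\pi_2$ yields $\pi'_1 : M_1 \Rrightarrow \mathscr{D}_1$ and $\pi'_2 : M_2 \Rrightarrow \mathscr{D}_2$ with $\s{\pi_i} = \s{\pi'_i}$ for $i \in \{1,2\}$. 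Feeding the same reduction step $M \rightarrow M_1, M_2$ and the derivations $\pi'_1, \pi'_2$ into rule $t2$ produces $\pi' : M \Rrightarrow \frac{1}{2}\cdot \mathscr{D}_1 + \frac{1}{2}\cdot \mathscr{D}_2 = \mathscr{D}$. The size bookkeeping is then immediate: $\s{\pi'} = \max(\s{\pi'_1}, \s{\pi'_2}) + 1 = \max(\s{\pi_1}, \s{\pi_2}) + 1 = \s{\pi}$.

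I do not expect any genuine obstacle here: the statement is essentially the formal record of the inclusion ${\Rightarrow} \subseteq {\Rrightarrow}$ already observed just before the lemma, and the induction is entirely routine once one notices that $s1$ and $t1$ produce identical conclusions while agreeing in size, and that $s2$ and $t2$ are the same rule. The only mild point worth keeping in mind is that the size of an axiom derivation is declared to be $0$ in both systems, so the base case preserves size with no computation at all.
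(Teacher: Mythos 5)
Your proof is correct and follows exactly the argument the paper intends: the paper states this lemma without proof precisely because the routine induction you spell out (replace each $s1$ axiom by $t1$, which is applicable since $\mathrm{SNF}\subseteq\plamb$, and replay $s2$ as the identical rule $t2$) is immediate, and your size bookkeeping matches the definitions of $\s{\cdot}$ for both relations. Nothing is missing.
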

 
Confluence for $\Rightarrow$  follows directly from two technical results about  $\Rrightarrow$.    
 
\begin{lemma} \label{lem: to prove confluence} Let $M\in \plamb$. Let $M \Rrightarrow \mathscr{D}$ be such that $\mathscr{D}=  p_1\cdot N_1+ \ldots + p_n\cdot N_n$, and let    $N_i \Rrightarrow \mathscr{E}_i$ for all $i \leq n $. Then: 
\begin{enumerate}[(1)]
\item $M \Rrightarrow \sum_{i=1}^n p_i \cdot \mathscr{E}_i$ 
\item $\vert M \Rrightarrow \sum_{i=1}^n p_i\cdot \mathscr{E}_i \vert \leq \vert M \Rrightarrow \mathscr{D} \vert + \max_{i=1}^n \vert N_i \Rrightarrow \mathscr{E}_i  \vert$.
\end{enumerate}
\end{lemma}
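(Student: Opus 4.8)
The plan is to argue by structural induction on the derivation $\pi \colon M \Rrightarrow \mathscr{D}$, establishing points (1) and (2) simultaneously, with the two cases dictated by the last rule of $\pi$. If the last rule is $t1$, then $\mathscr{D}$ is the Dirac distribution concentrated on $M$, so $n=1$, $N_1 = M$ and $p_1 = 1$. The hypothesis gives $N_1 \Rrightarrow \mathscr{E}_1$, i.e.\ $M \Rrightarrow \mathscr{E}_1 = \sum_{i=1}^1 p_i \cdot \mathscr{E}_i$, which settles (1); for (2) I would note $\s{\pi} = 0$, so the bound $\s{M \Rrightarrow \mathscr{E}_1} \leq 0 + \s{N_1 \Rrightarrow \mathscr{E}_1}$ holds with equality.

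If the last rule is $t2$, then $M \rightarrow M_1, M_2$ with immediate subderivations $\pi_1 \colon M_1 \Rrightarrow \mathscr{D}_1$, $\pi_2 \colon M_2 \Rrightarrow \mathscr{D}_2$ and $\mathscr{D} = \frac12 \mathscr{D}_1 + \frac12 \mathscr{D}_2$. First I would observe that $\mathrm{supp}(\mathscr{D}_1), \mathrm{supp}(\mathscr{D}_2) \subseteq \{N_1, \ldots, N_n\}$, so every term occurring in $\mathscr{D}_1$ or $\mathscr{D}_2$ already comes equipped with a continuation $N_i \Rrightarrow \mathscr{E}_i$ from the hypothesis. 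Applying the induction hypothesis to $\pi_1$ (resp.\ $\pi_2$) together with those continuations yields $M_1 \Rrightarrow \mathscr{F}_1$ and $M_2 \Rrightarrow \mathscr{F}_2$, where $\mathscr{F}_j = \sum_i \mathscr{D}_j(N_i) \cdot \mathscr{E}_i$ (reading $\mathscr{D}_j(N_i)=0$ off the support). A fresh application of $t2$ then gives $M \Rrightarrow \frac12 \mathscr{F}_1 + \frac12 \mathscr{F}_2$, and (1) follows from the linearity computation $\frac12 \mathscr{F}_1 + \frac12 \mathscr{F}_2 = \sum_i \bigl(\frac12 \mathscr{D}_1(N_i) + \frac12 \mathscr{D}_2(N_i)\bigr)\mathscr{E}_i = \sum_i \mathscr{D}(N_i)\,\mathscr{E}_i = \sum_i p_i\,\mathscr{E}_i$.

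For the size bound in (2), using the inductive estimates $\s{M_j \Rrightarrow \mathscr{F}_j} \leq \s{\pi_j} + \max_{N_i \in \mathrm{supp}(\mathscr{D}_j)} \s{N_i \Rrightarrow \mathscr{E}_i}$ and the definition of the size of a $t2$-derivation, I would chain
\[
\s{M \Rrightarrow \tfrac12 \mathscr{F}_1 + \tfrac12 \mathscr{F}_2}
= \max\bigl(\s{M_1 \Rrightarrow \mathscr{F}_1}, \s{M_2 \Rrightarrow \mathscr{F}_2}\bigr) + 1
\leq \max(\s{\pi_1}, \s{\pi_2}) + \max_{i \leq n} \s{N_i \Rrightarrow \mathscr{E}_i} + 1,
\]
where the last inequality uses the elementary fact $\max(a+b, c+d) \leq \max(a,c) + \max(b,d)$, and then conclude since $\s{\pi} = \max(\s{\pi_1}, \s{\pi_2}) + 1$ and $\mathrm{supp}(\mathscr{D}_1) \cup \mathrm{supp}(\mathscr{D}_2) = \{N_1, \ldots, N_n\}$. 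The one genuinely delicate point is the bookkeeping of overlapping supports: a term may occur in both $\mathscr{D}_1$ and $\mathscr{D}_2$, and I must feed the \emph{same} chosen reduction $N_i \Rrightarrow \mathscr{E}_i$ into both recursive calls, so that the two recombined distributions refer to one common family $\{\mathscr{E}_i\}_i$. Indexing the continuations by the term $N_i$ rather than by the branch makes this consistency automatic, and is what lets the linearity computation and the $\max$-estimate go through cleanly.
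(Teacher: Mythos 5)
Your proof is correct and follows the same route as the paper's, which is stated in one line as ``induction on the structure of the derivation of $M \Rrightarrow \mathscr{D}$, following Dal Lago and Toldin''; you have simply filled in the details that the paper delegates to the reference. Your explicit treatment of the overlapping supports of $\mathscr{D}_1$ and $\mathscr{D}_2$ (indexing the continuations by the term rather than by the branch) and the $\max(a+b,c+d)\leq\max(a,c)+\max(b,d)$ estimate are exactly the bookkeeping the cited argument relies on.
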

\begin{proof}
The proof is by induction on the structure of the derivation of $M \Rrightarrow \mathscr{D}$, and follows exactly~\cite{dal2015higher}.
\end{proof}

\begin{lemma}\label{lem: strenghtening of confluence} Let $M\in \plamb$.  If $M \Rrightarrow \mathscr{D}$ and $M \Rrightarrow \mathscr{E}$, where $\mathscr{D}= p_1\cdot P_1+ \ldots+ p_n \cdot P_n$ and $\mathscr{E}=q_1\cdot Q_1+ \ldots + q_m\cdot Q_m$,  then there exist $ \mathscr{L}_1, \ldots \mathscr{L}_n$ and $ \mathscr{F}_1\ldots \mathscr{F}_m$ such that:
\begin{itemize}
\item  $P_i \Rrightarrow \mathscr{L}_i$ and $Q_j \Rrightarrow \mathscr{F}_j$, for all $i \leq n$,  $j \leq m$; 
\item $\max_{i=1}^n \vert P_i \Rrightarrow \mathscr{L}_i \vert \leq \vert M \Rrightarrow \mathscr{E} \vert $ and  $\max_{j=1}^m \vert Q_j \Rrightarrow \mathscr{F}_j \vert \leq \vert M \Rrightarrow \mathscr{D} \vert$;
\item $\sum_{i=1}^n p_i \cdot \mathscr{L}_i= \sum_{j=1}^m q_j \cdot \mathscr{F}_j$.
\end{itemize}
\end{lemma}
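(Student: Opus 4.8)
The plan is to prove the statement by induction on $\vert M \Rrightarrow \mathscr{D}\vert + \vert M \Rrightarrow \mathscr{E}\vert$, the sum of the sizes of the two given derivations, adapting the strategy of Dal Lago and Toldin~\cite{dal2015higher} but feeding it the strong one-step confluence properties of \textbf{Lemma~\ref{lem: one-reduction step confluence 1}}, \textbf{Lemma~\ref{lem: one-reduction step confluence 2}} and \textbf{Lemma~\ref{lem: one-reduction step confluence 3}}, together with the composition property of \textbf{Lemma~\ref{lem: to prove confluence}}. For the base case I would assume, without loss of generality, that $M \Rrightarrow \mathscr{D}$ is an instance of $t1$, so that $n=1$, $p_1=1$ and $P_1 = M = \mathscr{D}$. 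Then I set $\mathscr{L}_1 \triangleq \mathscr{E}$ using the given derivation $M=P_1 \Rrightarrow \mathscr{E}$, and $\mathscr{F}_j \triangleq Q_j$ using $t1$ on each $Q_j$. The first bullet holds by construction; the size bounds hold since $\vert P_1 \Rrightarrow \mathscr{L}_1\vert = \vert M \Rrightarrow \mathscr{E}\vert$ and every $\vert Q_j \Rrightarrow \mathscr{F}_j\vert = 0 = \vert M \Rrightarrow \mathscr{D}\vert$; and the last bullet holds because $\sum_i p_i \mathscr{L}_i = \mathscr{E} = \sum_j q_j Q_j = \sum_j q_j \mathscr{F}_j$.

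For the inductive step both derivations end in $t2$, say from $M \rightarrow A_1, A_2$ with $A_k \Rrightarrow \mathscr{D}_k$ and $\mathscr{D} = \frac12 \mathscr{D}_1 + \frac12 \mathscr{D}_2$, and from $M \rightarrow B_1, B_2$ with $B_k \Rrightarrow \mathscr{E}_k$ and $\mathscr{E} = \frac12 \mathscr{E}_1 + \frac12 \mathscr{E}_2$. I would first split on whether the two root reductions fire the same redex. If they do, then $A_k = B_k$, and I apply the induction hypothesis separately to the pairs $(A_1 \Rrightarrow \mathscr{D}_1, A_1 \Rrightarrow \mathscr{E}_1)$ and $(A_2 \Rrightarrow \mathscr{D}_2, A_2 \Rrightarrow \mathscr{E}_2)$ — each of strictly smaller summed size — and recombine the resulting continuations with the $\frac12/\frac12$ weights; the averaging preserves both the equality of the joined distributions and the size bounds.

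If instead the two redexes differ, I invoke the appropriate strong one-step confluence lemma: \textbf{Lemma~\ref{lem: one-reduction step confluence 1}} when both steps are deterministic, \textbf{Lemma~\ref{lem: one-reduction step confluence 2}} when one is deterministic and the other is the binary $\proj$-reduction, and \textbf{Lemma~\ref{lem: one-reduction step confluence 3}} when both are binary. This yields the joining terms — respectively a single $N$, a pair $N_1, N_2$, or the $2\times 2$ grid $N_1, N_2, N_3, N_4$ with $A_1 \rightarrow N_1, N_2$, $A_2 \rightarrow N_3, N_4$, $B_1 \rightarrow N_1, N_3$ and $B_2 \rightarrow N_2, N_4$. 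From these I build one-step $\Rrightarrow$-derivations out of each of $A_1, A_2, B_1, B_2$, apply the induction hypothesis to pair each $A_k \Rrightarrow \mathscr{D}_k$ and each $B_k \Rrightarrow \mathscr{E}_k$ with the freshly built one-step derivation, and finally glue everything using \textbf{Lemma~\ref{lem: to prove confluence}}, which also supplies the additive size estimate needed for the second bullet. A useful observation to keep in mind is that, in the grid case, both the $A$-route and the $B$-route reach the same uniform one-step descendant $\frac14(N_1 + N_2 + N_3 + N_4)$, which is what ultimately forces the reconstructed distributions to coincide.

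The hard part will be precisely this binary-versus-binary case governed by \textbf{Lemma~\ref{lem: one-reduction step confluence 3}}: the four grid terms $N_1,\dots,N_4$ must be reduced to \emph{globally coherent} distributions $\mathscr{H}_1,\dots,\mathscr{H}_4$, the same $\mathscr{H}_k$ serving both the derivation descending from $A_k$ and the one descending from $B_\ell$ that share it, so that $\sum_i p_i \mathscr{L}_i$ and $\sum_j q_j \mathscr{F}_j$ agree on the nose. Since the first-side appeals to the induction hypothesis and the second-side appeals fix continuations of each $N_k$ independently, reconciling them requires a nested appeal to the statement being proved, legitimate only because the sub-derivations involved are strictly smaller in the chosen measure. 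Threading this reconciliation while simultaneously keeping each continuation's size below the bound demanded by the second bullet — in particular tracking the $+1$ contributed by the interpolating one-step derivations against the slack in $\vert M \Rrightarrow \mathscr{D}\vert$ and $\vert M \Rrightarrow \mathscr{E}\vert$ — is the delicate bookkeeping; everything else reduces to routine averaging of distributions.
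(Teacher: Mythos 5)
Your proposal is correct and follows essentially the same route as the paper's proof: the same induction on $\vert M \Rrightarrow \mathscr{D}\vert + \vert M \Rrightarrow \mathscr{E}\vert$, the same case split on whether each root $t2$-step is deterministic or genuinely binary, the same appeals to \textbf{Lemma~\ref{lem: one-reduction step confluence 1}}, \textbf{Lemma~\ref{lem: one-reduction step confluence 2}} and \textbf{Lemma~\ref{lem: one-reduction step confluence 3}} to join the one-step reducts, and the same gluing of the resulting continuations via nested uses of the induction hypothesis and \textbf{Lemma~\ref{lem: to prove confluence}} with transitivity of equality. Your spelled-out base case and your identification of the coherence problem for the four grid terms in the binary-versus-binary case accurately reflect where the paper's bookkeeping effort actually goes.
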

\begin{proof}
By induction on $\vert M \Rrightarrow \mathscr{D}  \vert + \vert M \Rrightarrow \mathscr{E} \vert$.  If  one of the derivations ends with $t1$ then there is nothing to prove. 
 Otherwise, both derivations $M \Rrightarrow \mathscr{D}$ and $M \Rrightarrow \mathscr{E}$  end with the rule $t2$
\begin{center}
\begin{tabular}{cc}
\def\defaultHypSeparation{\hskip .1in}
\AxiomC{$M \rightarrow M_1, M_2$}
\AxiomC{$M_1 \Rrightarrow \mathscr{D}_1$}
\AxiomC{$M_2 \Rrightarrow \mathscr{D}_2$}
\RightLabel{$t2$}
\TrinaryInfC{$M \Rrightarrow \frac{1}{2}\cdot \mathscr{D}_1+ \frac{1}{2}\cdot \mathscr{D}_2$}
\DisplayProof
\\ \\ \\
\def\defaultHypSeparation{\hskip .1in}
\AxiomC{$M \rightarrow N_1, N_2$}
\AxiomC{$N_1 \Rrightarrow \mathscr{E}_1$}
\AxiomC{$N_2 \Rrightarrow \mathscr{E}_2$}
\RightLabel{$t2$}
\TrinaryInfC{$M \Rrightarrow \frac{1}{2}\cdot \mathscr{E}_1+ \frac{1}{2}\cdot \mathscr{E}_2$}
\DisplayProof
\end{tabular}
\end{center}
Clearly, if $M_1, M_2$ is equal to $N_1, N_2$ (modulo sort) then we apply the induction hypothesis and we are done. So let us suppose that $M_1, M_2$ and $N_1, N_2$ are different. We have four cases:
\begin{itemize}
\item If $M_1=M_2$ and $N_1= N_2$ then by \textbf{Lemma}~\ref{lem: one-reduction step confluence 1}  there exists $L$ such that $M_1 \rightarrow L$ and $N_1 \rightarrow L$. By using the rule $t1$ we  get $L \Rrightarrow L$, so $M_1 \Rrightarrow L$. By induction hypothesis on $M_1 \Rrightarrow \mathscr{D}_1$ and $M_1 \Rrightarrow L$ there exist $\mathscr{L}_1, \ldots, \mathscr{L}_n$ and $\mathscr{K}$ such that,  for all $i \leq n$, $P_i \Rrightarrow \mathscr{L}_i$, $L \Rrightarrow \mathscr{K}$, $\max_{i=1}^n (\vert P_i \Rrightarrow \mathscr{L}_i\vert)\leq \vert M \Rrightarrow L\vert$, $\vert L \Rrightarrow \mathscr{K}\vert \leq \vert M \Rrightarrow \mathscr{D} \vert $, and $\sum_{i=1}^n p_i\cdot\mathscr{L}_i= \mathscr{K}$. Similarly, we have that there exists $\mathscr{F}_1, \ldots, \mathscr{F}_m, \mathscr{H}$ such that, for all $i \leq m$, $Q_i \Rrightarrow \mathscr{F}_i$, $L \Rrightarrow \mathscr{H}$, $\max_{i=1}^m (\vert Q_i \Rrightarrow \mathscr{F}_i\vert)\leq \vert M \Rrightarrow L \vert$, $\vert L \Rrightarrow \mathscr{H } \vert\leq \vert M \Rrightarrow \mathscr{E} \vert $, and $\sum_{i=1}^m q_i\cdot \mathscr{F}_i= \mathscr{H}$. We obtain $\vert L \Rrightarrow \mathscr{K} \vert + \vert L \Rrightarrow \mathscr{H} \vert \leq \vert M \Rrightarrow \mathscr{D} \vert + \vert M \Rrightarrow \mathscr{E}\vert$. Let $\mathscr{K}= r_1\cdot R_1+ \ldots + r_h\cdot R_h$ and $\mathscr{H}= s_1\cdot S_1+ \ldots + s_k\cdot S_k$. We apply the induction hypothesis and we obtain that there exist $\mathscr{R}_1, \ldots, \mathscr{R}_h, \mathscr{S}_1, \ldots, \mathscr{S}_k$ such that $R_i \Rrightarrow \mathscr{R}_i$ and  $S_j\Rrightarrow \mathscr{S}_j$ for all $i\leq h$ and $j\leq k$. Moreover, $\max_{i=1}^h (\vert R_i \Rrightarrow \mathscr{R}_i \vert)\leq \vert L \Rrightarrow \mathscr{H}\vert$, $\max_{k=1}^k(\vert S_j \Rrightarrow \mathscr{S}_j \vert)\leq \vert L \Rrightarrow \mathscr{K}\vert$, and $\sum_{i=1}^h r_i\cdot \mathscr{R}_i= \sum_{j=1}^k s_j\cdot \mathscr{S}_j$. Notice that the cardinality of $\mathscr{D}$ and $\mathscr{K}$ may differ but for sure they have the same terms with non zero probability. Similar, $\mathscr{E}$ and $\mathscr{H}$ have the same terms with non zero probability.
By using \textbf{Lemma}~\ref{lem: to prove confluence} and using the transitive property of equality we obtain that $\sum_{i=1}^n p_i \cdot \mathscr{R}_i = \sum_{i=1}^n r_i\cdot \mathscr{R}_i=\sum_{j=1}^m  s_j\cdot \mathscr{S}_j = \sum_{j=1}^m q_j\cdot \mathscr{S}_j$. Moreover, we have
\allowdisplaybreaks
\begin{align*}
&\max_{i=1}^{n} (\vert P_i \Rrightarrow \mathscr{R}_i\vert )\leq \vert L \Rrightarrow \mathscr{H}\vert \leq \vert M \Rrightarrow \mathscr{E}\vert\\
& \max_{j=1}^m (\vert Q_j \Rrightarrow \mathscr{S}_j \vert )\leq \vert L \Rrightarrow \mathscr{K} \vert \leq \vert M \Rrightarrow \mathscr{D}\vert  .
\end{align*}
\item If $M_1 \not = M_2$ and $N_1= N_2$ then by \textbf{Lemma}~\ref{lem: one-reduction step confluence 2}  there exists $L_1, L_2$ such that $M_1 \rightarrow L_1$, $M_2 \rightarrow L_2$ and $N_1 \rightarrow L_1, L_2$. W.l.o.g. we can assume that $\mathscr{D}_1= 2p_1\cdot P_1+ \ldots + 2p_{o-1}\cdot P_{o-1}+ p_o\cdot P_{o}+ \ldots + p_t\cdot P_t$ and $\mathscr{D}_2= p_o\cdot P_o+ \ldots + p_{h}\cdot P_{h}+ 2p_{t+1}\cdot P_{t+1}+ \ldots + 2p_n\cdot  P_n$ where $1 \leq o \leq t \leq n$.By using the induction rule, we associate with every $L_i$ a distribution $\mathscr{P}_i$ such that $L_1 \Rrightarrow \mathscr{P}_1$ and $L_2 \Rrightarrow \mathscr{P}_2$. Let $\mathscr{P}_1= r_1\cdot R_1 + \ldots + r_h\cdot R_h$ and $\mathscr{P}_2= s_1 \cdot S_1 + \ldots + s_k\cdot S_k$. So, we have, for all $i$, $M_i \Rrightarrow \mathscr{D}_i$ and $M_i \Rrightarrow \mathscr{P}_i$, $N_1 \Rrightarrow \mathscr{E}$ and $N_1 \Rrightarrow \frac{1}{2}\cdot \mathscr{P}_1+ \frac{1}{2}\cdot \mathscr{P}_2$. By applying the induction hypothesis on all the three cases we have that there exist $\mathscr{L}_1,\ldots, \mathscr{L}_n, \mathscr{F}_1, \ldots, \mathscr{F}_m, \mathscr{K}, \mathscr{H}, \mathscr{R}, \mathscr{S}$ such that $P_1\Rrightarrow \mathscr{L}_1, \ldots, P_n \Rrightarrow \mathscr{L}_n $, $Q_1 \Rrightarrow \mathscr{F}_1$, \ldots, $P_m \Rrightarrow \mathscr{F}_m$,  $L_1 \Rrightarrow \mathscr{K}$, $L_2 \Rrightarrow \mathscr{H}$, $L_1 \Rrightarrow \mathscr{R}$, and $L_2 \Rrightarrow \mathscr{S}$. Moreover:
\begin{enumerate}
\item $\max_{1 \leq i \leq t} (\vert P_i \Rrightarrow \mathscr{L}_i \vert)\leq \vert M_1 \Rrightarrow \mathscr{P}_1\vert$, $\vert L_1 \Rrightarrow \mathscr{K}\vert \leq \vert M_1 \Rrightarrow \mathscr{D}_1\vert$, and $\sum_{i=1}^{o-1} 2p_i \cdot \mathscr{L}_i+ \sum_{i=o}^t p_i\cdot  \mathscr{L}_i= \mathscr{K}$.
\item $\max_{o \leq i \leq n} (\vert P_1 \Rrightarrow \mathscr{L}_i \vert)\leq \vert M_2 \Rrightarrow \mathscr{P}_2\vert$, $\vert L_2 \Rrightarrow \mathscr{H}\vert \leq \vert M_2 \Rrightarrow \mathscr{D}_2\vert$, and $\sum_{i=o}^{t} p_i \cdot \mathscr{L}_i+ \sum_{i=t+1}^n 2p_i \cdot \mathscr{L}_i= \mathscr{H}$.
\item $\max_{j=1}^m (\vert Q_j \Rrightarrow \mathscr{F}_j \vert)\leq \vert N_1 \Rrightarrow \frac{1}{2}\cdot \mathscr{P}_1+ \frac{1}{2}\cdot \mathscr{P}_2\vert $, $\max (\vert L_1 \Rrightarrow \mathscr{R} \vert, \vert L_2 \Rrightarrow \mathscr{S} \vert ) \leq \vert N_1 \Rrightarrow \mathscr{E} \vert$, and $\sum_{j=1}^m q_j \cdot \mathscr{F}_j = \frac{1}{2}\cdot \mathscr{R}+ \frac{1}{2}\cdot \mathscr{S}$.
\end{enumerate}
Notice that $\vert L_1 \Rrightarrow \mathscr{R}\vert + \vert L_1 \Rrightarrow \mathscr{K}\vert< \vert M \Rrightarrow \mathscr{D}\vert + \vert M \Rrightarrow \mathscr{E}\vert$. Moreover, notice also that the following inequality holds: $\vert L_2 \Rrightarrow \mathscr{S}\vert + \vert L_2 \Rrightarrow \mathscr{H} \vert < \vert M \Rrightarrow \mathscr{D}\vert + \vert M \Rrightarrow \mathscr{E} \vert$.
We are allowed to apply, again,  induction hypothesis and have a confluent distribution for both cases.  \textbf{Lemma}~\ref{lem: to prove confluence} then allows us to connect the first two main derivations and by transitivive property of equality we have the thesis.
\item The case $M_1= M_2$ and $N_1 \not = N_2$ and the case $M_1 \not = M_2$ and $N_1 \not = N_2$ are proven similarly by using, respectively, \textbf{Lemma}~\ref{lem: one-reduction step confluence 2} and \textbf{Lemma}~\ref{lem: one-reduction step confluence 3}.
\qedhere
\end{itemize}
\end{proof} 

Finally, we are ready for the following proof:
\begin{proof}[Proof of {\normalfont\textbf{Theorem}~\ref{thm: confluence for lamb}}]
Since $\Rightarrow \, \subseteq \, \Rrightarrow$, we have $M \Rrightarrow \mathscr{D}$ and $M \Rrightarrow \mathscr{E}$. By \textbf{Lemma}~\ref{lem: strenghtening of confluence},  $\mathscr{D}= \mathscr{E}$.
\end{proof}
\section{Proofs of Section~\ref{sec4}}
\begin{proof}[Proof of {\normalfont\textbf{Lemma}~\ref{lem: properties of weight}}]
By induction on the structure of $\mathcal{D}$. Point~\textsf{\textbf{\ref{enum: properties of weight 1}}} and point~\textsf{\textbf{\ref{enum: properties of weight 3}}} are straightforward. Concerning point~\textsf{\textbf{\ref{enum: properties of weight 2}}}, we consider the most interesting case where $\mathcal{D}$ has been obtained from a derivation $\mathcal{D}'$ by applying the rule $sp$ with $sp$-rank $k$. By using the induction hypothesis, we have: 
\begin{equation*}
\begin{aligned}
\mathrm{w}(\mathcal{D}, r)&= r \cdot( \mathrm{w}(\mathcal{D}', r)+k)+1 \\
&\leq  r \cdot (r^{\mathrm{d}(\mathcal{D}')} \cdot \mathrm{w}(\mathcal{D}', 1)+k) +1\\
&\leq  r \cdot (r^{\mathrm{d}(\mathcal{D}')} \cdot \mathrm{w}(\mathcal{D}', 1)+ r^{\mathrm{d}(\mathcal{D}')}\cdot k) +r^{\mathrm{d}(\mathcal{D}')+1}\\
&\leq   r^{\mathrm{d}(\mathcal{D}')+1} \cdot (\mathrm{w}(\mathcal{D}', 1) +k+1)  = r^{\mathrm{d}(\mathcal{D})} \cdot \mathrm{w}(\mathcal{D}, 1)  .
\end{aligned}
\end{equation*}
\end{proof}

The following lemmas can be easily proved by inspecting the rules of \STAP.

\begin{lemma}[Generation] \label{lem: generation} {\ }
\begin{enumerate}
\item \label{enum: generation for STA lin absts  types} If $\mathcal{D}\triangleleft \Gamma \vdash \lambda  x. M: \sigma$  then $\sigma= \forall \vec{\alpha}.( (A \multimap B)\langle D_1/\beta_1, \ldots, D_n/\beta_n \rangle)$ and $\mathcal{D}$ is some $\mathcal{D}'\triangleleft \Gamma, x: A \vdash M': B$  followed by  ${\multimap} \mathrm{I}l$  and a sequence of  $\mathrm{\forall I}$, $\mathrm{\forall E}$,  and $m$ where  $\vec{\alpha}=\alpha_1, \ldots, \alpha_k$, for some $k\geq 0$.
\item \label{enum: generation for STA exp absts  types} If $\mathcal{D}\triangleleft \Gamma \vdash \lambda \oc x. M: \sigma$  then $\sigma= \forall \vec{\alpha}.( (\oc \tau \multimap A)\langle D_1/\beta_1, \ldots, D_n/\beta_n \rangle)$ and $\mathcal{D}$ is some $\mathcal{D}'\triangleleft \Gamma, x:\oc  \tau \vdash M': A$  followed by  ${\multimap} \mathrm{I}e$ and a sequence of  $\mathrm{\forall I}$, $\mathrm{\forall E}$,  and $m$ where  $\vec{\alpha}=\alpha_1, \ldots, \alpha_k$, for some $k\geq 0$.
\item \label{enum: generation for STA app types} If $\mathcal{D}\triangleleft \Gamma \vdash MN: \sigma$ then $\sigma=\forall \vec{\alpha}. (A\langle D_1/\beta_1, \ldots, D_n/\beta_n \rangle) $ and $\mathcal{D}$  is some $\mathcal{D}'\triangleleft \Gamma' \vdash M': \tau \multimap A$ and $\mathcal{D}''\triangleleft \Gamma'' \vdash N': \tau$   followed by  ${\multimap} \mathrm{E}$  and a sequence of  $\mathrm{\forall I}$, $\mathrm{\forall E}$, and $m$, where $\vec{\alpha}=\alpha_1, \ldots, \alpha_k$ and   $k\geq 0$.
\item\label{enum: generation for STA copy}  If $\mathcal{D}\triangleleft \Gamma \vdash \mathtt{copy}^V \, N \mathtt{\ as \ }x_1,x_2 \mathtt{\ in \ } \langle M_1, M_2\rangle : \sigma$, then $\sigma = B_1\with B_2$  and $\mathcal{D}$ is $\mathrm{\with I}$ followed by a sequence of applications of the rule $m$.

\item \label{enum: generation for STA pi} If $\mathcal{D}\triangleleft \Gamma \vdash \mathtt{proj}(M):\sigma$ then $\sigma=\forall \vec{\alpha}.( B_i\langle D_1/\beta_1, \ldots, D_n/\beta_n \rangle)$, and  $\mathcal{D}$ is  $\mathcal{D}' \triangleleft \Gamma' \vdash M': B_1\with B_2$ followed by $\mathrm{\with E}$ and a sequence of  $\mathrm{\forall I}$, $\mathrm{\forall E}$, and $m$, where  $\vec{\alpha}=\alpha_1, \ldots, \alpha_k$, for some $k\geq 0$.
\item \label{enum: generation for STAoplus bang } If $\mathcal{D}\triangleleft \Gamma \vdash  \oc M: \sigma $ then $\sigma= \oc \sigma'$, $\Gamma$ is an strictly exponential context,  and  $\mathcal{D}$ is $sp$, followed by some applications of the rule  $m$.
\end{enumerate}
\end{lemma}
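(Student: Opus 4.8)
The plan is to prove the six points simultaneously by induction on the height of the derivation $\mathcal{D}$. The key observation, read off from Figure~\ref{fig: the system STAplus}, is that the rules split into two groups. The \emph{introduction} rules $ax$, $\multimap$I$l$, $\multimap$I$e$, $\multimap$E, $\with$I, $\with$E, and $sp$ are syntax-directed: each forces a definite top-level constructor on the subject of its conclusion, namely a variable, $\lambda x.M$, $\lambda \oc x.M$, an application, a $\mathtt{copy}$, a $\proj$, and a $\oc$-term, respectively. The remaining \emph{structural} rules $\forall$I, $\forall$E, and $m$ preserve the head constructor of the subject: $\forall$I and $\forall$E do not alter the subject at all, while $m$ merely rewrites \emph{free} variables through the dereliction substitution $[\mathtt{d}(x)/x_1,\ldots,\mathtt{d}(x)/x_n]$, which lies strictly below the head. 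Hence an abstraction remains an abstraction, an application an application, a $\mathtt{copy}$ a $\mathtt{copy}$, and so on.

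Given this, each point reduces to a case analysis on the last rule of $\mathcal{D}$. Consider the first point, with subject $\lambda x.M$. The rules $ax$, $\multimap$I$e$, $\multimap$E, $\with$I, $\with$E, and $sp$ are all excluded, since they would give a subject with a different head. So the last rule is $\multimap$I$l$, $\forall$I, $\forall$E, or $m$. The case $\multimap$I$l$ is the base of the chain and directly yields $\mathcal{D}'\triangleleft\Gamma,x:A\vdash M:B$ with type $A\multimap B$. In the three structural cases the premise still has subject $\lambda x.M$ (up to the dereliction substitution, which only reshapes the context recorded for $\mathcal{D}'$), so the induction hypothesis applies; I then propagate the effect on the type, recording that $\forall$I prepends a quantifier, $\forall$E applies an instantiation $\langle D/\alpha\rangle$, and $m$ leaves the type unchanged. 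Composing these along the chain gives the announced normal form $\forall\vec\alpha.((A\multimap B)\langle D_1/\beta_1,\ldots,D_n/\beta_n\rangle)$. The points for $\lambda\oc x.M$, for application, and for $\proj$ are verbatim the same, with $\multimap$I$e$, $\multimap$E, and $\with$E as the respective introduction rule. For the $\mathtt{copy}$ point one additionally notes that the conclusion type $C_1\with C_2$ is not of the form $\forall\alpha.A$, so neither $\forall$I nor $\forall$E can sit above it; only $m$ may follow $\with$I, which matches the stated shape.

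The $\oc M$ point is handled the same way but is more rigid. A $\oc$-headed subject can only be produced by $sp$, which in addition forces the context $\Gamma$ to be strictly exponential. Since $\oc\sigma'$ is an exponential, not a linear, type, it matches neither the conclusion of $\forall$I nor the premise of $\forall$E, so the sole structural rule that may follow $sp$ is $m$; and $m$ keeps the context strictly exponential, replacing several assumptions of type $\sigma$ by one of type $\oc\sigma$. This gives exactly ``$sp$ followed by a sequence of $m$''.

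The only genuinely delicate part, and where care is needed, is the bookkeeping for $m$. One must check that the dereliction substitution preserves head constructors (immediate, as it acts on free variable occurrences below the head) and, crucially, track how the premise context — which may contain several copies $x_1:\sigma,\ldots,x_n:\sigma$ of one variable — is merged into a single assumption $x:\oc\sigma$ in the conclusion, so that the generating sub-derivation $\mathcal{D}'$ returned by the induction is correctly stated over the pre-merge context. A lesser nuisance is that $\forall$I and $\forall$E may be interleaved arbitrarily along the structural chain; but since each only reshapes the type and never the subject, a routine auxiliary induction shows that the accumulated type is always a universal instance of the base type, which is all the downstream applications of the lemma require. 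Everything else is direct rule inspection.
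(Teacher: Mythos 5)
Your overall strategy --- induction on the derivation, splitting the rules into the syntax-directed ones ($ax$, $\multimap$I$l$, $\multimap$I$e$, $\multimap$E, $\with$I, $\with$E, $sp$), which fix the head constructor of the subject, and the structural ones ($\forall$I, $\forall$E, $m$), which preserve it --- is precisely the ``inspection of the rules'' that the paper itself offers in place of a proof, so on points~\ref{enum: generation for STA lin absts  types}, \ref{enum: generation for STA exp absts  types}, \ref{enum: generation for STA app types}, \ref{enum: generation for STA pi} and~\ref{enum: generation for STAoplus bang } you are doing the same thing as the authors, only in more detail; your care about the $m$-rule's effect on the context and about flattening interleaved $\forall$I/$\forall$E chains into the form $\forall\vec\alpha.(\,\cdot\,\langle D_1/\beta_1,\ldots\rangle)$ is exactly where the (routine) work lies.

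There is, however, a genuine flaw in your argument for point~\ref{enum: generation for STA copy}. You exclude $\forall$I above $\with$I on the grounds that ``$C_1\with C_2$ is not of the form $\forall\alpha.A$'', but that condition concerns the \emph{conclusion} of $\forall$I, whereas what must be checked is whether the current judgment can serve as its \emph{premise}: $\forall$I applies to any judgment whose type is $A\langle\gamma/\alpha\rangle$ for some fresh $\gamma$, and \emph{every} linear type $T$ has this form (take $A=T$ with $\alpha\notin FV(T)$). Hence a vacuous $\forall$I can syntactically be appended to the conclusion of $\with$I, yielding $\forall\alpha.(C_1\with C_2)$, which is not of the shape $B_1\with B_2$; your case analysis does not close, and point~\ref{enum: generation for STA copy} as literally stated needs either a convention excluding vacuous generalization or the weaker conclusion $\forall\vec\alpha.((B_1\with B_2)\langle\ldots\rangle)$ in the style of the other items (the downstream uses, e.g.\ in the proof of \textbf{Theorem}~\ref{thm: weighted subject reduction}, tolerate this weakening, since $\forall$I and $\forall$E leave weights unchanged). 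By contrast, the analogous step in your point~\ref{enum: generation for STAoplus bang } is sound, though again not quite for the reason you state: there the type $\oc\sigma'$ is not a linear type, so it can be neither the premise nor the conclusion type of $\forall$I or $\forall$E, and both tests happen to fail at once. The reliable test to make explicit throughout is: ``can the judgment under analysis be the \emph{conclusion} of rule $R$, and if so, what does the induction hypothesis force on $R$'s premise?''
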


 \begin{lemma} \label{lem: exponential context from exponential conclusion in STAoplus}{\ }
\begin{enumerate}
\item  \label{enum: exponential conclusion} If $\mathcal{D}\triangleleft \Gamma \vdash M: \oc \sigma$ then  $\mathcal{D}$ has been obtained from a derivation $\mathcal{D}'$ by applying the rule $sp$, followed by some applications of the rule $m$. Hence, $\Gamma$ is a strictly exponential context and $M= \oc M'$, for some $M'$.
\item \label{enum: x:A is slinear in M}If $\mathcal{D} \triangleleft \Gamma, x: A \vdash M: \tau$ then $x$ is $s$-linear in $M$.
\item \label{enum: exponential context}  If $\mathcal{D}\triangleleft \Gamma, x:\oc \sigma \vdash M: \tau$ then either $x: \oc \sigma$ has been introduced by a $sp$ rule or by a $m$ rule. 
\end{enumerate}
\end{lemma}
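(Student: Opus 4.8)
The plan is to prove the three items separately, each by structural induction on $\mathcal{D}$ with a case analysis on its last rule. The common engine throughout is one syntactic observation about Figure~\ref{fig: the system STAplus}: the only rules whose conclusion carries a strictly exponential type (a type of shape $\oc\sigma$) are $sp$ and $m$, and likewise the only rules that place a \emph{fresh} strictly exponential assumption into the context are $sp$ and $m$ — the axiom $ax$ introduces a linear type both as the subject type and as its variable's type, and every other rule merely transmits context assumptions from its premises.

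For item~\ref{enum: exponential conclusion}, since the conclusion type is $\oc\sigma$, the last rule can only be $sp$ or $m$. If it is $sp$, the claim is immediate from the very shape of the rule: the context is strictly exponential and the subject is $\oc M'$. If it is $m$, the premise also has type $\oc\sigma$, so the induction hypothesis yields a premise derivation ending in $sp$ followed by $m$'s, with strictly exponential premise context and premise subject $\oc M''$; it then remains to check that the extra $m$-step preserves all three conclusions, which it does because $m$ adds a strictly exponential assumption, its substitution $[\mathtt{d}(x)/x_i]$ acts strictly inside $M''$ leaving the outermost $\oc$ untouched, and appending one $m$ keeps the derivation of the required shape. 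This item is essentially the converse reading of Lemma~\ref{lem: generation}.\ref{enum: generation for STAoplus bang}.

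For item~\ref{enum: x:A is slinear in M}, the governing invariant is that a variable of linear type $A$ is never contracted, since contraction happens only through $m$ and $sp$ and both produce a $\oc$-typed assumption, never a linearly typed one. The base case $ax$ is immediate, and the unary rules that either leave the subject unchanged or place a binder around it ($\multimap$I$l$, $\multimap$I$e$, $\forall$I, $\forall$E, $\with$E) follow from the induction hypothesis, as they neither duplicate $x$ nor push it under a $\oc$ or a $\mathtt{d}$. The case $sp$ is vacuous, because every assumption in its conclusion context is $\oc$-typed, so no linearly typed $x$ can occur there; the case $m$ leaves the linearly typed assumptions of $\Gamma$ untouched and substitutes only for the contracted $\oc$-typed variables, so $x$ keeps its unique surface occurrence. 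The delicate case is $\multimap$E, where the conclusion context is the multiset union $\Gamma_1,\Gamma_2$: here I would argue that the single assumption $x:A$ lands in exactly one premise (it cannot be shared across both, again because linear assumptions are never produced by contraction), so the induction hypothesis applies to that premise while $x\notin\mathrm{FV}$ of the other component, whence $x$ is $s$-linear in $MN$.

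For item~\ref{enum: exponential context}, I would trace the assumption $x:\oc\sigma$ upward by induction. If the last rule is $sp$, then $x$ is one of the freshly introduced $\oc$-typed variables; if it is $m$ introducing $z:\oc\rho$, then either $x=z$ (introduced here) or $x$ already occurs in the premise and the induction hypothesis applies. Every other rule merely transmits $x:\oc\sigma$ from a premise ($\multimap$E passes it to the unique premise containing it, $\with$I passes it to the premise typing $N$, $\multimap$I$e$ discharges a \emph{different} $\oc$-typed variable, and the rest keep the context unchanged up to a linearly typed binder), so the claim follows by induction; the base case $ax$ cannot present a $\oc$-typed assumption and is vacuous. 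The only real friction across the three items is the $\multimap$E step of item~\ref{enum: x:A is slinear in M}, which rests on making explicit the no-duplication invariant for linearly typed assumptions; once that is stated, all remaining cases are routine rule inspection, as announced.
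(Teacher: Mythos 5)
Your proposal is correct and follows exactly the route the paper intends: the paper gives no written proof for this lemma, stating only that it "can be easily proved by inspecting the rules of \STAP," and your structural induction with case analysis on the last rule is precisely that inspection carried out, resting on the same two observations (only $sp$ and $m$ produce or introduce strictly exponential types, and linearly typed assumptions are never weakened or contracted). The one case you leave implicit in item~\ref{enum: x:A is slinear in M} — the rule $\with$I, where the assumption $x:A$ must land in the premise $\Delta\vdash N:C$ — is handled by the same disjointness argument as your $\multimap$E case, so nothing is missing.
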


Following Gaboardi and Ronchi~\cite{gaboardi2009light},  we prove a \enquote{weighted} formulation of the substitution property. Since we work with two kinds of types, namely the linear types (i.e.~those with form $A$) and the strictly exponential ones (i.e.~those with form $ \sigma$), we  split  the task: first, we consider a substitution theorem for   linear types; then, we generalize the statement to arbitrary types.

 \begin{lemma}[Weighted linear substitution]\label{lem: weighted linear substitution}  Let $r\geq 1$. If $\mathcal{D}_1 \triangleleft \Gamma, x: A \vdash M: \tau$ and $\mathcal{D}_2 \triangleleft \Delta \vdash N: A$, then there exists a derivation $S(\mathcal{D}_1, \mathcal{D}_2)$ such that: 
\begin{itemize}
\item $S(\mathcal{D}_1, \mathcal{D}_2)\triangleleft \Gamma, \Delta \vdash M[N/x]: \tau$, 
\item $\mathrm{w}(S(\mathcal{D}_1, \mathcal{D}_2), r) \leq \mathrm{w}(\mathcal{D}_1,r)+ \mathrm{w}(\mathcal{D}_2, r)$. 
\end{itemize}
\end{lemma}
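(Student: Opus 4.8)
The plan is to proceed by induction on the derivation $\mathcal{D}_1 \triangleleft \Gamma, x:A \vdash M:\tau$, with a case analysis on its last rule. Two structural facts about the hypothesis $x:A$ drive the whole argument. First, since $A$ is a \emph{linear} type (generated by~\eqref{eqn: with grammar mathcalAe}), it is never strictly exponential, i.e.~never of the shape $\oc\sigma$. Consequently $x:A$ can be neither a conclusion-context formula of an $sp$ rule (whose entire context must be strictly exponential) nor the principal formula $x:\oc\sigma$ of an $m$ rule; so in the $sp$ case the configuration is simply impossible, and in the $m$ case $x:A$ is necessarily a \emph{side} assumption left untouched by the rule. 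Second, by {\textbf{Lemma}~\ref{lem: exponential context from exponential conclusion in STAoplus}.\ref{enum: x:A is slinear in M}} the variable $x$ is $s$-linear in $M$, hence it occurs exactly once at the surface and never inside a subterm $\oc P$ or $\mathtt{d}(P)$. This guarantees that the ordinary substitution $M[N/x]$ agrees with the surface-preserving one and commutes with all the $\mathtt{d}(-)$ decorations introduced by $m$.

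In the base case $\mathcal{D}_1$ is an axiom $x:A \vdash x:A$, so $M[N/x]=N$ and I take $S(\mathcal{D}_1,\mathcal{D}_2)=\mathcal{D}_2$; since $\mathrm{w}(\mathcal{D}_1,r)=1$ the weight bound holds. For the unary rules $\multimap$I$l$, $\multimap$I$e$, $\with$E, $\forall$I and $\forall$E, I apply the induction hypothesis to the unique premise (renaming the bound, resp.~eigen, variable to avoid clashing with $\mathrm{FV}(N)$ in the abstraction and $\forall$I cases) and reapply the same rule; each such rule adds the \emph{same} constant to the weight of the conclusion and of its premise, so the bound from the induction hypothesis is preserved. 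For the $m$ rule, $x:A$ sits in the side context $\Gamma'$, so I push the substitution through the premise and reapply $m$; because $x\neq z_i$ the $m$-rank is unchanged and the summand $k$ occurs on both sides, while the induction hypothesis supplies the missing $+\mathrm{w}(\mathcal{D}_2,r)$. The context-splitting cases are $\multimap$E and $\with$I: here $s$-linearity forces $x$ into exactly one premise, the function or the argument of $MN$, and into the premise typing $N'$ in $\mathtt{copy}^V N'\,\mathtt{as}\,x_1,x_2\,\mathtt{in}\,\langle M_1,M_2\rangle$ (as $x_1,x_2$ are bound and $V$ is closed). I apply the induction hypothesis to that single premise, leave the others unchanged, and recombine; the additive constants ($+1$ for $\multimap$E, and $+2$ together with the three untouched premises for $\with$I) cancel on both sides, yielding exactly $\mathrm{w}(S(\mathcal{D}_1,\mathcal{D}_2),r)\le \mathrm{w}(\mathcal{D}_1,r)+\mathrm{w}(\mathcal{D}_2,r)$.

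The main obstacle is organising the context-splitting cases cleanly: one must be certain that the multiplicative discipline of $\multimap$E, together with $s$-linearity, genuinely confines $x$ to a single premise, so that $\mathrm{w}(\mathcal{D}_2,r)$ is added \emph{once} rather than duplicated — duplication being exactly what would break the bound. This is where the linearity of $A$ is essential; an exponential hypothesis could be shared across premises or promoted by $sp$, and the weighted inequality would then fail. The remaining delicate point is bureaucratic: checking that $M[N/x]$ commutes with the explicit dereliction markers so that the reconstructed derivation really types $M[N/x]$ and not a $\mathtt{d}$-shifted variant, which is immediate once one records that $x$ never lies under a $\mathtt{d}$ or a $\oc$. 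Throughout, the generation lemmas ({\textbf{Lemma}~\ref{lem: generation}}) can be used to read off the exact shape of $\mathcal{D}_1$ in each case, and the statement then generalizes to arbitrary (possibly exponential) substituted types in a companion lemma that invokes this linear version.
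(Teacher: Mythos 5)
Your proof is correct and follows essentially the same route as the paper's: induction on $\mathcal{D}_1$ with the two key observations that $sp$ is ruled out because $A$ is a linear type and that $s$-linearity of $x$ (from \textbf{Lemma}~\ref{lem: exponential context from exponential conclusion in STAoplus}.\ref{enum: x:A is slinear in M}) confines the substitution to a single premise of each context-splitting rule, so $\mathrm{w}(\mathcal{D}_2,r)$ is added exactly once. The paper spells out only the $\with$I case and dismisses the rest as easy; your sketch covers the same ground in slightly more detail and reaches the same weight inequality.
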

\begin{proof}
By \textbf{Lemma}~\ref{lem: exponential context from exponential conclusion in STAoplus}.\ref{enum: x:A is slinear in M}, $x$ is $s$-linear in $M$, i.e.~$x$ occurs exactly once  in $M$ and this occurrence is out of the scope of both a  $\oc$-operator and a $\mathtt{d}$-operator.  The statement is proved by induction on $\mathcal{D}_1$. The cases were the last rule is $ax$, $\multimap$I$l$, $\multimap$I$e$, $\multimap$E,  $\with$E, $\forall$I,  $\forall$E, and $m$ are easy.  Now, suppose  $\mathcal{D}_1$ is of the form:
  \begin{prooftree}
 \def\defaultHypSeparation{\hskip .1cm}
 \AxiomC{$\mathcal{D}'$}
 \noLine
\UnaryInfC{$\Gamma, x:A \vdash P: B$}
\AxiomC{$\mathcal{D}''$}
 \noLine
\UnaryInfC{$ x_1:B \vdash Q_1: C_1$}
\AxiomC{$\mathcal{D}''''$}
 \noLine
\UnaryInfC{$ x_2:B \vdash Q_2: C_2$}
\AxiomC{$\mathcal{D}''''$}
 \noLine
\UnaryInfC{$ \vdash V: B$}
\RightLabel{$\with$I}
\QuaternaryInfC{$\Gamma, x:A \vdash \mathtt{copy}^V\, P \mathtt{\ as \ }x_1, x_2 \mathtt{\ in \ }\langle Q_1, Q_2\rangle : C_1 \with C_2$}
\end{prooftree}
so that $\tau=C_1\with C_2$ and $M=\mathtt{copy}^V \, P \mathtt{\ as \ }x_1, x_2 \mathtt{\ in \ }\langle Q_1, Q_2\rangle $. By induction hypothesis, there exists $S(\mathcal{D}', \mathcal{D}_2)\triangleleft   \Gamma , \Delta \vdash P[N/x]: B$ such that $\mathrm{w}(S(\mathcal{D}', \mathcal{D}_2),r)\leq   \mathrm{w}(\mathcal{D}', r)+\mathrm{w}(\mathcal{D}_2, r)$. We define $S(\mathcal{D}_1, \mathcal{D}_2)$ with conclusion:
\begin{equation*}
\Gamma, \Delta \vdash \mathtt{copy}^V\, P[N/x] \mathtt{\ as \ }x_1, x_2 \mathtt{\ in \ }\langle Q_1, Q_2\rangle : C_1 \with C_2
\end{equation*}
 as the derivation obtained by applying $\with$I to $S(\mathcal{D}', \mathcal{D}_2)$, $\mathcal{D}''$, $\mathcal{D}'''$, $\mathcal{D}''''$. Moreover, by using the induction hypothesis, we have:
\allowdisplaybreaks
\begin{align*}
\we{S(\mathcal{D}_1, \mathcal{D}_2)}{r}&=\mathrm{w}(S(\mathcal{D}', \mathcal{D}_2),r)+\we{\mathcal{D}''}{r}+ \we{\mathcal{D}'''}{r}+ \we{\mathcal{D}''''}{r}+2 \\
&\leq \mathrm{w}(\mathcal{D}', r)+\we{\mathcal{D}''}{r}+ \we{\mathcal{D}'''}{r}+ \we{\mathcal{D}''''}{r}+\mathrm{w}(\mathcal{D}_2, r)+2\\
&= \mathrm{w}(\mathcal{D}_1, r)+\we{\mathcal{D}_2}{r}.
\end{align*}
Last, since $A$ is  a linear type,  the last rule of $\mathcal{D}_1$ cannot be $sp$.
\end{proof}

\begin{lemma}[Weighted substitution]\label{lem: weighted substitution} Let  $r \geq \mathrm{rk}(\mathcal{D}_1)$. If  $\mathcal{D}_1 \triangleleft \Gamma, x: \sigma \vdash M: \tau$ and $\mathcal{D}_2 \triangleleft \Delta \vdash N: \sigma$, then there exists a derivation $S(\mathcal{D}_1, \mathcal{D}_2)$ such that:
\begin{itemize}
\item $S(\mathcal{D}_1, \mathcal{D}_2)\triangleleft \Gamma, \Delta \vdash M\lbrace N/x\rbrace : \tau$,
\item $\mathrm{w}(S(\mathcal{D}_1, \mathcal{D}_2), r) \leq \mathrm{w}(\mathcal{D}_1,r)+ \mathrm{w}(\mathcal{D}_2, r)$.
\end{itemize}

\end{lemma}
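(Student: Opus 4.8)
The plan is to prove the statement by reducing to the linear case already settled in \textbf{Lemma~\ref{lem: weighted linear substitution}} whenever the substituted type is linear, and otherwise by an induction on $\mathcal{D}_1$ that threads the surface-preserving substitution through each rule. The first move is to split on the shape of $\sigma$, which by the grammar is either a linear type $A$ or a strictly exponential type $\oc\sigma'$.

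If $\sigma = A$ is linear, then $N$ cannot have the form $\oc Q$: by \textbf{Lemma~\ref{lem: generation}.\ref{enum: generation for STAoplus bang}} a term $\oc Q$ always has a strictly exponential type $\oc\sigma'$, so a term of the linear type $A$ never is. Hence the surface-preserving substitution collapses to the ordinary one, $M\{N/x\} = M[N/x]$, and the claim is exactly \textbf{Lemma~\ref{lem: weighted linear substitution}}. This disposes of the base difficulty and isolates the genuinely new case.

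If $\sigma = \oc\sigma'$, then \textbf{Lemma~\ref{lem: exponential context from exponential conclusion in STAoplus}.\ref{enum: exponential conclusion}} applied to $\mathcal{D}_2$ gives $N = \oc N'$, a strictly exponential context $\Delta$, and the fact that $\mathcal{D}_2$ is an instance of $sp$ followed by a (possibly empty) block of $m$ rules; in particular $\mathcal{D}_2$ contains a subderivation $\mathcal{D}_2^{-}$ sitting immediately above its $sp$ box, whose weight the $sp$ rule amplifies by the factor $r$. Dually, \textbf{Lemma~\ref{lem: exponential context from exponential conclusion in STAoplus}.\ref{enum: exponential context}} tells us that the assumption $x:\oc\sigma'$ in $\mathcal{D}_1$ is introduced either by $sp$ or by $m$. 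I would then induct on $\mathcal{D}_1$: for every rule that does not introduce $x:\oc\sigma'$ the substitution commutes with the rule and the weight bound follows from the induction hypothesis, the $+2$, $+1$ and $+\mathrm{w}(\mathcal{D}_2,r)$ contributions lining up exactly as in the $\with$I case treated in \textbf{Lemma~\ref{lem: weighted linear substitution}}. The two decisive cases are when $x:\oc\sigma'$ is actually introduced. If it is introduced by $m$ with $m$-rank $k$, then $M = M_0[\mathtt{d}(x)/x_1,\dots,\mathtt{d}(x)/x_k]$ with $x\notin \mathrm{FV}(M_0)$, and by the definition of surface-preserving substitution $M\{\oc N'/x\}$ turns each $\mathtt{d}(x)$ back into a copy of $N'$, i.e. it equals $M_0\{N'/x_1,\dots,N'/x_k\}$; I build $S(\mathcal{D}_1,\mathcal{D}_2)$ by grafting $k$ copies of $\mathcal{D}_2^{-}$ onto the $k$ derelicted occurrences. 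If it is introduced by $sp$, the substitution is absorbed one level below the box and again consumes $\mathcal{D}_2^{-}$.

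The main obstacle is the weight accounting in the $m$ case, and this is precisely where the hypothesis $r \geq \mathrm{rk}(\mathcal{D}_1)$ is indispensable. Grafting $k$ copies of $N'$ costs about $k$ times $\mathrm{w}(\mathcal{D}_2^{-},r)$, whereas the budget only permits a single $\mathrm{w}(\mathcal{D}_2,r)$. Since $\mathcal{D}_2$ ends with $sp$, its weight satisfies $\mathrm{w}(\mathcal{D}_2,r) \geq r\cdot \mathrm{w}(\mathcal{D}_2^{-},r)$, and $r \geq \mathrm{rk}(\mathcal{D}_1) \geq k$ forces $k\cdot \mathrm{w}(\mathcal{D}_2^{-},r) \leq \mathrm{w}(\mathcal{D}_2,r)$: the single factor $r$ introduced by the box pays for all $k$ duplicated copies at once. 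Feeding this estimate into the induction hypothesis yields $\mathrm{w}(S(\mathcal{D}_1,\mathcal{D}_2),r) \leq \mathrm{w}(\mathcal{D}_1,r) + \mathrm{w}(\mathcal{D}_2,r)$ and closes the induction. I expect the residual bookkeeping — checking that the $m$-rank of the intermediate derivations stays $\leq r$ so that the induction hypothesis remains applicable, and handling the purely structural $sp$ subcase — to be routine once this key inequality is in place.
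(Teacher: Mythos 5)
Your overall strategy is the paper's: split on whether $\sigma$ is linear or strictly exponential, dispatch the linear case to Lemma~\ref{lem: weighted linear substitution}, and in the $\oc\sigma'$ case use Lemma~\ref{lem: exponential context from exponential conclusion in STAoplus} to expose the $sp$ box inside $\mathcal{D}_2$ and the $m$/$sp$ rule introducing $x:\oc\sigma'$ in $\mathcal{D}_1$, then pay for the $k$-fold duplication of the box contents with the factor $r$ in the $sp$ weight via $r \geq \mathrm{rk}(\mathcal{D}_1) \geq k$. That key inequality is exactly the one the paper uses, and your justification of $M\{N/x\}=M[N/x]$ in the linear case (via the type of $N$ rather than the $s$-linearity of $x$) is an acceptable variant.

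There is, however, a genuine gap in your induction setup. You induct on $\mathcal{D}_1$, but the decisive $m$-case requires applying the substitution $k$ times in sequence: after the first application you obtain a derivation $S^1 \triangleleft \Gamma,\Delta', x_2:\sigma',\ldots \vdash M_0\{N'/x_1\}:\tau$ which is \emph{not} a subderivation of $\mathcal{D}_1$ (it can be much larger), so the structural induction hypothesis does not license the second substitution, nor the remaining $k-2$. Moreover, ``grafting $\mathcal{D}_2^{-}$ onto the derelicted occurrences'' is only an accurate picture when $\sigma'$ is linear; if $\sigma' = \oc\sigma''$ the substitution $\{N'/x_i\}$ is again a surface-preserving substitution that must unfold further occurrences of $\mathtt{d}$, i.e.\ it is a recursive invocation of the very lemma being proved, on derivations of unrelated size. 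The paper resolves both problems at once by inducting on $q$ where $\sigma = \oc^{q}A$: the $\oc\sigma'$ step then consists of $n$ applications of the lemma \emph{at type $\sigma'$}, which the induction hypothesis supplies for arbitrary derivations. Your argument goes through once you replace induction on $\mathcal{D}_1$ by induction on the $\oc$-depth of $\sigma$ (or prove a simultaneous-substitution generalization); you should also account explicitly for the $m$-rules of rank $k$ needed to contract the $k$ duplicated copies of $\Delta'$, whose cost $k\cdot h \leq r\cdot h$ is absorbed by the $r\cdot h$ summand of the $sp$ weight.
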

\begin{proof}
Since $\sigma= \oc^q A$, for some linear type $A$ and some $q \geq 0$, we reason by induction on $q$. If $q=0$ then,  by \textbf{Lemma}~\ref{lem: exponential context from exponential conclusion in STAoplus}.\ref{enum: x:A is slinear in M}, $x$ is $s$-linear in $M$, i.e.~$x$ occurs exactly once  in $M$ and this occurrence is out of the scope of both a  $\oc$-operator and a $\mathtt{d}$-operator. This means that  $M\lbrace N/x\rbrace= M[N/x]$, and we can  apply \textbf{Lemma}~\ref{lem: weighted linear substitution}.  Suppose now that $\sigma= \oc \sigma'$. On the one hand, by \textbf{Lemma}~\ref{lem: exponential context from exponential conclusion in STAoplus}.\ref{enum: exponential conclusion} we have that $\Delta$ is strictly exponential, $N= \oc P$, and $\mathcal{D}_2$ is composed by a subderivation $\mathcal{D}^*_2$ of the form:
\begin{prooftree}
\AxiomC{$\mathcal{D}'_2$}
\noLine
\UnaryInfC{$ \Delta' \vdash P': \sigma'$}
\RightLabel{$sp$}
\UnaryInfC{$  \oc \Delta' \vdash \oc P'[\mathtt{d}(z_1)/y_1, \ldots, \mathtt{d}(z_m)/y_m]: \oc \sigma'$}
\end{prooftree}
with $sp$-rank $h$ and such that $\Delta'=y_1: \sigma_1, \ldots, y_m: \sigma_m$,  followed by a sequence of $t\geq 0$ rules $m$ with $m$-rank, respectively, $k_1, \ldots, k_t$ recovering $\Delta \vdash \oc P: \oc \sigma'$. On the other hand, by applying \textbf{Lemma}~\ref{lem: exponential context from exponential conclusion in STAoplus}.\ref{enum: exponential context},  the assumption $x: \oc \sigma'$ in $\mathcal{D}_1 \triangleleft \Gamma, x: \oc \sigma' \vdash M: \tau$ has been obtained by applying either the rule $sp$ or the rule $m$. We just consider the latter case, the former being similar. W.l.o.g. we can suppose that such an instance of $m$ is the last rule of $\mathcal{D}_1$, since we can always  permute an application of $m$ downward obtaining a derivation of the same judgement. Then, $\mathcal{D}_1$ has the following form:
\begin{prooftree}
\AxiomC{$\mathcal{D}'_1$}
\noLine
\UnaryInfC{$\Gamma, x_1: \sigma',\ldots, x_n: \sigma' \vdash M': \tau $}
\AxiomC{$(n\geq 0)$}
\RightLabel{$m$}
\BinaryInfC{$\Gamma, x:  \oc \sigma'  \vdash M'[\mathtt{d}(x)/x_1, \ldots, \mathtt{d}(x)/x_n]: \tau$}
\end{prooftree}
with $m$-rank $k$ and such that  $M= M'[\mathtt{d}(x)/x_1, \ldots, \mathtt{d}(x)/x_n]$. If  $k=0$ then $S(\mathcal{D}_1, \mathcal{D}_2)$ is $\mathcal{D}'_1$ followed by some applications of the $m$ rule with $m$-rank $0$ in order to recover the context $\Delta$,  which is strictly exponential by \textbf{Lemma}~\ref{lem: exponential context from exponential conclusion in STAoplus}.\ref{enum: exponential conclusion}. In this case, we have  $\mathrm{w}(S(\mathcal{D}_1, \mathcal{D}_2), r)= \mathrm{w}(\mathcal{D}'_1, r) $. Otherwise,    by using the induction hypothesis,  we can build the following derivations:
\allowdisplaybreaks
\begin{align*}
S^1&\triangleq S(\mathcal{D}'_2, \mathcal{D}'_1)\triangleleft \Gamma, \Delta', x_2: \sigma', \ldots, x_n: \sigma' \vdash M'\lbrace P'/x_1\rbrace :\tau\\
S^2&\triangleq S(\mathcal{D}'_2, S(\mathcal{D}'_2, \mathcal{D}_1'))\triangleleft \Gamma, \Delta', \Delta',  x_3: \sigma', \ldots, x_n: \sigma'  \vdash M'\lbrace P'/x_1,P'/x_2\rbrace : \tau\\
&\ldots\\
S^n&\triangleq S(\mathcal{D}'_2, S(\mathcal{D}'_2, \ldots S(\mathcal{D}'_2, \mathcal{D}_1'))) \triangleleft \Gamma, \Delta', \overset{n}{\ldots}, \Delta' \vdash M'\lbrace P'/x_1, \ldots, P'/x_n\rbrace : \tau  
\end{align*}
such that $\mathrm{w}(S^1, r)\leq \mathrm{w}(\mathcal{D}'_2,r)+\mathrm{w}(\mathcal{D}_1',r)$ and, for all $1\leq i <n$,  $\mathrm{w}(S^{i+1}, r)\leq \mathrm{w}(\mathcal{D}'_2, r)+ \mathrm{w}(S^i, r)\leq \mathrm{w}(\mathcal{D}'_1, r)+ (i+1)\cdot \mathrm{w}(\mathcal{D}'_2, r) $.   Then,  $S(\mathcal{D}_1, \mathcal{D}_2)$ can be obtained from $S^n$ by applying a sequence of $h$ applications of the rule $m$ with $m$-rank $k$, and   a sequence of $t$ applications of the rule $m$ with $m$-rank, respectively, $k_1, \ldots, k_t$, in order to get $\Delta$ from $\Delta', \overset{n}{\ldots}, \Delta' $.  This means that $S(\mathcal{D}_1, \mathcal{D}_2)\triangleleft \Gamma, \Delta \vdash   M'\lbrace P/x_1, \ldots, P/x_n\rbrace: \tau$ and, by definition of surface-preserving   substitution:
\allowdisplaybreaks
\begin{align*}
 M'\lbrace P/x_1, \ldots, P/x_n\rbrace&=  (M'[z/x_1, \ldots , z/x_n]) \lbrace   P/z \rbrace\\
&=( (M'[z/x_1, \ldots , z/x_n])[\mathtt{d}(x)/z])) \lbrace \oc  P/x \rbrace &&x \not \in FV(M')\\
 & = (M'[\mathtt{d}(x)/x_1, \ldots, \mathtt{d}(x)/x_n])  \lbrace \oc  P/x \rbrace\\
 &=  M\lbrace N/x \rbrace .
\end{align*}
 By using the induction hypothesis, we finally  have:
\allowdisplaybreaks
\begin{align*}
\mathrm{w}(S(\mathcal{D}_1, \mathcal{D}_2), r)& = \mathrm{w}(S^n, r) +k \cdot h + \sum_{i=1}^t k_i\\
&\leq \mathrm{w}(\mathcal{D}'_1, r)+ k\cdot \mathrm{w}(\mathcal{D}'_2, r) + k\cdot h  +  \sum_{i=1}^t k_i\\
&\leq \mathrm{w}(\mathcal{D}'_1, r)+ r\cdot \mathrm{w}(\mathcal{D}'_2, r) + r\cdot h  +  \sum_{i=1}^t k_i\\
&\leq  \mathrm{w}(\mathcal{D}_1, r)+ (r\cdot( \mathrm{w}(\mathcal{D}'_2, r) + h)+1  +  \sum_{i=1}^t k_i)\\
&= \mathrm{w}(\mathcal{D}_1, r)+ (\mathrm{w}(\mathcal{D}^*_2, r)+  \sum_{i=1}^t k_i)\\
&\leq \mathrm{w}(\mathcal{D}_1, r)+ \mathrm{w}(\mathcal{D}_2, r) .
\end{align*}
This concludes the proof.
\end{proof}

We are now able to prove the weighted version of the Subject reduction property:

\begin{proof}[Proof of {\normalfont\textbf{Theorem}~\ref{thm: weighted subject reduction}}]
The proof is by induction on  the definition of the one-step reduction relation. We have several cases, and we consider the most interesting ones:  
\begin{itemize}
\item  If   $M=(\lambda \oc x.N)\oc P\rightarrow N\lbrace \oc  P/x\rbrace=M_1=M_2$ then, by  applying   \textbf{Lemma}~\ref{lem: generation}.\ref{enum: generation for STA exp absts types} and \textbf{Lemma}~\ref{lem: generation}.\ref{enum: generation for STA app types}, $\mathcal{D}$ contains a derivation  $\mathcal{D}^*$  of the form:
\begin{prooftree}
\AxiomC{$\mathcal{D}'$}
\noLine
\UnaryInfC{$\Gamma' , x: \tau\vdash  N':   A$}
\RightLabel{$\multimap$I$e$}
\UnaryInfC{$\Gamma' \vdash \lambda \oc x.N': \tau \multimap A$}
\noLine
\AxiomC{$\mathcal{D}''$}
\noLine
\UnaryInfC{$\Gamma'' \vdash \oc P': \tau$}
\RightLabel{$\multimap$E}
\BinaryInfC{$\Gamma', \Gamma'' \vdash  (\lambda \oc x.N')\oc P': A$}
\end{prooftree}
possibly followed by a sequence of applications of the rules $\forall$I, $\forall$E, and  $m$. Let  $t\geq 0$ be the number of applications of the rule $m$, and let $k_1,\ldots, k_t$ be their respective $m$-rank. By applying \textbf{Lemma}~\ref{lem: weighted substitution}, there exists a derivation $S(\mathcal{D}', \mathcal{D}'')$ such that $S(\mathcal{D}', \mathcal{D}'')\triangleleft \Gamma', \Gamma'' \vdash N'\lbrace \oc P'/x\rbrace :A$. We define $\mathcal{D}_1=\mathcal{D}_2$ as the derivation obtained by applying to $S(\mathcal{D}', \mathcal{D}'')$ a sequence of applications of the rules $\forall$I, $\forall$E, and  $m$ in order to obtain $\Gamma \vdash N\lbrace \oc P/x \rbrace :\sigma$  as a concluding judgement.  By \textbf{Lemma}~\ref{lem: weighted substitution},  we have:
 \allowdisplaybreaks
\begin{align*}
\mathrm{w}(\mathcal{D}_1, r)&=\mathrm{w}(S(\mathcal{D}', \mathcal{D}''), r)+ \sum_{j=1}^t k_j\leq  \mathrm{w}(\mathcal{D}', r)+ \mathrm{w}(\mathcal{D}'', r)+ \sum_{j=1}^t k_j \\
  &<  \mathrm{w}(\mathcal{D}', r)+ \mathrm{w}(\mathcal{D}'', r)+ \sum_{j=1}^t k_j +2=  \mathrm{w}(\mathcal{D},r) .
\end{align*}
\item If $M= \mathtt{proj}\langle M_1, M_2\rangle \rightarrow M_1, M_2$ then, by applying     \textbf{Lemma}~\ref{lem: generation}.\ref{enum: generation for STA copy} and \textbf{Lemma}~\ref{lem: generation}.\ref{enum: generation for STA pi},  $\sigma=\forall \vec{\alpha}.( B'\langle D_1/\beta_1, \ldots, D_n/\beta_n \rangle)$, where  $\vec{\alpha}=\alpha_1, \ldots, \alpha_k$, for some $k\geq 0$. Moreover,   $\mathcal{D}$ is a derivation $\mathcal{D}^*$ of the form:
\begin{prooftree}
\AxiomC{$\mathcal{D}'$}
\noLine
\UnaryInfC{$\vdash M_1: B$}	
\AxiomC{$\mathcal{D}''$}
\noLine
\UnaryInfC{$\vdash M_2: B$}
\RightLabel{$\with$I}
\BinaryInfC{$ \vdash \langle M_1, M_2\rangle :B\with B$}
\RightLabel{$\wedge$E}
\UnaryInfC{$ \vdash \mathtt{proj}\langle M_1, M_2\rangle : B$}
\end{prooftree}
followed  by  a sequence of applications of the rules $\forall$I, $\forall$E, and $m$. Then, we define $\mathcal{D}_1$ (resp. $\mathcal{D}_2$) as the derivation  $\mathcal{D}'$  (resp. $\mathcal{D}''$) followed by the same sequence of rules $\forall$I, $\forall$E, and  $m$, the latter being of $m$-rank $0$ and introducing  the context $\Gamma$. By definition of weight, we have:
$\mathrm{w}(\mathcal{D}_1, r)=\mathrm{w}(\mathcal{D}', r)<\w{\mathcal{D},r}$, and similarly for $\mathcal{D}_2$.
\item If $M=\mathtt{copy}^{U}\, V \mathtt{\ as \ }x_1, x_2 \mathtt{\ in \ }\langle Q_1, Q_2\rangle  \rightarrow \langle Q_1[V/x_1], Q_2[V/x_2]\rangle =M_1=M_2 $ then, by \textbf{Lemma}~\ref{lem: generation}.\ref{enum: generation for STA copy},  $\sigma= B_1\with B_2$ and  $\mathcal{D}$ is a derivation $\mathcal{D}^*$ of the form:
  \begin{prooftree}
 \def\defaultHypSeparation{\hskip .1cm}
 \AxiomC{$\mathcal{D}'$}
 \noLine
\UnaryInfC{$ \Gamma' \vdash V: A$}
 \AxiomC{$\mathcal{D}''$}
 \noLine
\UnaryInfC{$x_1:A \vdash Q_1: B_1$}
 \AxiomC{$\mathcal{D}'''$}
 \noLine
\UnaryInfC{$ x_2:A \vdash Q_2: B_2$}
 \AxiomC{$\mathcal{D}''''$}
 \noLine
\UnaryInfC{$\vdash U: A$}
\RightLabel{$\with$I}
\QuaternaryInfC{$\Gamma' \vdash \mathtt{copy}^{U}\, V \mathtt{\ as \ }x_1, x_2 \mathtt{\ in \ }\langle Q_1, Q_2\rangle : B_1 \with B_2$}
\end{prooftree}
followed by  a sequence of applications of the rule $m$. Since  $\Gamma'$ is $\forall \oc $-lazy by definition, it is  $\oc$-free, and hence all types in $\Gamma'$ are linear. Then, since   $V$ is closed,  \textbf{Lemma}~\ref{lem: exponential context from exponential conclusion in STAoplus}.\ref{enum: x:A is slinear in M}   implies   $\Gamma'= \emptyset$. Therefore,   the applications of the rule $m$ below $\mathcal{D}^*$ are all of $m$-rank $0$, so that $\w{\mathcal{D},r}= \w{\mathcal{D}^*,r}$. By   applying \textbf{Lemma}~\ref{lem: weighted linear substitution} twice,  there exist two derivations $S(\mathcal{D}', \mathcal{D}'')\triangleleft \vdash Q_1[V/x_1]:B_1$ and $S(\mathcal{D}', \mathcal{D}''')\triangleleft \vdash Q_2[V/x_2]: B_2$ such that $\w{S(\mathcal{D}', \mathcal{D}''), r}\leq \w{\mathcal{D}',r}+ \w{\mathcal{D}'',r} $ and $\w{S(\mathcal{D}', \mathcal{D}'''), r}\leq \w{\mathcal{D}',r}+ \w{\mathcal{D}''',r}$. We define $\mathcal{D}_1=\mathcal{D}_2$ as the following derivation:
\begin{prooftree}
\AxiomC{$S(\mathcal{D}', \mathcal{D}'')$}
\noLine
\UnaryInfC{$ \vdash Q_1[V/x_1]:B_1$}
\AxiomC{$S(\mathcal{D}', \mathcal{D}''')$}
\noLine
\UnaryInfC{$\vdash Q_2[V/x_2]: B_2$}
\RightLabel{$\with$I}
\BinaryInfC{$\vdash \langle Q_1[V/x_1],Q_2[V/x_2] \rangle : B_1\with B_2$}
\RightLabel{$m$}
\doubleLine
\UnaryInfC{$\Gamma \vdash \langle Q_1[V/x_1],Q_2[V/x_2] \rangle : B_1\with B_2$}
\end{prooftree}
By \textsf{Remark}~\ref{rem: w.l.o.g. largest values} we can safely assume that $U$ has largest size among the  values with type $A$. Moreover, $\mathcal{D}'$ and $\mathcal{D}'''$ have no application of the rules $sp$ and $m$  so that,  by \textbf{Lemma}~\ref{lem: properties of weight}.\ref{enum: properties of weight 3}, $\w{\mathcal{D}', r}= \vert V \vert \leq \vert U \vert=\w{\mathcal{D}'''', r}$.   Therefore:
\allowdisplaybreaks
\begin{align*}
\w{\mathcal{D}_1, r}&=\w{S(\mathcal{D}', \mathcal{D}''), r}+\w{S(\mathcal{D}', \mathcal{D}'''), r}+1\\
& \leq 2\cdot \w{\mathcal{D}', r} +  \w{\mathcal{D}'', r}+\w{\mathcal{D}''',r} +1\\
&\leq   \w{\mathcal{D}', r} + \w{\mathcal{D}'', r}+  \w{\mathcal{D}''', r}+\w{\mathcal{D}'''',r}+1 \\
&<   \w{\mathcal{D}', r} + \w{\mathcal{D}'', r}+  \w{\mathcal{D}''', r}+\w{\mathcal{D}'''',r}+2 \\
&= \w{\mathcal{D}^*, r}   = \w{\mathcal{D}, r} .
\end{align*}
\end{itemize}
This concludes the proof.
\end{proof}

\begin{proof}[Proof of {\normalfont\textbf{Lemma}~\ref{lem: all derivations of Rightarrow have same size}}]
The proof is by induction on $\s{\pi'}+\s{\pi''}$. If the last rule of $\pi'$ is $s1$ then $M$ is a surface normal form, and the last rule of $\pi''$ must be $s1$. In this case,  $\vert \pi' \vert =0= \vert \pi'' \vert$. If the last rule of $\pi'$ is $s2$, then $M$ is not a surface normal form, so that the last rule of $\pi''$ is $s2$. Hence, $\pi'$ and $\pi''$ have the following forms:
\begin{mathpar}
\inferrule*[Right=$s2$]{M \rightarrow M'_1, M'_2\\ \pi'_1: M'_1 \Rightarrow \mathscr{D}'_1\\ \pi'_2: M'_2 \Rightarrow \mathscr{D}'_2}{\pi': M \Rightarrow \mathscr{D}}\and
\inferrule*[Right=$s2$]{M \rightarrow M''_1, M''_2\\ \pi''_1: M''_1 \Rightarrow \mathscr{D}''_1\\ \pi''_2: M''_2 \Rightarrow \mathscr{D}''_2}{\pi'': M \Rightarrow \mathscr{D}}
\end{mathpar}
We have several possibilities depending on $M'_1, M'_2, M''_1, M''_2$. We  just consider the  case  where they are all distinct. By applying \textbf{Lemma}~\ref{lem: one-reduction step confluence 3} there exist $N_1, N_2, N_3, N_4$ such that $M'_1 \rightarrow N_1, N_2$, $M'_2 \rightarrow N_3, N_4$ and $\exists i \in \lbrace 1,2\rbrace$ such that $M''_i \rightarrow N_1, N_3$ and $M''_{3-i}\rightarrow N_2, N_4$. Let us suppose $i=1$. By \textbf{Theorem}~\ref{thm: weighted subject reduction} $N_1$, $N_2$, $N_3$ and $N_4$ are all typable in \STAP. Moreover, since each typable term can be associated with exactly one surface distribution by \textbf{Theorem}~\ref{thm: confluence for lamb} and \textbf{Theorem}~\ref{thm: weighted subject reduction}, for all $1 \leq j\leq 4$, we have   $\rho_j: N_j \Rightarrow \mathscr{E}_j$, for some $\rho_j$ and $\mathscr{E}_j$. Then, we can construct the following derivations:
\begin{mathpar}
\inferrule*[Right=$s2$]{ M'_1 \rightarrow N_1, N_2 \\ \ \rho_1:N_1  \Rightarrow \mathscr{E}_1\\   \rho_2:N_2 \Rightarrow \mathscr{E}_2  }{\rho'_1: M'_1\Rightarrow \mathscr{D}'_1 }\and 
\inferrule*[Right=$s2$]{ M'_2 \rightarrow N_3, N_4\\ \ \rho_3:N_3  \Rightarrow\mathscr{E}_3 \\ \rho_4  :N_4 \Rightarrow  \mathscr{E}_4 }{\rho'_2: M'_2 \Rightarrow \mathscr{D}'_2} \and 
\inferrule*[Right=$s2$]{ M''_1 \rightarrow N_1, N_3 \\ \ \rho_1:N_1  \Rightarrow \mathscr{E}_1\\ \rho_3  : N_3 \Rightarrow\mathscr{E}_3   }{ \rho''_1: M''_1 \Rightarrow\mathscr{D}''_1 }
\and 
\inferrule*[Right=$s2$]{ M''_2 \rightarrow N_2, N_4 \\ \rho_2 :N_2  \Rightarrow \mathscr{E}_2\\  \rho_4 : N_4 \Rightarrow  \mathscr{E}_4 }{ \rho''_2: M''_2 \Rightarrow \mathscr{D}''_2}
\end{mathpar}
By applying the induction hypothesis we have:
\allowdisplaybreaks
\begin{align*}
\s{\pi'}&= \max (\s{\pi'_1}, \s{\pi'_2})+1\\
&= \max (\s{\rho'_1}, \s{\rho'_2})+1 \\
&=\max(\max (\s{\rho_1}, \s{\rho_2})+1, \max(\s{\rho_3}, \s{\rho_4})+1)+1\\
&=\max(\max (\s{\rho_1}, \s{\rho_3})+1, \max(\s{\rho_2}, \s{\rho_4})+1)+1\\
&= \max (\s{\rho''_1}, \s{\rho''_2})+1\\
&=\max (\s{\pi''_1}, \s{\pi''_2})+1 = \s{\pi''}.
\end{align*}
The remaining cases are similar.
\end{proof}

\section{Proofs of Section~\ref{section:Probabilistic Polytime Completeness}}

In this section we give a detailed proof of the Probabilitic Polytime  Completeness Theorem for \STAP (\textbf{Theorem}~\ref{thm: completeness complexity PTM}).  The basic scheme of the proof is taken from Gaboardi and Ronchi Della Rocca~\cite{gaboardi2009light}, and consists in encoding {\PTM}s configurations,  transitions between configurations, the initialization of a \PTM, and its output extraction. By putting everything together, we are able to represent in \STAP a  \pPTM.  Before giving  the complete encoding, we shall first show how to define in \STAP  natural numbers and polynomials.

\subsection{Numerals and polynomial completeness}\label{chap 2 sec 4 subsec 1}
Gaboardi and Ronchi Della Rocca stressed in~\cite{gaboardi2009light}   that the    presence of the multiplexor, i.e.~rule $m$,   makes  the  encoding of a Turing Machine    \enquote{non-uniform}  in  $\mathsf{STA}$.  If we consider for example the standard type for natural numbers $\mathbf{N}\triangleq \forall \alpha. \oc (\alpha \multimap \alpha)\multimap \alpha \multimap \alpha$, 
 a term  $\mathtt{succ}$ implementing  the usual successor function  with  type  $\mathbf{N}\multimap \mathbf{N}$ is unknown.   This is why the usual data types  are  represented in \STAP   by  indexed families of types.   

\begin{definition}[Indexed  numerals] For all $i \geq 1$, the \emph{indexed type} $\mathbf{N}_i$ and the \emph{indexed  numerals} $\underline{n}_i$ of type $\mathbf{N}_i$ are defined as follows:
\begin{equation*}
\begin{split}
\mathbf{N}_i &\triangleq \forall \alpha. \oc^i(\alpha \multimap \alpha)\multimap \alpha \multimap \alpha \\
  \underline{n}_i &\triangleq \lambda \oc f.\lambda x. (\mathtt{d}^i(f) \overset{n}{\ldots} (\mathtt{d}^i(f) x)\ldots) \qquad n \in \mathbb{N} 
\end{split}
\end{equation*}
when $i=1$, we shall  write $\mathbf{N}$ (resp.~$\underline{n}$) in place of $\mathbf{N}_i$ (resp.~$\underline{n}_i$). 
\end{definition}

\begin{definition} \label{defn: succ add mult} Let $i, j \geq 1$.  The \emph{indexed successor} $\mathtt{succ}_i$ of type $ \mathbf{N}_{i}\multimap \mathbf{N}_{i+1}$, the \emph{indexed  addition} of type $\mathbf{N}_i \multimap \mathbf{N}_j \multimap \mathbf{N}_{{\max(i, j)+1}}$, and  the \emph{indexed multiplication} of type \emph{$\mathbf{N}_i\multimap \oc^i \mathbf{N}_j \multimap \mathbf{N}_{i+j} $} are definable in \normalfont{\STAP} as follows:
\begin{itemize}
\item $\mathtt{succ}_i\triangleq  \lambda n. \lambda \oc f. \lambda x. \mathtt{d}^{i+1}(f)(n \, (\oc^{i} \mathtt{d}^{i+1}(f)) \, x)$;
\item $\mathtt{add}_{i, j}\triangleq \lambda n. \lambda m. \lambda \oc f. \lambda x. n \, (\oc ^{i}\mathtt{d}^{\max(i, j)+1}(f))(m \, (\oc ^{j}\mathtt{d}^{(\max(i, j)+1}(f)) \, x)$;
\item $\mathtt{mult}_{i, j}\triangleq \lambda n. \lambda m. \lambda \oc f. n \, \oc ^i(m \, (\oc ^j \mathtt{d}^{i+j}(f))) $.
\end{itemize}
\end{definition}

Successor,  addition, and  multiplication in \textbf{Definition}~\ref{defn: succ add mult} can  be composed  to obtain all  polynomials.

\begin{theorem}[Representing polynomial functions~\cite{gaboardi2009light}]
\label{thm: polynomial completeness} 
Let $p: \mathbb{N}\longrightarrow \mathbb{N}$ be a polynomial in the variable $\mathrm{x}$ and $\deg(p)$ be its degree. 
There is $\underline{p}$ such that:
\begin{equation*}
x: \oc^{\deg(p)}\mathbf{N} \vdash \underline{p}: \mathbf{N}_{2\deg(p)+1} \enspace .
\end{equation*} 
\end{theorem}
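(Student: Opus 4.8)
The plan is to prove the statement by induction following Horner's evaluation scheme for $p$, using the indexed successor, addition and multiplication of \textbf{Definition}~\ref{defn: succ add mult} as the only building blocks. Writing $p(x)=a_0+x(a_1+x(\cdots+x\cdot a_d))$ with $d=\deg(p)$, I introduce the partial polynomials $p_0\triangleq a_d$ and $p_{k+1}\triangleq a_{d-k-1}+x\cdot p_k$, so that $p=p_d$. For each $k$ I will build a term $\underline{p_k}$ together with a derivation whose conclusion records two invariants: the \emph{output index} $\iota_k$ of its type $\mathbf{N}_{\iota_k}$, and the \emph{exponential depth} at which the single free variable $x$ is assumed.

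First I would fix the two base ingredients. Every coefficient $a$ is represented by the indexed numeral $\underline{a}_1$ of type $\mathbf{N}_1=\mathbf{N}$, containing no occurrence of $x$; this handles $p_0$, and, when $d=0$, the whole statement (the variable being introduced by a vacuous multiplexor $m$ of $m$-rank $0$). For the inductive step $p_{k+1}=a_{d-k-1}+x\cdot p_k$ I would first form the product $x\cdot\underline{p_k}$ by instantiating $\mathtt{mult}_{1,\iota_k}:\mathbf{N}_1\multimap\oc\,\mathbf{N}_{\iota_k}\multimap\mathbf{N}_{1+\iota_k}$, feeding $x$ as the first factor and $\oc\,\underline{p_k}$ as the second; the boxed factor $\oc\,\underline{p_k}$ is produced by one application of soft promotion $sp$ to the derivation of $\underline{p_k}$. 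I would then add the constant coefficient through $\mathtt{add}_{1+\iota_k,1}:\mathbf{N}_{1+\iota_k}\multimap\mathbf{N}_1\multimap\mathbf{N}_{\iota_k+2}$, taking $\underline{a_{d-k-1}}_1$ as its second argument.

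Tracking only the indices gives the clean arithmetic that yields the stated bound: $\iota_0=1$ and $\iota_{k+1}=\iota_k+2$, where one unit comes from the $+1$ built into the output index of $\mathtt{mult}$ and the other from the $\max(\cdot,\cdot)+1$ of $\mathtt{add}$. Hence $\iota_d=2d+1=2\deg(p)+1$, exactly the required output type $\mathbf{N}_{2\deg(p)+1}$. This part of the computation is routine and is insensitive to how the sharing of $x$ is organised.

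I expect the real obstacle to be the bookkeeping of the non-linear use of $x$, i.e. keeping its exponential depth at precisely $\deg(p)$ rather than letting it drift upward. Each $\underline{p_k}$ uses $x$ and is then used twice in the next step: once directly, as the first factor of $\mathtt{mult}$ at the linear type $\mathbf{N}$, and once inside $\oc\,\underline{p_k}$, where the promotion $sp$ adds one $\oc$ to every free occurrence at once. The delicate point is that a multiplexor $m$ collapses \emph{several copies at a common level} $\oc^{\,j}\mathbf{N}$ into a single assumption $\oc^{\,j+1}\mathbf{N}$ in one step, independently of how many copies there are, whereas it cannot merge occurrences living at different box-depths; a greedy, pairwise merging of the direct and the promoted occurrences would make the depth grow too fast. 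I would therefore arrange the derivation so that, at each box-level, all copies of $x$ demanded there are aligned (via the appropriate number of derelictions $\mathtt{d}$) and then merged by a single $m$, interleaving these multiplexors with the promotions $sp$ and appealing to the fact (used in the proof of \textbf{Theorem}~\ref{thm: weighted subject reduction}) that the context on which $sp$ acts is strictly exponential. Verifying that these promotions and multiplexors conspire to collect every occurrence of $x$ into the single assumption $x:\oc^{\deg(p)}\mathbf{N}$, simultaneously with the index landing on $\mathbf{N}_{2\deg(p)+1}$, is the crux; once the invariant is stated correctly, the induction is a direct composition of the terms of \textbf{Definition}~\ref{defn: succ add mult}.
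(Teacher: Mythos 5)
The paper does not actually prove this theorem: it is imported from Gaboardi and Ronchi Della Rocca's \STA, the only indication of the argument being the remark that the indexed successor, addition and multiplication of \textbf{Definition}~\ref{defn: succ add mult} ``can be composed to obtain all polynomials''. Your Horner-scheme reconstruction is exactly that intended composition --- the index arithmetic $\iota_0=1$, $\iota_{k+1}=\iota_k+2$ (one unit from $\mathtt{mult}_{1,\iota_k}$, one from $\mathtt{add}_{1+\iota_k,1}$) lands on $\mathbf{N}_{2\deg(p)+1}$ as required, and you correctly isolate and resolve the one delicate point, namely that the residual copies of $x$, sitting at box-depths $0,\dots,\deg(p)-1$, must be aligned and collapsed by single multiplexors rather than merged greedily pairwise so that the assumption stays at $\oc^{\deg(p)}\mathbf{N}$ --- so the proposal is correct and takes essentially the same approach as the (cited) proof.
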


Booleans and indexed strings of booleans are defined, respectively, in~\eqref{eqn: booleans}  and~\eqref{eqn: indexed strings}. The function  associating with each  string of booleans its length is defined for all $i \geq 1$ as follows:
\begin{equation}\label{eqn: len}
\mathtt{len}_i\triangleq \lambda s. \lambda f. s\, \oc ^i(\lambda x. \lambda y. \mathtt{let}\ \mathtt{E}_{\mathbf{B}}\, x \mathtt{\ be \ }\mathbf I \mathtt{ \ in \ }fy)  
\end{equation}
with type $\mathbf{S}_i \multimap \mathbf{N}_i $, where $\mathtt{E}_{\mathbf{B}}$ is as follows:
\begin{equation}\label{eqn: eraser of B}
\mathtt{E}_{\mathbf{B}} \triangleq
\lambda z. \mathtt{let\ }z\mathbf I\mathbf I \mathtt{\ be \ } x\otimes y \mathtt{ \ in \ }(\mathtt{let \ }y \mathtt{ \ be \ } \mathbf I \mathtt{ \ in \ }x) :\mathbf{B}\multimap \mathbf{1}
\end{equation}

 \subsection{Encoding the $\pPTM$}\label{chap 2 sec 4 subsec 2}
In this subsection we show how to encode a $\pPTM$ in \STAP and how to simulate its computation by means of the relation $\Rightarrow$ in  \textbf{Definition}~\ref{defn: small-step probabilsitic operational semantics for surface reduction}.  One of the key steps toward completeness is  to prove that every $\PTM$  transition function is definable in \STAP, and its encoding is in~\eqref{equation: transition function PTM}.

A configuration can be represented  by a tuple   divided up  in three parts: the first one represents  the left hand-side of the tape with respect to the head; the second one  represents the right part of the tape starting  with the cell  scanned by the head; finally, the third part represents the state of the machine. W.l.o.g.,~we shall assume that  the left part of the tape is represented in  reversed order, that the alphabet is composed by the two symbols $0$ and $1$,  and that the final states are divided into accepting and rejecting.
 
\begin{definition}[Indexed configuration]\label{defn: configurations}
For all $i, k \geq 1$, we define the  \emph{indexed type} $\mathbf{PTM}^k_i$  and the  \emph{indexed configuration}  $\mathtt{config}_i$  of type $\mathbf{PTM}^k_i$ as  follows:
\begin{equation*}
\begin{split}
\mathbf{PTM}^k_i& \triangleq \forall \alpha. \oc^i (\mathbf{B}\multimap \alpha \multimap \alpha)\multimap ((\alpha \multimap \alpha) ^2 \otimes \mathbf{B}^k)\\
\mathtt{config}_{i} &\triangleq \lambda \oc c. \,  (\mathtt{d}^i(c)\, \underline{b^l_0}   \circ \cdots \circ \mathtt{d}^i(c)\, \underline{b^l_{n}})\otimes ( \mathtt{d}^i(c)\, \underline{b^r_0 }\circ \cdots \circ \mathtt{d}^i(c)\, \underline{b^r_{m}})\otimes  \underline{Q}   .
\end{split}
\end{equation*} 
where $M \circ N\triangleq \lambda z. M(Nz)$, $\underline{Q} \triangleq  \underline{q_1}\otimes  \ldots \otimes \underline{q_k}$, and $b_0^l, \ldots, b_n^l, b_0^r, \ldots, b^r_m, q_1, \ldots, q_k$ are in $\lbrace 0, 1 \rbrace$, for  $n, m \in \mathbb{N}$.
\end{definition}

In the above definition, the terms:
\begin{equation*}
\mathtt{d}^i(c)\, \underline{b^l_0} \circ \cdots \circ \mathtt{d}^i(c)\, \underline{b^l_{n_l}} \qquad  \mathtt{d}^i(c)\, \underline{b^r_0} \circ \cdots \circ \mathtt{d}^i(c)\, \underline{b^r_{n_r}} \qquad  \underline{Q} \triangleq  \underline{q_1}\otimes  \ldots \otimes \underline{q_k}
\end{equation*}
represent, respectively,  the left and the right part of the tape, where $\mathtt{d}^i(c)\, \underline{b^r_0}$ is the scanned symbol, and   the current state $Q=(q_1, \ldots, q_k)$.

Following Mairson and Terui~\cite{mairson2003computational}, in order to define the  \PTM transition from a configuration to another     we consider two distinct phases. In the first one, the \PTM configuration is decomposed to extract the first symbol of each part of the tape.  In the second phase, depending on the transitions function, these  symbols  are combined to reconstruct the tape after the transition step. Thus, we require an intermediate type, denoted $\mathbf{ID}^k_i$, and  defined for all $i,k \geq 1$ as follows:
\begin{equation*}
 \forall \alpha. \oc^i(\mathbf{B}\multimap \alpha \multimap \alpha) \multimap  ((\alpha \multimap \alpha)^2 \otimes (\mathbf{B}\multimap \alpha \multimap \alpha) \otimes \mathbf{B}\otimes (\mathbf{B}\multimap \alpha \multimap \alpha) \otimes \mathbf{B}  \otimes \mathbf{B}^k)
\end{equation*}
and the decomposition phase is defined by the term $\mathtt{decom}_{i}$  of type $\mathbf{PTM}^k_i \multimap \mathbf{ID}^k_i $ below:
\begin{equation}\label{eqn: decomp}
\begin{aligned}
\mathtt{decom}_{i} \triangleq \ &\lambda m. \lambda \oc c.  \mathtt{let}\ m \, \oc^i(F[\mathtt{d}^i(c)]) \mathtt{\ be \ } l\otimes r\otimes  q \mathtt{\ in \ }   \\
&(\mathtt{let\ }  l\,  (\mathbf I\otimes  (\lambda x. \mathtt{let}\ \mathtt{E}_{\mathbf{B}}\, x \mathtt{\ be \ } \mathbf I \mathtt{\ in\ } \mathbf I)\otimes  \underline{0} ) \mathtt{\ be \ }s_{l}\otimes  c_{l}\otimes  b^l_0 \mathtt{\ in \ }  \\
&  (\mathtt{let}\ r\, ( \mathbf I\otimes  ( \lambda x. \mathtt{let}\ \mathtt{E}_{\mathbf{B}}\, x \mathtt{\ be \ } \mathbf  I \mathtt{\ in\ } \mathbf  I) \otimes  \underline{0} ) \mathtt{\ be \ }s_r\otimes  c_r\otimes  b^r_0 \mathtt{\ in \ } \\
&  s_l\otimes   s_r\otimes   c_l\otimes  b^l_0\otimes  c_r\otimes  b^r_0\otimes  q    ))
\end{aligned} 
\end{equation}
where $F[x]\triangleq \lambda b. \lambda z. \mathtt{let}\ z \mathtt{\ be \ } g\otimes  h\otimes  i \mathtt{ \ in \ } ( hi \circ g) \otimes  x\otimes  b $ and $\mathtt{E}_{\mathbf{B}}$ is as in~\eqref{eqn: eraser of B}.

The behaviour of $\mathtt{decom}_i$ is to decompose a configuration in such a way as to extract the symbols of the tape which determine, together with the current state, the structure of the next configuration:
\allowdisplaybreaks
\begin{align*}
&\mathtt{decom}_{i} (\lambda \oc c.\, ( \mathtt{d}^i(c)\, \underline{b^l_0} \circ C[b^l_1, \ldots b^l_n]) \otimes ( \mathtt{d}^i(c)\, \underline{b^r_0 }\circ C[b^r_1, \ldots b^r_m]) \otimes \underline{Q}  )\\
&\Rightarrow 	\lambda \oc c. \, C[b^l_1, \ldots b^l_n] \otimes C[b^r_1, \ldots b^r_m] \otimes  \mathtt{d}^i(c)\otimes  \underline{b^l_0}\otimes \mathtt{d}^i(c)\otimes \underline{b^r_0}\otimes   \underline{Q}      
\end{align*}
where $C[b^l_1, \ldots b^l_n]\triangleq  \mathtt{d}^i(c)\, \underline{b^l_1}   \circ  \cdots \circ \mathtt{d}^i(c)\, \underline{b^l_n}$ and $C[b^r_1, \ldots b^r_m]\triangleq  \mathtt{d}^i(c)\, \underline{b^r_1 }\circ \cdots \circ \mathtt{d}^i(c)\, \underline{b^r_m}$.

Analogously,  the composition phase is defined  by the  term $\mathtt{com}_{i}$ of type $\mathbf{ID}^k_i \multimap \mathbf{PTM}^k_i $ below:
\begin{equation}
\begin{aligned}
\mathtt{com}_{i}\triangleq \ &\lambda s. \lambda \oc c. \mathtt{let}\ s\, \oc^i(\mathtt{d}^i(c)) \mathtt{\ be\ } l\otimes  r\otimes   c_l\otimes  b_l\otimes  c_r\otimes  b_r\otimes   q \mathtt{ \ in \ } \mathtt{let}\  \underline{\delta_{\mathcal{P}}}\,   (b_r\otimes  q)  \\
& \mathtt{\ be \ }q'\otimes b'\otimes m \mathtt{ \ in \ }  (\mathtt{if}\ m \mathtt{ \ then \ } M_1 \mathtt{\ else \ } M_2 )\, b'\, q' ( l\otimes  r\otimes c_l\otimes  b_l\otimes  c_r ) 
\end{aligned}
\end{equation}
where  $\underline{\delta_{\mathcal{P}}}$ is the encoding of the transition function $\delta_{\mathcal{P}}$ of the \PTM   $\mathcal{P}$ as in~\eqref{equation: transition function PTM}, and: 
\begin{equation*}
\begin{aligned}
& \mathtt{if\ } x \mathtt{\ then \ } M_1 \mathtt{\ else \ }M_2\triangleq  \pi_1(x \, M_1\, M_2)\\
&\pi_1\triangleq  \lambda z. \mathtt{let\ }z \mathtt{\ be \ }x\otimes y \mathtt{\ in\ }(\mathtt{let \ }\mathtt{E}_{\mathbf{B}} \, y \mathtt{\ be \ }\mathbf I \mathtt{\ in \ }x) \\
&M_1\triangleq \lambda b'. \lambda q'. \lambda p. \mathtt{let}\ p \mathtt{\ be \ }l\otimes  r\otimes  c_l\otimes  b_l\otimes  c_r\mathtt{ \ in \ }   (c_r\, b' \circ c_l\, b_l \circ l)\otimes  r\otimes q' \\
&M_2 \triangleq \lambda b'. \lambda q'. \lambda p. \mathtt{let}\ p \mathtt{\ be \ }l\otimes  r\otimes   c_l\otimes  b_l\otimes  c_r \mathtt{ \ in \ }   l\otimes  (c_l\, b_l \circ c_r\, b' \circ r) \otimes  q'   .
\end{aligned}
\end{equation*}

Then, the behaviour of $\mathtt{com}_{i}$, depending on $\delta_\mathcal{P}$  and on the current state, is to combine the symbols we put aside in order to return  a distribution of the next configurations. For example, if  the deterministic transition functions $\delta_0$ and $\delta_1$ defining $\delta_\mathcal{P}$ are such that $\delta_0\, ((b^r_0, Q))=(Q', b', \mathrm{right})$ and $\delta_1\, ((b^r_0, Q))=(Q'', b'', \mathrm{left})$, then:
\begin{equation*}
\begin{split}
&\mathtt{com}_{i}\,  (	 C[b^l_1, \ldots b^l_n] \otimes C[b^r_1, \ldots b^r_m] \otimes  \mathtt{d}^i(c)\otimes  \underline{b^l_0} \otimes   \mathtt{d}^i(c)\otimes   \underline{b^r_0}\otimes    \underline{Q}     )\\
& \Rightarrow \frac{1}{2}\cdot  \lambda \oc  c.\,    (\mathtt{d}^i(c)\, \underline{b'}\circ  \mathtt{d}^i(c)\, \underline{b^l_0} \circ  C[b^l_1, \ldots b^l_n]) \otimes C[b^r_1, \ldots b^r_m] \otimes \underline{Q'}   \\
&\hspace{5cm}+ \\
&\phantom{\Rightarrow\ {}} \frac{1}{2}\cdot \lambda \oc c.\,  C[b^l_1, \ldots b^l_n]  \otimes (  \mathtt{d}^i(c)\, \underline{b^l_0 }\circ   \mathtt{d}^i(c)\, \underline{b'' }\circ  C[b^r_1, \ldots b^r_m]) \otimes  \underline{Q'}   .
\end{split}
\end{equation*}
where $C[b^l_1, \ldots b^l_n]\triangleq  \mathtt{d}^i(c)\, \underline{b^l_1}   \circ  \cdots \circ \mathtt{d}^i(c)\, \underline{b^l_n}$ and $C[b^r_1, \ldots b^r_m]\triangleq  \mathtt{d}^i(c)\, \underline{b^r_1 }\circ \cdots \circ \mathtt{d}^i(c)\, \underline{b^r_m}$.

By combining the above terms we obtain an entire \PTM  transition step. 
\begin{definition}[Indexed transition step] \label{defn: tr} Let $i, k\geq 1$. The \emph{indexed transition step} is defined by $\mathtt{tr}_{i}\triangleq \mathtt{com}_{i}\circ \mathtt{decom}_{i}$, with type $\mathbf{PTM}^k_i \multimap \mathbf{PTM}^k_i$ in \normalfont{\STAP}.
\end{definition}

The initial configuration of a \PTM  is a  configuration in the initial state  $Q_0=(q_1, \ldots, q_k)$  with the head at the beginning of a tape  filled by   $0$'s. Then,  we need  a term that, taking  a   numeral  $\underline{n}_i$ as input, gives  the encoding of the initial configuration with tape of length $n$ as output. 
\begin{definition}[Indexed initial configuration]\label{defn: init} For all $i,k \geq 1$, the \emph{indexed initial configuration} $\mathtt{init}_{i}$ of type $\mathbf{N}_i \multimap \mathbf{PTM}^k_i$ is defined as follows:
\begin{equation*}
\mathtt{init}_{i} \triangleq \lambda n.  \lambda \oc c. \,  (\lambda z. z) \otimes ( \lambda z.n \, \oc^i (\mathtt{d}^i(c)\, \underline{0})z) \otimes \underline{Q_0}  .
\end{equation*}
\end{definition}

The \PTM needs now to be initialized with the given input string,  by writing it on its tape. The term representing the initialization requires the term $ \mathtt{decom}_{i}$ in~\eqref{eqn: decomp}.

\begin{definition}[Indexed initialization]\label{defn: in} For all $i,k\geq 1$, the \emph{indexed initialization}  is defined by $\mathtt{in}_{i} \triangleq \lambda s. \lambda m. s\,  \oc(\lambda b.Tb \circ \mathtt{decom}_{i})\, m $ of type $\mathbf{S}\multimap \mathbf{PTM}^k_i \multimap \mathbf{PTM}^k_i$, where: 
\begin{equation*}
\begin{aligned}
T  \triangleq &\   \lambda b. \lambda  m. \lambda \oc c. \mathtt{let}\ m\ (\oc^i \mathtt{d}^i(c)) \mathtt{\ be\  }l \otimes  r\otimes  c_l\otimes  b_l\otimes  c_r\otimes  b_r\otimes q  \mathtt{ \ in\  } \\ 
&\ \mathtt{let}\ \mathtt{E}_{\mathbf{B}} \, b_r \mathtt{\ be \ } \mathbf I \mathtt{ \ in \ } Rbq \, ( l\otimes  r\otimes  c_l\otimes  b_l\otimes  c_r )\\
R \triangleq& \  \lambda b'. \lambda q'. \lambda p. \mathtt{let}\ p \mathtt{\ be \  } l\otimes  r\otimes   c_l\otimes  b_l\otimes  c_r\mathtt{ \ in  \ }\,  (c_r \, b' \circ c_l \, b_l \circ l)\otimes  r\otimes  q'   
\end{aligned}
\end{equation*}
where $\mathtt{E}_{\mathbf{B}}$ is as in~\eqref{eqn: eraser of B}.
\end{definition}

Last, we need to extract the output string from the final configuration. 

\begin{definition}[Indexed extraction] \label{defn: extrfun} For all $i, k \geq 1$,  we define the \emph{indexed extraction} $\mathtt{ext}_i^{\mathbf{S}}$ of type $\mathbf{PTM}^k_i \multimap \mathbf{S}_i$ as the following term:
\begin{equation*}
\begin{aligned}
\mathtt{ext}_i^{\mathbf{S}}\triangleq & \ \lambda m. \lambda \oc c. \mathtt{let}\ m\, \oc^i(\mathtt{d}^i(c)) \mathtt{\ be \ }l\otimes  r\otimes  q \mathtt{ \ in \ }(\mathtt{let \ }\mathtt{E}_{\mathbf{B}^k}\, q \mathtt{\ be \ }\mathbf{I} \mathtt{ \ in \ }l \circ r) .
\end{aligned}
\end{equation*}
where $\mathtt{E}_{\mathbf{B}^k}$ has type  $\mathbf{B}^k \multimap \mathbf{1}$, and can be constructed from~\eqref{eqn: eraser of B}.
\end{definition}

By putting everything together, we are now able to encode  a  \pPTM in \STAP:

\begin{proof}[Proof of {\normalfont\textbf{Theorem}~\ref{thm: completeness complexity PTM}}]
Let $\mathcal{P}$ be a \PTM  running in polynomial time $p: \mathbb{N}\longrightarrow \mathbb{N}$ and in polynomial space $q: \mathbb{N}\longrightarrow\mathbb{N}$, with $\deg(p)=d_1$ and $\deg(q)=d_2$. We set  $[p]= 2d_1+1$ and $[q]=2d_2+1$. By \textbf{Theorem}~\ref{thm: polynomial completeness} and \textbf{Lemma}~\ref{lem: weighted substitution}  we have that the following judgements are derivable in \STAP:
\begin{equation}\label{eqn: polynomioal in the length}
\begin{split}
&s_p: \oc^{d_1} \mathbf{S} \vdash P: \mathbf{N}_{[p]}\\
&s_q: \oc^{d_2} \mathbf{S} \vdash Q: \mathbf{N}_{[q]}  
\end{split}
\end{equation}
where $P\triangleq \underline{p}\, \lbrace \oc^{d_1} (\mathtt{len}_1\, \mathtt{d}^{d_1}(s_p))/x\rbrace$, $Q\triangleq \underline{q}\, \{\oc^{d_2}(\mathtt{len}_1\, \mathtt{d}^{d_2}(s_q))/x\}$, and  $\mathtt{len}_1$ is defined in~\eqref{eqn: len}. Again, by repeatedly applying  \textbf{Lemma}~\ref{lem: weighted substitution} we can compose the terms in \textbf{Definitions}~\ref{defn: tr},~\ref{defn: init},~\ref{defn: in}, and~\ref{defn: extrfun} to obtain a derivation in \STAP of the following judgement:
\begin{equation}\label{eqn: intermediate encoding PTM}
s': \mathbf{S}, p: \mathbf{N}_{[p]}, q: \mathbf{N}_{[q]} \vdash \mathtt{ext}_{[q]}^{\mathbf{S}}(p\,    ( \oc^{[p]} \mathtt{tr}_{[q]}) (\mathtt{in}_{[q]} \, s'\, (\mathtt{init}_{[q]} \, q) )): \mathbf{S}_{[q]} .
\end{equation}
 By two further applications  of   \textbf{Lemma}~\ref{lem: weighted substitution},  we can compose~\eqref{eqn: polynomioal in the length} and~~\eqref{eqn: intermediate encoding PTM} to obtain the following: 
\begin{equation*}
s': \mathbf{S},s_p: \oc^{d_1} \mathbf{S},s_q:\oc^{d_2} \mathbf{S}   \vdash \mathtt{ext}_{[q]}^{\mathbf{S}}(P\,    ( \oc^{[p]} \mathtt{tr}_{[q]}) (\mathtt{in}_{[q]} \, s'\, (\mathtt{init}_{[q]} \,Q) )): \mathbf{S}_{2d_2+1}.
\end{equation*}
By repeatedly applying rule $m$, and by applying rule $\multimap$I$l$, we obtain the  term:
\[  \vdash_{\normalfont{\STAP}} \underline{\mathcal{P}}: \oc^{\max( d_1,  d_2, 1)+1}\mathbf{S}\multimap \mathbf{S}_{2d_2+1}. \]
 One can  check that both point~\textsf{\textbf{\ref{enum: completeness 1}}} and point~\textsf{\textbf{\ref{enum: completeness 2}}}  hold.
\end{proof}


\end{document}